\documentclass[12pt]{article}
\pdfoutput=1

\usepackage{formatting}

\usepackage{shortcuts}

\begin{document} 
\begin{titlepage}
\begin{flushright}BRX-TH-6718\end{flushright}
\begin{center}
\phantom{ }

{\bf \large{Geometric Surprises in the Python's Lunch Conjecture}}
\vskip 0.5cm
{Gurbir Arora${}^1$, Matthew Headrick${}^1$, Albion Lawrence${}^{1,2}$, Martin Sasieta${}^1$,  Connor Wolfe${}^1$}
\vskip 0.05in
\small{  \textit{${}^1$ Martin Fisher School of Physics, Brandeis University}}
\vskip -.4cm
\small{\textit{Waltham, Massachusetts 02453, USA}}
\vskip -.4cm
\small{  \textit{${}^2$ California Institute of Technology, Pasadena, California 91125, USA}}

\vskip -.10cm

\begin{abstract}
A \emph{bulge surface}, on a time reflection-symmetric Cauchy slice of a holographic spacetime, is a non-minimal extremal surface that occurs between two locally minimal surfaces homologous to a given boundary region. According to the python's lunch conjecture of Brown et al., the bulge's area controls the complexity of bulk reconstruction, in the sense of the amount of post-selection that needs to be overcome for the reconstruction of the entanglement wedge beyond the outermost extremal surface. We study the geometry of bulges in a variety of classical spacetimes, and discover a number of surprising features that distinguish them from more familiar extremal surfaces such as Ryu-Takayanagi surfaces: they spontaneously break spatial isometries, both continuous and discrete; they are sensitive to the choice of boundary infrared regulator; they can self-intersect; and they probe entanglement shadows, orbifold singularities, and compact spaces such as the sphere in AdS$_p\times S^q$. These features imply, according to the python's lunch conjecture, novel qualitative differences between complexity and entanglement in the holographic context. We also find, surprisingly, that extended black brane interiors have a non-extensive complexity; similarly, for multi-boundary wormhole states, the complexity pleateaus after a certain number of boundaries have been included. 
\end{abstract}
\end{center}

\small{\,\\[.01cm]
\href{mailto:gurbir@brandeis.edu}{gurbir@brandeis.edu} \\
\href{mailto:mph@brandeis.edu}{mph@brandeis.edu} \\
\href{mailto:albion@brandeis.edu}{albion@brandeis.edu}\\
\href{mailto:martinsasieta@brandeis.edu}{martinsasieta@brandeis.edu}\\
\href{mailto:cwolfe@brandeis.edu}{cwolfe@brandeis.edu} 
}

\end{titlepage}

\setcounter{tocdepth}{3}

\hrule 
{\parskip = .2\baselineskip \tableofcontents}

 \vskip .5cm
 \hrule

\section{Introduction}

A central theme of the past decade and a half of work in holography has been that the emergence of the bulk space is reflected in quantum information-theoretic properties of the boundary system. The foundational result is the Ryu-Takayanagi (RT) formula for the spatial entanglement of the holographic state, which in its quantum-corrected form is \cite{Ryu:2006bv,Faulkner:2013ana}:
\be\label{eq:qcRT}
 S(\rho_{\R}) \approx  \dfrac{\text{Area}(X_{\R})}{4G} + S(\rho_{\rew})\;.
\ee 
Here the holographic state is assumed to admit a semiclassical bulk description and to be time reflection-symmetric. The semiclassical state lives at the time reflection-symmetric bulk Cauchy slice $\Sigma$, $\R$ is a subregion of the conformal boundary $\partial \Sigma$ of this slice, $\rho_{\R}$ is the reduced density matrix of the boundary quantum field theory to this subregion, and $S(\rho_{\R})$ is the corresponding von Neumann entropy. On the right-hand side of this equation, $X_{\R}$ is the minimal area surface on $\Sigma$ which is homologous to $\R$, $\rho_{\rew}$ is the reduced density matrix of the bulk quantum fields on the region bounded by $\R$ and $X_{\R}$, and $S(\rho_{\rew})$ is its von Neumann entropy. The right-hand side of this equation defines the {\it generalized entropy} of $X_{\R}$ in the corresponding bulk state, denoted by $S_{\text{gen}}(X_{\R})$. This geometric dictionary has provided a great deal of insight into the emergence of bulk space from the spatial entanglement structure of the boundary quantum field theory \cite{Ryu:2006bv,Swingle:2009bg,VanRaamsdonk:2010pw,Czech:2012bh}.

In this correspondence, the boundary state $\rho_{\R}$ is conjectured to contain all of the quantum information about the state $\rho_{\rew}$; and the bulk physics in the {\it entanglement wedge} of $\R$, the bulk domain of dependence ${\cal D}_{\rew}$, is conjectured to be holographically represented in the physics of the boundary domain of dependence ${\cal D}_{\R}$ \cite{Czech:2012bh,Wall:2012uf, Headrick:2014cta}. This statement is known as ``entanglement wedge reconstruction'' and has been given a more precise formulation within the past several years, in the language of quantum information theory. In this formulation, one first considers the bulk-to-boundary map $V: \mathcal{H}_{\Sigma}\rightarrow \mathcal{H}_{\text{CFT}}$, where $\mathcal{H}_{\Sigma}$ is the Hilbert space of bulk effective field theory excitations on $\Sigma$, commonly dubbed the {\it code subspace}, and $\mathcal{H}_{\text{CFT}}$ is the CFT Hilbert space defined on $\partial \Sigma$. This map consequently defines the encoding channel onto the subregion $\R$ of the conformal boundary:\footnote{For ease of exposition, we are assuming that all of the Hilbert spaces factorize. Non-factorizability can be treated in the language of von Neumann algebras.}
\be
\mathcal{N}(\rho_{\rew}) = \text{Tr}_{\bar{\R}} (V\rho_\rew \otimes \sigma_{\bar{\rew}} V^\dagger) = \rho_{\R}\,.
\ee
Here $\sigma_{\bar{a}}$ can be any reference full-rank density matrix on the bulk complement region $\bar{a}$. The statement of entanglement wedge reconstruction is that $\mathcal{N}$ admits a recovery map $\mathcal{R}$, i.e.\ an inverse channel satisfying $\mathcal{R}\circ \mathcal{N}(\rho_{\rew}) \approx \rho_{\rew}$. The existence of such a recovery channel $\mathcal{R}$ has been identified as an inevitable consequence of \eqref{eq:qcRT} holding within states of the code subspace \cite{Jafferis:2015del,Dong:2016eik,Harlow:2016vwg,ohya1993quantum,Cotler:2017erl}.

Moreover, the structure of entanglement wedges in hyperbolic space is very suggestive, given that there are spatial regions $\mathsf{b}$ localized deep in AdS that are contained in the entanglement wedge of the union of disjoint boundary subregions, $\R = \sqcup_i \R_i$, while not being contained on any of the individual entanglement wedges, $\mathsf{b} \cap  \rew_i = \emptyset$. In this sense, the way the bulk quantum information is distributed on the boundary is reminiscent of (operator algebra) quantum error correcting codes, where the quantum information of the logical system is distributed non-locally on the physical system \cite{Swingle:2009bg,Almheiri:2014lwa,Dong:2016eik,Harlow:2016vwg}. Special ``holographic codes'' modelling these features of $V$ can be designed in qubit systems \cite{Pastawski:2015qua,Hayden:2016cfa}, using tensor networks that provide a discretization of the emergent hyperbolic space (see \cite{Bao:2018pvs} for more realistic constructions). Tensor network models can also be used to qualitatively describe the map $V$ for time-reflection symmetric but otherwise general reference states, with associated bulk time reflection-symmetric Cauchy slices $\Sigma$.

More generally, entanglement wedge reconstruction is expected to work in situations that include non-trivial dynamical evolution of the bulk geometry, in which case $X_\R$ is replaced by the HRT surface \cite{Hubeny:2007xt} in \eqref{eq:qcRT}, or, moreover, in situations where the bulk entanglement $S(\rho_{\rew})$ becomes $O(G_N^{-1})$, in which case \eqref{eq:qcRT} is replaced by the full-fledged quantum extremal surface (QES) prescription \cite{Engelhardt:2014gca}. These extensions provide a new perspective on the way in which the information escapes from an evaporating black hole \cite{Penington:2019npb,Almheiri:2019psf}. Namely, the possibility of recovering the information from the Hawking radiation is manifested semiclassically after the Page time in terms of a non-trivial minimal QES  delimiting an ``island'' in the entanglement wedge of the radiation \cite{Penington:2019npb,Almheiri:2019psf,Almheiri:2019hni}.

In these situations, the statement of entanglement wedge reconstruction becomes particularly surprising, since the entanglement wedge ${\cal D}_{\rew}$ generally contains regions that are causally inaccessible from $\R$. That is, the {\it causal wedge} of $\R$, the set of bulk points that are both in the future and in the past of ${\cal D}_{\R}$, is strictly contained in the entanglement wedge. The causal wedge is accessible through correlators of appropriately smeared local operators on the boundary through the bulk-to-boundary operator map \cite{Banks:1998dd, Balasubramanian:1998de,Balasubramanian:1999ri,Hamilton:2006az}. In fact, larger regions have been identified to be accessible in a simple way, by being able to manipulate simple sources in the boundary Hamiltonian \cite{Engelhardt:2021mue}. These regions are delimited by the apparent horizon, or more precisely, by the outermost QES. More complicated operators are expected to be needed to access the region of the entanglement wedge that lies beyond this region. Characterizing these operators is a major open problem in holography and lies at the core of the black hole information problem.

\subsection{Python's lunch conjectures}

\begin{figure}[h]
 		\centering
 		\includegraphics[width = .5\textwidth]{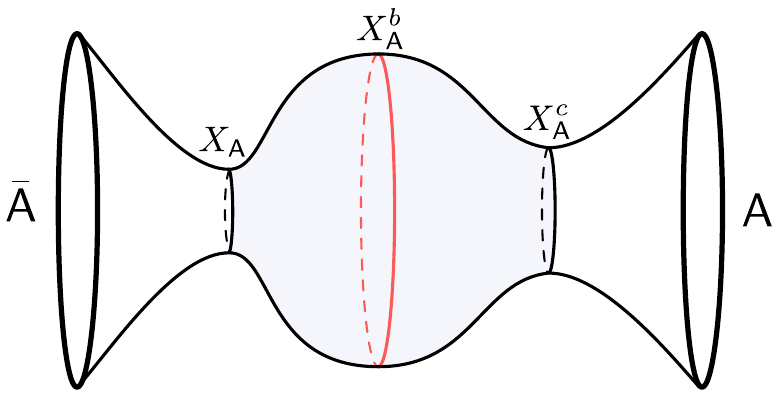}
 		\caption{Structure of slice $\Sigma$ containing a python: the globally minimal QES $X_\R$, the bulge $X_\R^b$ and the constriction $X^c_\R$. The python, shaded in blue, is delimited by $X_\R$ and $X_\R^c$. The entanglement wedge of $\R$ contains the python. { Throughout this paper, bulge surfaces will be consistently drawn in red.} }
 		\label{fig:PL}
 \end{figure}

A step toward understanding bulk reconstruction beyond the outermost QES was given in \cite{Brown:2019rox}, based on the analogy with tensor network toy models of the bulk-to-boundary map $V$. In the simplest case, illustrated in Fig.\ \ref{fig:PL}, the bulk Cauchy slice $\Sigma$ contains two locally minimal QESs: the outermost QES, $X^c_\R$, called the {\it constriction}, and the globally minimal QES $X_\R$, which delimits the entanglement wedge $\rew$. The region between them is called a \emph{python's lunch}, or just \emph{python} for short. In the python, there exists a third QES homologous to $\R$ that is not locally minimal, $X_\R^b$, called the \textit{bulge}. In such a situation, the python's lunch conjecture (PLC) \cite{Brown:2019rox} { (see also \cite{Engelhardt:2021mue,Engelhardt:2021qjs})} assigns a unitary complexity to $\mathcal{R}$ given at leading order by 
\be\label{eq:PLC}
\Co(\mathcal{R}) \sim \exp\left(\dfrac{S_{\text{gen}}(X_\R^b)-S_{\text{gen}}(X^c_\R)}{2}\right),
\ee
where we are omitting subexponential volume factors that become unimportant in most situations.

The motivation behind this proposal comes from tensor-network toy models of $V$, where the geometry of $\Sigma$ gets discretized in the form of a graph with local tensors at the vertices. The task of reconstructing the entanglement wedge is to undo the part of the network representing $\rew$, by acting with local unitaries on $\R$. Consequently, the complexity to perform this operation gets ``geometrized'' by the structure of the network. Namely, the reconstruction requires one to undo the network locally, and to do that one needs to go from the locally minimal cut representing $X^c_\R$ to the locally maximal cut representing $X_\R^b$. Each of these cuts defines an auxiliary Hilbert space, and the part of the network between the cuts defines an isometric map between these Hilbert spaces. By construction of the network, the log bond dimension of these cuts is $S_{\text{gen}}(X^c_\R)$ and $S_{\text{gen}}(X_\R^b)$ respectively. Since $S_{\text{gen}}(X_\R^b)>S_{\text{gen}}(X^c_\R)$, the inverse of this map post-selects (i.e.\ orthogonally projects out) a Hilbert subspace of log dimension $S_{\text{gen}}(X^b_\R) - S_{\text{gen}}(X^c_\R)$ of the $X_\R^b$ cut. In quantum information, under genericity assumptions for the local gates, the optimal way to overcome this post-selection unitarily is to introduce ancilla qubits and perform a brute force Grover search, at the cost of an exponentially large number of few-body unitary operations, parametrically given by $\exp\frac12(S_{\text{gen}}(X^b_\R) - S_{\text{gen}}(X^c_\R))$. This motivates the particular form of the exponent in \eqref{eq:PLC}, which is given in terms of generalized entropies and can be directly translated to the semiclassical description of $\Sigma$ in AdS/CFT.

The PLC \eqref{eq:PLC} is able to describe why, in some specific situations, the global bulk-to-boundary map $V$ remains ``simple'', with unitary complexity scaling polynomially with some extensive parameter of the boundary, like its thermodynamic entropy $S$, while, at the same time, the complexity to reconstruct the entanglement wedge $\rew$ for any proper subsystem $\R$ of the boundary is exponentially large in $S$. This dichotomy is what originally motivated the conjecture in \cite{Brown:2019rox}. Examples of these situations arise when $V$ is constructed dynamically from the unitary time-evolution operator of the system, $V= \exp(-iHt)$, driven by a chaotic few-body Hamiltonian $H$ that couples $\R$ and $\bar{\R}$. Any initial information scrambles rapidly throughout the system, which generically opens the possibility of recovering it from any subsystem $\R$ containing more than half of the entropy of the full system.  On the one hand, with access to the full system, undoing the time-evolution $V$ is by assumption ``simple'' for subexponential timescales. On the other hand, the recovery from $\R$ typically requires exponentially many unitary operations, with $\Co(\mathcal{R})  \sim e^{ (S- S_\R)/2}$, from the fact that there is a python for $\R$.  For specific models of black hole evaporation, these features of the bulk-to-boundary map were studied in \cite{Hayden:2007cs,Harlow:2013tf}.

\begin{figure}[h]
 		\centering
 		\includegraphics[width = .7\textwidth]{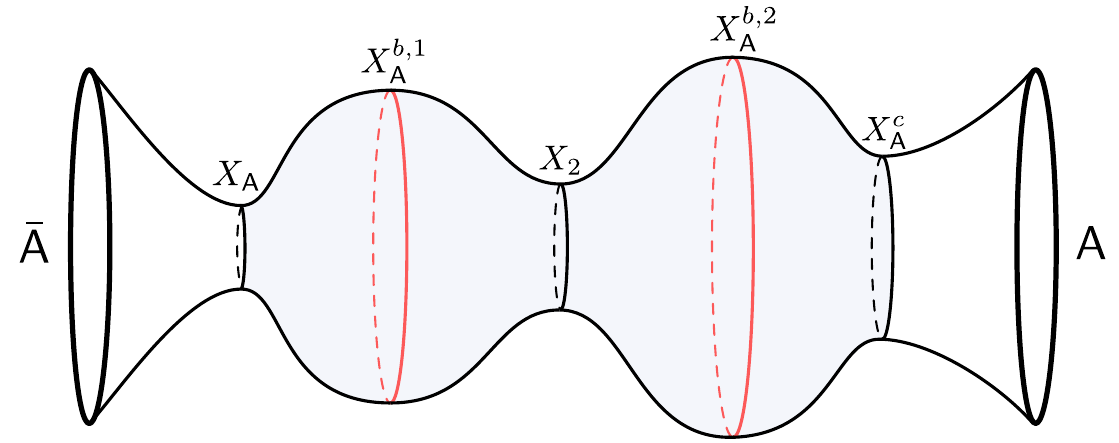}
 		\caption{Schematic structure of a python with multiple lunches:   between the globally minimal QES $X_\R$ and the constriction $X^c_\R$, there are  locally minimal QESs such as $X_2$.}
 		\label{fig:PLmulti}
 \end{figure}
 
A generalization of the PLC to situations with multiple ``lunches'' on the same Cauchy slice was given in \cite{Engelhardt:2021qjs}. We represent the situation with two lunches in Fig.\ \ref{fig:PLmulti}. In general, one considers $n$ lunches, defined by a set of non-intersecting locally minimal homologous QESs $\mathcal{S} = \lbrace X_1,X_2,...,X_n\rbrace $ between the globally minimal QES $X_1 \equiv X_\R$ and the constriction $X_n \equiv X_\R^c$. In between each pair of adjacent minimal QESs $X_i$ and $X_{i+1}$, there will be a bulge, $X^{b,i}_\R$, for $i=1,...,n-1$. According to the generalized PLC, the complexity to reconstruct the lunch from $\R$ is 
\be\label{eq:PLC2}
\Co(\mathcal{R}) \sim \max\limits_{i<j} \left\lbrace  \exp\left( \dfrac{S_{\text{gen}}(X_j^b)-S_{\text{gen}}(X_i)}{2} \right) \right\rbrace \;.
\ee 
where $i<j$ means that the maximization is restricted to minimal QESs, labelled by $i$, that lie between the bulge, labelled by $j$, and $\R$. Intuitively, this expression simply represents that the total amount of post-selection is largely dominated by the maximum generalized entropy difference between any bulge and any minimal cut that lies closer to $\R$.

\subsection{This paper}

The precise meaning of the complexity in Eqs. \eqref{eq:PLC} and \eqref{eq:PLC2} is to date incomplete. The goal of this paper is to provide additional data for sharpening these conjecture by more fully exploring the properties of the bulge surface $X^b_\R$ for the simplest species of python: geometric pythons that arise already at the level of the classical bulk geometry dual to some reference holographic state. For these pythons, all the different surfaces $X_\R, X^b_\R$, $X^c_\R$ will be extremal area surfaces, and their generalized entropies will all be given by the area term in \eqref{eq:qcRT}, to leading order in the semiclassical expansion. 

Furthermore, in this paper, we will work on a fixed partial Cauchy slice $\Sigma$ of the bulk spacetime and consider extremal surfaces for variations restricted to $\Sigma$. For time reflection-symmetric states of the holographic system, we expect that, generally, the natural choice of partial Cauchy slice consists of the bulk moment of time symmetry $\Sigma$. For instance, partial maximin surfaces, like the constriction or the RT surface, must lie on $\Sigma$. Therefore, our definition holds for the states for which this is true. However, it is important to note that, even restricted to time-reflection symmetric states, this is not always true. Explicit states for which time reflection symmetry is spontaneously broken by the bulge and other minimal surfaces have been constructed recently, in near-extremal black hole interiors in \cite{Engelhardt:2023bpv}.\footnote{{ Other states with time-reflection symmetric pythons in JT gravity and massless matter were constructed in \cite{Bak:2021qbo}.}} We do not believe that this phenomenon occurs in any of the spacetimes we study in this paper, although we have not proven that this is the case.

In this context, we will point out several properties that $X^b_\R$ fails to satisfy that are usually taken for granted for the locally minimal surfaces $X^c_\R$ and $X_\R$. Assuming these properties for a general extremal surface leads to the incorrect identification of the bulge in many situations. We will show that this has important physical consequences for the outcome of the python's lunch conjectures \eqref{eq:PLC}, \eqref{eq:PLC2}. Throughout the paper, we will identify the true bulge surface in different situations, study its topological and geometric properties, and comment on the outcome of the conjecture given these properties.

The paper is organized as follows. We begin in section \ref{sec:minmax}\ by studying the implications of the minimax definition of the bulge surface given in \cite{Brown:2019rox}, and relating it to a branch of geometric measure theory called Almgren-Pitts min-max theory, developed by mathematicians for the purpose of proving the existence and properties of extremal surfaces. We also review the motivation of the bulge as a minimax surface from tensor network heuristic models. With these results in hand we proceed to explore a number of different examples of holographic states with classical pythons:
\begin{itemize}
\item In section \ref{sec:breakisom}, we demonstrate that, unlike minimal surfaces, bulges can break continuous and discrete spatial isometries of the slice $\Sigma$ and region $\R$. This effect has an immediate consequence for generic states of black branes with semiclassical interiors, namely, according to \eqref{eq:PLC}, the interiors are simple to reconstruct; specifically, the log-complexity is not extensive in the boundary volume. Furthermore, in the planar limit, the bulge, and hence according to the conjecture the complexity, are highly sensitive to the choice of infrared regulator.

\item In section \ref{sec:vacuumpython}, we explore examples of bulges that arise in vacuum anti-de Sitter (AdS) space. We start by exploring some simple examples arising when $\R$ is comprised of disconnected boundary subregions in AdS$_3$ and AdS$_4${, including, for AdS$_3$, where $\R$ covers the entire boundary except a discrete set of points. This includes an example where a python is present, yet the exponent in the complexity vanishes.} We then show that, in the presence of compact extra dimensions, even when the metric of $\Sigma$ is of product form, $X_\R^b$ is generically {\it not} of product form. The generalized entropy of the bulge $S_{\text{gen}}(X_\R^b)$ thus contains dynamical information about the holographic system that goes beyond the spatial correlations in the ground state of the holographic CFT. We find that this effect resolves the singular bulges previously found in AdS$_3$.

\item In section \ref{sec:excitedstates}, we describe bulges on excited states of the holographic CFT{. We include examples in which the dual geometry has no horizon, namely AdS$_3$ orbifolds and Lin-Lunin-Maldacena (LLM) geometries. Additionally, we describe bulges that form outside of the horizon of an eternal black hole.}

\item In section \ref{sec:bhint}, we explore microstates of multiple black holes with pythons occupying their shared semiclassical interiors. For two-sided states, we show that the complexity to reconstruct the interior with access to $\R$ is the same, whether $\R$ contains both boundaries or just the  boundary that can access the interior. We show that this is a general feature of multi-boundary wormhole microstates, namely, the complexity to reconstruct the interior plateaus after a number of boundaries has been included in $\R$. This effect is essentially a discrete analogue of the non-extensivity of the log-complexity for black branes mentioned above. We also provide a slight generalization of the second python's lunch conjecture \eqref{eq:PLC2} in section \ref{sec:bhint}, where we point out that different choices might exist for the set $\mathcal{S}$ of non-intersecting minimal surfaces defining the lunch. Different choices of $\mathcal{S}$ cannot all be included on the same foliation, and therefore we must minimize the complexity \eqref{eq:PLC2} over these choices.
\end{itemize}
We regard these examples as providing data against which to check the PLC, in the hope that the complexity can be directly evaluated, or its properties studied, in the corresponding situations. We close with a summary and discussion in section \ref{sec:conclusions}. Some technical details concerning extremal surfaces in $\mathbf{R}^3$ are presented in the appendices.

\section{Mathematical background}
\label{sec:minmax}

In \cite{Brown:2019rox}, a \emph{bulge surface} in a holographic spacetime was defined mathematically via a certain maximinimax formula. The ``minimax'' part of the formula referred to operations on a Cauchy slice, and the ``maxi'' to a maximization over Cauchy slices, rendering the formula covariant. In this paper, we are focusing on surfaces lying on a constant-time slice of a static spacetime, or more generally lying on the $t=0$ slice of a time-reflection symmetric spacetime. Since we are fixing a Cauchy slice, we will focus on the ``minimax'' part of the formula, and put aside the ``maxi'' part.

In this section, we will review the minimax formula of \cite{Brown:2019rox}, and argue that bulge surfaces obeys certain properties that we will make use of in the rest of the paper. Our discussion will be far from mathematically rigorous. However, we will also point out that the minimax formula fits naturally within an existing body of mathematical work called ``Almgren-Pitts min-max theory''. This theory, a branch of geometric measure theory, has been developed since the 1960s as a set of techniques for proving the existence and properties of extremal submanifolds in general Riemannian manifolds. We will give a very brief sketch of some of the ideas in this theory, not because we will make use of them in this paper, but for completeness and to reassure the reader that this work can be put on a rigorous foundation if desired. (See \cite{colding2003min,marques2013applications} and references therein for additional details.)

\subsection{Extremal surfaces \& Morse index}

We begin by recalling some basic facts about extremal hypersurfaces in Riemannian manifolds. Let $N$ be a compact Riemannian manifold, possibly with boundary. We will denote coordinates on $N$ by $x^\mu$ and its metric by $g_{\mu\nu}$. By a \emph{surface} $X$ we mean a compact orientable hypersurface in $N$ such that $\partial X\subset\partial N$ and $\intt X\subset\intt N$ (where $\intt$ denotes the interior). We will denote coordinates on $X$ by $y^a$, the induced metric by $h_{ab}$, a continuous unit normal vector field by $n^\mu$, the extrinsic curvature (defined with respect to $n^\mu$) by $K_{ab}$, and its trace by $K$.

We now want to study variations in the area of $X$ under small deformations. Let $\eta$ be a smooth function on $X$ that vanishes on $\partial X$.\footnote{\label{foot:neumann}In subsection \ref{sec:planar}, we will also consider imposing a Neumann boundary condition on $\eta$, where the surface meets an end-of-the-world brane. The formulas in this subsection remain correct, including the self-adjointness of the Jacobi operator, with this boundary condition.} The first variation of the area under deforming $X$ by applying the exponential map to each point of $X$ by the vector $\epsilon\eta n^\mu$, is
\be
\delta\area(X)=\int_X{\text{d}}y\sqrt{h}\,K\eta\,.
\ee
The variation is therefore zero for any function $\eta$ if and only if $K=0$ everywhere. Again following physicists' conventions, we will call such a surface  \emph{extremal}. (Mathematicians use the term \emph{minimal}.)

Assume $X$ is extremal. The second variation of its area is:
\be\label{eq:secondvariation}
\delta^2\area(X)
=\frac{1}2\int_X{\text{d}}y\sqrt{h}\left[h^{ab}\partial_a\eta\partial_b\eta-(R_{\mu\nu}n^\mu n^\nu+K_{ab}K^{ab})\eta^2\right] =\frac{1}2\int_X{\text{d}}y\sqrt{h}\,\eta J\eta\;,
\ee
where $J$ is the following Schr\"odinger-type operator on $X$, called the \emph{Jacobi operator}:
\be\label{eq:Jacobi}
J:= -\nabla^2-R_{\mu\nu}n^\mu n^\nu-K_{ab}K^{ab}\;,
\ee
with $\nabla^2$ the Laplacian with respect to $h_{ab}$. From its definition, the Jacobi operator's spectrum is discrete (since $X$ is compact), bounded below, and unbounded above. The number of negative eigenvalues is called the (Morse) index of $X$, for the following reason. If we consider the space of all surfaces in $N$ with boundary equal to $\partial X$, the area functional defines a Morse (or Morse-Bott) function\footnote{A \emph{Morse} function on a manifold is a $C^2$ real function whose Hessian, at each critical point, is non-degenerate. The Morse index of a critical point $x$ is the largest dimension of a subspace of the tangent space $T_x$ on which the Hessian is negative definite. A useful generalization is a \emph{Morse-Bott} function, a function whose critical points form submanifolds on which the normal Hessian is nondegenerate. Writing the Hessian as $\langle\cdot,J\cdot\rangle$, where $\langle\cdot,\cdot\rangle$ is a positive-definite inner product on $T_x$ and $J$ is a symmetric operator, the Morse index equals the number of negative eigenvalues of $J$.}, of which $X$ is a critical point; by \eqref{eq:secondvariation}, the number of negative eigenvalues of $J$ is then equal to the Morse index. (More precisely, for a generic metric on $N$, the area functional is a Morse function; and for a metric that is generic subject to some isometry group, the area functional is a Morse-Bott function.)

The following facts about the index will be relevant to us in the rest of the paper. First, if the surface is a local minimum of the area, then the index vanishes; for a generic metric (or generic subject to some isometries), the converse holds. Second, suppose that $X$ is a disjoint union, $X=X_1\sqcup X_2\sqcup\cdots$. Then a basis of eigenfunctions of $J$ can be chosen such that each eigenfunction vanishes on all but one $X_i$; therefore the index of $X$ is simply the sum of the indices of the $X_i$. Third, suppose that $X$ is connected and has index 1. It is a standard fact from quantum mechanics that the ground state wave function of a Schr\"odinger operator has no nodes; therefore the eigenfunction $\eta$ has constant sign, in other words the unstable mode moves all of $X$ in the same direction.

\subsection{Min-max theory}

Now suppose that a subset $\tilde N$ of $N$ is bounded by two locally minimal surfaces $X_0,X_1$; keeping track of orientations, we have
\be
\partial\tilde N = X_0-X_1\,.
\ee
Necessarily then $X_0$ and $X_1$ are homologous, and share the same boundary: $\partial X_0=\partial X_1$. (More precisely, they are homologous relative to that boundary.\footnote{Here we work in homology relative to the codimension-2 boundary submanifold $\partial X_0$. In subsection \ref{sec:planar}, we will consider a slightly more general situation, in which we work in homology relative to a codimension-1 part of the boundary of $\tilde N$, representing an end-of-the-world brane, and we require surfaces to end orthogonally on this boundary, leading to the Neumann boundary condition mentioned in footnote \ref{foot:neumann}. As far as we know, this extra boundary does not affect the considerations of this section.}) In the holographic setting, $\tilde N$ is a ``python''. We will denote by $\tilde{H}$ the homology class of $X_{0,1}$. (One or both of $X_{0,1}$ may be empty, in which case the elements of $\tilde{H}$ are null-homologous.) By Almgren-Pitts min-max theory, there exists a third extremal surface with index 1 in $\tilde{H}$.

The argument rests on the \emph{mountain-pass lemma}, which is a general statement concerning Morse functions that guarantees the existence of an index-1 critical point, given two local minima $x_{0,1}$ connected by a path $\bar x(t)$.\footnote{Either the index-0 critical points must be distinct (or, in the Morse-Bott setting, lie on distinct connected components of the critical submanifold), or the path $\bar x(t)$ must be non-contractible.} This is proved by considering the following minimax problem:
\be
\min_{x(t)}\max_{t}f(x(t))\;,
\ee
where $f$ is a Morse function, the minimum is over paths $x(t)$ homotopic to $\bar x(t)$, and the maximum is over points on $x(t)$. The solution to this problem is an index-1 critical point.

In the case at hand, we are working in the homology class $\tilde{H}$.\footnote{More accurately, in geometric measure theory one works with a generalization of the notion of submanifold called a \emph{varifold}. The space of varifolds admits a natural topology that allows splitting, joining, and other degenerations as continuous processes. The burden is then to show that the varifold returned by the mountain-pass lemma is, under certain conditions actually a submanifold. In fact, as found in \cite{Brown:2019rox}, there are generic situations where the minimax surface is \emph{not} a submanifold; we will return to this example in subsection \ref{sec:vacuumads3}.} The initial path $\bar X(t)$ is given by the level sets of a Morse function $\psi$ on $\tilde N$ that equals 0 on $X_0$ and 1 on $X_1$; we will call such a path a \emph{level-set path}. (This Morse function should not be confused with the Morse function $f$ appearing in the mountain-pass lemma, whose role is played here by the area functional.) A path $X(t)$ homotopic to $\bar X(t)$ is called a \emph{sweep-out}. Any two Morse functions on $\tilde N$ define homotopic paths, so the definition of a sweep-out is independent of the choice of Morse function.

\subsection{Bulge surface: definition \& properties}
\label{sec:bulge}

The construction defining the bulge surface in \cite{Brown:2019rox} is similar to, but not exactly the same as, the above Almgren-Pitts construction. Specifically, the minimization is over level-set paths, rather than sweep-outs. Assuming, as we will, that the minimax exists and is an extremal surface, it must have index 1. We will call this surface $X^b$. For a generic metric, $X^b$ is unique; for a metric that is generic up to isometries, $X^b$ is unique up to the action of the isometries.

Any level-set path is a sweep-out, but the converse does not hold; specifically, whereas the level sets of a function cover each point of $\tilde N$ exactly once, a sweep-out may ``back up'' and cover some part of $\tilde N$ more than once. This leads to the question of whether the two prescriptions are equivalent. We believe they probably are equivalent, but do not have a proof. It is hard to see how the freedom to ``back up'' afforded by the sweep-outs could allow one to achieve a lower maximal area than the level-set paths; for this happen, the sweep-out would somehow have to slip through the bulge surface using only smaller surfaces, which intuitively seems impossible. However, we readily admit that this claim may simply reveal a lack of imagination on our part. If the two prescriptions are not equivalent, then one would have to show that the bulge prescription is actually well-defined and yields an extremal surface.

The rest of the paper is concerned with finding bulge surfaces in various holographic spacetimes. Of course, literally following the minimax definition of the bulge surface involves minimizing over an infinite-dimensional space of Morse functions, which is prohibitive, so instead one simply looks for index-1 surfaces. If there are multiple index-1 surfaces, however, how does one determine which one is the bulge? We will now prove a couple of lemmas that will help with this task.

First, let $X\in\tilde{H}$ be an index-1 surface, and recall that the index is additive under disjoint union. Therefore, if $X$ is disconnected, then each component must be extremal, and exactly one of them must have index 1, with the rest having index 0. The index-0 components may coincide with components of the surfaces $X_{0,1}$ that bound $\tilde N$. The index-1 component must lie in the interior of $\tilde N$.

For simplicity, from here on we will focus on connected surfaces; the disconnected case can be handled using the above decomposition.

\begin{lemma}\label{lem:intersect}
If connected index-1 surfaces $X_{2,3}\in\tilde{H}$ do not intersect, then there exists an index-0 surface $X_4\in\tilde{H}$, not equal to $X_0$ or $X_1$.
\end{lemma}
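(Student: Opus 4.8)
The plan is to use the two non-intersecting index-1 surfaces $X_2$ and $X_3$ to carve $\tilde N$ into pieces, and then argue that the "middle" piece must contain a locally minimal surface in the interior. Since $X_2$ and $X_3$ are connected, do not intersect, and are both homologous to $X_{0,1}$ relative to the shared boundary, they divide $\tilde N$ into three regions: an "outer" region $A$ bounded by $X_0$ and (say) $X_2$, a "middle" region $M$ bounded by $X_2$ and $X_3$, and an "inner" region $B$ bounded by $X_3$ and $X_1$. (If $X_2$ or $X_3$ happens to touch $X_0$ or $X_1$ one has to be slightly careful, but genericity-up-to-isometries and the fact that the index-1 component must lie in the interior of $\tilde N$ rule this out; I would spell out this edge case.) The key point is that $M$ itself is a "python" in the sense of the min-max subsection: it is bounded by two extremal surfaces, $\partial M = X_2 - X_3$, both lying in the homology class $\tilde H$.

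Next I would run the mountain-pass / Almgren--Pitts argument inside $M$. Applying the min-max construction to the sub-region $M$ produces a third extremal surface $X'$ in $\tilde H$ with index $1$ that lies between $X_2$ and $X_3$. Now I want to claim that $X_2$ and $X_3$ themselves, \emph{viewed as boundaries of $M$ rather than of $\tilde N$}, are locally minimal — i.e.\ index $0$ — because each bounds $M$ from one side only. This is the geometric heart of the argument: a surface with an unstable mode that moves "outward" relative to $\tilde N$ may nonetheless be a one-sided local minimum when we only allow deformations \emph{into} $M$; more precisely, I would argue that the relevant variational problem for the boundary of a python only sees deformations on one side, so the index-1 instability of $X_{2,3}$ in $\tilde N$ is irrelevant and they function as the stable endpoints $X_0', X_1'$ of the python $M$. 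Then the output of min-max in $M$ is our desired surface — but wait, that output has index $1$, not index $0$, so this is not yet $X_4$.

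To get the index-$0$ surface, I would instead argue as follows. Use the unstable direction of $X_2$: by the third fact quoted about index-$1$ connected surfaces, the unstable eigenfunction $\eta$ on $X_2$ has constant sign, so $X_2$ can be pushed uniformly to one side, decreasing area. Push it into $M$ (the direction that decreases area — one of the two sides must be the decreasing one, and since $X_0$ on the other side of $A$ is a local \emph{minimum}, area must decrease going inward from $X_0$ toward $X_2$ and then the unstable mode of $X_2$ continues to decrease area into $M$; I'd make this monotonicity-of-the-sweepout argument carefully). Flowing $X_2$ by area-decreasing deformations while staying in $\tilde H$ and inside $M\cup B$, the flow cannot cross $X_3$ (they're disjoint and $X_3$ is also extremal, acting as a barrier), so it must converge to a locally minimal surface $X_4 \in \tilde H$ strictly inside $\tilde N$, hence $X_4 \neq X_0, X_1$. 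That $X_4$ cannot be $X_0$ or $X_1$ follows because the flow stays in the closure of $M\cup B$ and decreases area below $\area(X_2)$, while I'd separately note $X_4$ is genuinely a new surface and not $X_3$ either unless $X_3$ is already such a surface (in which case we're done immediately — in fact this suggests the cleanest route: either $X_3$ is itself index $0$, contradiction with hypothesis, or the area-decreasing flow from $X_2$ limits to something new).

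The main obstacle I anticipate is making rigorous the claim that the area-decreasing flow starting from the unstable mode of $X_2$ stays confined between $X_2$ and $X_3$ and converges to a smooth locally minimal surface rather than degenerating (pinching off, escaping to the boundary, or converging to a varifold that is not a submanifold). In the varifold/geometric-measure-theory framework this is exactly the kind of regularity statement that requires real work, and the excerpt itself flags that min-max outputs need not be submanifolds; so at the level of rigor of this paper I would assume the relevant compactness and barrier properties (as the authors do elsewhere) and present the argument as: the region between two disjoint extremal surfaces, at least one of which is unstable, must contain a locally minimal surface obtained by area-decreasing rearrangement, and this surface is distinct from the python's original bounding surfaces.
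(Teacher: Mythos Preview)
Your geometric setup --- isolating the region $M$ between $X_2$ and $X_3$ --- is exactly right, and from there a correct argument can be extracted. The paper's proof, however, is much shorter: rather than running min-max or a gradient flow in $M$, it simply \emph{minimizes the area} over all surfaces in $\tilde H$ contained in $M$. The minimizer $X_4$ cannot coincide with $X_2$ or $X_3$, because each of those has a negative mode that moves the surface into the interior of $M$ and strictly lowers the area; nor can $X_4$ partially coincide with them, since rounding off the resulting corner would also lower the area. Hence $X_4$ sits in the interior of $M$ and is a genuine index-$0$ surface distinct from $X_0,X_1$. Direct minimization thus bypasses both your abandoned min-max detour and the convergence/regularity issues of the flow that you (correctly) flag as obstacles.

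One correction to your reasoning: your attempt to determine ``which side of $X_2$ the area decreases on'' rests on a misconception. For a connected index-$1$ surface the unstable eigenfunction $\eta$ has constant sign, and deforming by either $+\epsilon\eta$ or $-\epsilon\eta$ decreases the area to second order --- it is a saddle, not a one-sided maximum. So there is no need for the monotonicity argument from $X_0$ (which is in any case backwards: area near a local \emph{minimum} goes up, not down). The negative mode of $X_2$ automatically pushes into $M$, and likewise for $X_3$; this single observation is all the paper needs to rule out $X_4 = X_{2}$ or $X_3$.
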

\begin{proof}
$X_2$ divides $\tilde N$ into two regions, one bounded by $X_0$ and $X_2$, the other bounded by $X_2$ and $X_1$. Since $X_3$ is connected, it lies entirely in one of these regions. Define $\tilde{\tilde N}\subset\tilde N$ as the region lying between $X_2$ and $X_3$. Let $\tilde{\tilde{H}}$ be the set of surfaces in $\tilde{\tilde N}$ homologous to $X_{2,3}$, and define $X_4$ as the least-area surface in $\tilde{\tilde{H}}$. $X_4$ cannot coincide with $X_2$ or $X_3$, since each of those surfaces has a negative mode that moves the surface into $\tilde{\tilde N}$. $X_4$ also cannot partially coincide with $X_2$ or $X_3$, since then its area could be reduced by rounding out the corner. Therefore $X_4$ must lie entirely in the interior of $\tilde{\tilde N}$, and must therefore be an index-0 surface.
\end{proof}

If $\tilde{H}$ does not contain any index-0 surfaces aside from $X_{0,1}$, then we say that $X_{0,1}$ are ``adjacent''.

\begin{lemma}\label{lem:leastarea}
If $X_{0,1}$ are adjacent, and all index-1 surfaces in $\tilde{H}$ are connected, then $X^b$ is the least-area one.
\end{lemma}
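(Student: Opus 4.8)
The plan is to prove the equivalent statement that $\area(X^b)\le\area(X)$ for \emph{every} index-1 surface $X\in\tilde H$; since by hypothesis all of these are connected, and $X^b$ is itself a connected index-1 surface in $\tilde H$, this exhibits $X^b$ as a least-area one. Now $X^b$ realizes the minimax $\min_\gamma\max_t\area(\gamma(t))$ over level-set paths $\gamma$ of $\tilde N$, so $\area(X^b)\le\max_t\area(\gamma(t))$ for any single level-set path $\gamma$; hence it suffices to exhibit one level-set path through $X$ all of whose leaves have area at most $\area(X)$. Fix such an $X$: being index-1 it lies in $\intt\tilde N$, so it separates $\tilde N$ into a region $\tilde N^-$ bounded by $X_0$ and $X$ and a region $\tilde N^+$ bounded by $X$ and $X_1$. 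Because $X$ is connected with index 1, its Jacobi operator has a nodeless ground state $\eta$, so the normal deformation by $|\eta|$ strictly lowers the area --- as does the deformation by $|\eta|$ in the opposite normal direction. Hence $X$ admits an area-decreasing deformation into $\tilde N^-$ and also one into $\tilde N^+$; in particular $X$ is not a local minimum of the area within either region.

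The key step is that the only extremal surfaces of $\tilde H$ of index $\le1$ contained in $\tilde N^-$ are $X_0$ and $X$ (and symmetrically $X$ and $X_1$ for $\tilde N^+$). An index-0 extremal surface $Z\in\tilde H$ lying in $\intt\tilde N^-$ would be distinct from both $X_0$ and $X_1$, since those lie on $\partial\tilde N$, contradicting the adjacency of $X_{0,1}$. An index-1 extremal surface $Y\in\tilde H$ in $\intt\tilde N^-$ is connected by hypothesis and disjoint from $X$ (which lies on $\partial\tilde N^-$), so Lemma~\ref{lem:intersect} applied to the non-intersecting connected index-1 surfaces $Y$ and $X$ yields an index-0 surface in $\tilde H$ equal to neither $X_0$ nor $X_1$ --- again contradicting adjacency.

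It then remains to foliate $\tilde N^-$ from $X_0$ to $X$ through surfaces of area at most $\area(X)$, and likewise $\tilde N^+$ from $X$ to $X_1$; concatenating the two foliations produces a level-set path of $\tilde N$ whose largest leaf is $X$, giving $\area(X^b)\le\area(X)$ and finishing the argument. Concretely, one pushes $X$ slightly in its downhill direction into $\tilde N^-$ and evolves by an area-decreasing (mean-curvature-type) flow: the area is non-increasing, the flow cannot pass the minimal barrier $X_0$, and --- since the only extremal surfaces of $\tilde H$ of index $\le1$ in $\tilde N^-$ are $X_0$ and $X$, while higher-index critical surfaces have positive-codimension stable sets and so do not obstruct a generic downhill trajectory --- it should sweep out the region and limit onto $X_0$. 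I expect this last step to be the main obstacle: turning ``sweep out $\tilde N^-$ through surfaces of area $\le\area(X)$'' into a rigorous statement means controlling the singularities and convergence of this flow, or, equivalently, establishing the existence of a suitable mean-convex foliation --- exactly the sort of soft point the paper flags, where a rigorous treatment would lean on the Almgren--Pitts machinery. By contrast, the purely topological and variational content --- reducing the claim to the absence of intermediate extremal surfaces, and thence to Lemma~\ref{lem:intersect} together with adjacency --- should go through without trouble.
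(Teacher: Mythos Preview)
Your argument is correct in outline and arrives at the same key reduction as the paper --- namely, that every index-1 surface $X\in\tilde H$ is the maximum on some level-set path, whence $\area(X^b)\le\area(X)$ by the definition of $X^b$. The topological/variational part of your argument (using Lemma~\ref{lem:intersect} plus adjacency to rule out other low-index extremal surfaces in $\tilde N^\pm$) is exactly right.

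Where you differ from the paper is in how you establish that $X$ is the maximum on some path. You attempt an explicit \emph{construction}: push $X$ downhill into each of $\tilde N^\pm$ by an area-decreasing flow and argue the flow sweeps out the region. As you yourself note, this is the soft spot --- controlling singularities and convergence of such a flow is delicate. The paper sidesteps this entirely with a slicker, non-constructive argument: it considers the \emph{restricted} minimax over level-set paths that are required to contain $X$. The solution of this restricted problem is again an index-1 surface, and since it lies on a path containing $X$ it either equals $X$ or is disjoint from $X$; the latter is ruled out by Lemma~\ref{lem:intersect} and adjacency. Hence $X$ itself realizes the maximum on the minimizing restricted path, and one is done --- no flow needed. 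Your approach buys an explicit picture of the foliation (useful intuition), at the cost of the analytic headaches you flag; the paper's approach buys a clean two-line proof by reusing the minimax machinery already in play.
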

\begin{proof}
Let $X$ be an index-1 surface in $\tilde{H}$. Consider a restricted minimax problem where we minimize over level-set paths containing $X$, and for each path maximize the area. The solution is some index-1 surface that either equals $X$ or does not intersect $X$. The latter case is ruled out by lemma 1 and the assumption that $X_{0,1}$ are adjacent. Therefore $X$ maximizes the area on some level-set path.

$X^b$ is defined by minimizing, over level-set paths, the maximum on each path. We showed in the previous paragraph that every index-1 surface is a candidate in this minimization. Therefore $X^b$ is the one with the least area.
\end{proof}

We close this section with a explanation of the motivation for the minimax definition of the bulge surface in the setting of the python's lunch conjecture \eqref{eq:PLC}. The heuristic identification of $\tilde N$ with a tensor network requires us to view this partial Cauchy slice as a linear map $V:\mathcal{H}_0 \otimes \mathcal{H}_{\tilde{N}}\rightarrow \mathcal{H}_1$, between the Hilbert spaces associated to the cuts through the tensor network at the respective minimal surfaces $X_0$ and $X_1$, and additionally, the bulk Hilbert space on $\tilde{N}$, where we assume that $\text{Area}(X_0)<\text{Area}(X_1)$ without loss of generality. The above definition guarantees that the generalized entropy difference, $S_{\text{gen}}(X^b) - S_{\text{gen}}(X_1)$, controls the minimal amount of post-selection necessary to undo this map from $X_1$. To see this, one interprets the level sets for a given Morse function $\psi$ as providing a 1-parameter family of cuts of the putative tensor network. Different choices of function $\psi$ represent different cuts of the network, each of which includes a constrained maximum of the generalized entropy --- the area term in our case --- which we shall call $X^\psi$. Assuming that the local tensors in the network are generic enough,\footnote{For the states considered in Ref.\ \cite{Brown:2019rox} this amounts to having to wait until all of the perturbations are scrambled in the system.} each $\psi$ likewise represents a particular way of undoing the map $V$ layer-by-layer, by acting with few-body unitaries on each surface in the level set of $\psi$. The number of unitary operations needed to perform this contraction is parametrically controlled by the amount of post-selection present on the level set of $\psi$, which is given by $S_{\text{gen}}(X^\psi) - S_{\text{gen}}(X_1)$. Thus, the optimal way to undo the tensor network consists in minimizing over the choice of $\psi$, which leads to the motivation to define the bulge $X^b$ as the minimax surface between $X_0$ and $X_1$.

\section{Bulges break spatial isometries}
 \label{sec:breakisom}

 In this section we will explicitly show that the bulge can generally break the spatial isometries in spatially homogeneous states of the holographic system. We will do so for the cases where the bulges have spherical versus planar symmetry, and find that the latter indicates surprisingly low values of the complexity to reconstruct the lunch. We will also show that the bulge geometry, hence the complexity, is surprisingly sensitive to the choice of infrared regulator.

 \subsection{Spherical symmetry}
 \label{sec:spherical}

 	\begin{figure}[h]
 		\centering
 		\includegraphics[width = .4\textwidth]{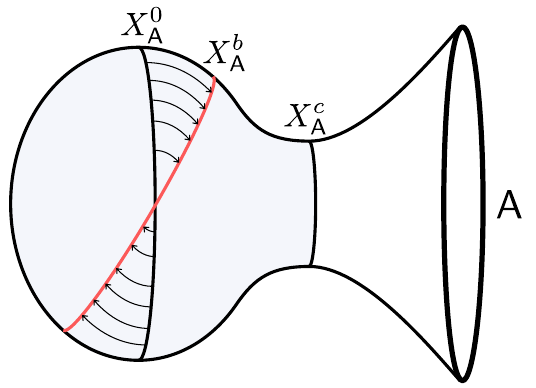}
 		\caption{Bulges, unlike minimal surfaces, can break the isometries of the Cauchy slice $\Sigma$. The naive bulge candidate $X^0_\R$ has additional negative modes, like the one represented by the arrows. There is an infinite family of bulges $X_\R^b$, which have Morse index $1$ and, additionally, zero modes corresponding to the isometries of $\Sigma$ connecting them. }
 		\label{fig:breakisom}
 	\end{figure}
 
 To be concrete, consider a holographic CFT placed on a spatial $S^{d-1}$ ($d>1$). Let $\ket{\Psi}$ be a homogeneous state on the sphere, with semiclassical description given by the initial data
 \be\label{eq:cauchyslicespherical}
 \text{d}s_\Sigma^2 = \text{d}\rho^2 + r^2(\rho) \text{d}\Omega_{d-1}^2\;,
 \ee
 specified at a moment of time reflection-symmetry $\Sigma$. We take the slice to have the topology of a $D$-dimensional ball, which implies $r\to0$ as $\rho\to0$; for smoothness, $r/\rho\to1$. Asymptotically, we demand that $r(\rho) \sim e^{\rho/\ell_{\rm AdS}}$ for $\rho \rightarrow \infty$, so that the spacetime is asymptotically AdS.  The RT surface for the full conformal boundary $\R$ is the empty set, $X_{\R}=\emptyset$. As illustrated in Fig.\ \ref{fig:breakisom}, in order to have a python, we require $r(\rho)$ to possess a local minimum at some $\rho_c>0$; this defines the constriction $X^c_{\R}$. Note that these surfaces both respect the spherical symmetry.
 
 From the considerations in section \ref{sec:minmax}, there must exist a bulge surface $X^b_\R$ between $X_\R$ and $X_\R^c$, in other words in the region $\rho<\rho_c$. Moreover, since $r(\rho)$ has two local minima, at $\rho=0$ and $\rho=\rho_c$, there must also exist a local maximum between them, at $0< \rho_{0} < \rho_c$, with
 \be
r_0:= r(\rho_0)\;,\qquad
 r'(\rho_0) =0\;,\qquad
 r''_0 :=r''(\rho_0)< 0\,.
 \ee
 We shall refer to this sphere as the ``naive bulge candidate'', denoted by $X^{0}_\R$.
 
The naive bulge candidate $X^{0}_\R$ is a totally geodesic surface, given that $r'(\rho_0) =0$, and thus it is extremal. The Jacobi operator \eqref{eq:Jacobi}, which determines its index, is easily computed:
 \be
 J = -\frac1{r_0^2}\bar\nabla^2-\alpha\;,
 \ee
 where $\bar\nabla^2$ is the Laplacian on the unit $(d-1)$-sphere and $\alpha$ is the following positive constant:
 \be
 \alpha = -(d-1)\frac{r''_0}{r_0}\,.
 \ee
 The eigenfunctions of $J$ are thus simply the spherical harmonics, and the eigenvalues are
 \be\label{eq:bulgeevalues}
\lambda_\ell=\frac{\ell(\ell+d-2)}{r_0^2}-\alpha\;,
 \ee
 where $\ell=0,1,\ldots$, with multiplicity 1 for $\ell=0$ and greater than $1$ for $\ell>0$.
 
 The $\ell=0$ eigenmode is always negative, $\lambda_0 =-\alpha< 0$, which simply corresponds to the uniform radial deformation, which decreases the surface's proper radius and therefore area. This means that $X^{0}_\R$ has index at least $1$.
 
 However, $X^{0}_\R$ may have index larger than $1$. The condition for the existence of additional negative modes is $\alpha >(d-1)/r_0^2$, or
 \be\label{eq:nmodephericalbulge}
-r''_0>\frac1{r_0}\,.
 \ee 
 The left-hand side in \eqref{eq:nmodephericalbulge} controls the intrinsic curvature of $X^{0}_\R$, while the right-hand side determines the curvature of  $\Sigma$ along the orthogonal direction.\footnote{
 	The condition \eqref{eq:nmodephericalbulge} is saturated if the metric on $\Sigma$ is that of a round $S^d$ in the neighborhood of $X^0_\R$, i.e.\ the profile function near $\rho_0$ is given by $r(\rho) = r_0 \sqrt{1-(\rho-\rho_0)^2/\rho_0^2}$. In this example $X^{0}_\R$ is the equatorial $S^{d-1}$, which has index $1$. In this borderline case, $X^{0}_\R$ has additional zero modes, the $\ell=1$ spherical harmonics, which represent the rotations of the $S^d$ broken by the equator.}  In particular, if the lunch is very prominent, then the latter will dominate and $X^{0}_\R$ will have index  larger than $1$. Since the true bulge must have index 1, in such a case, it must not be the naive one, $X^{b}_\R \neq X^{0}_{\R}$.\footnote{Note that, even when \eqref{eq:nmodephericalbulge} is false, so that $X^0_\R$ has index 1, it need not be the true bulge. Recall from subsection \ref{sec:bulge} that the bulge is the least-area index-1 surface in the relevant homology class. One can construct geometries in which $X^0_\R$ has index 1 but there is another index-1 surface with lower area, which is therefore the true bulge.}

What then is the true bulge, $X^{b}_{\R}$? It must be a surface that spontaneously breaks the $O(d)$ symmetry of $\Sigma$, as illustrated in Fig.\ \ref{fig:breakisom}. Furthermore, there must exist a family of bulge surfaces related by this symmetry, and the Jacobi operator on the bulge must have a zero-mode associated to the broken symmetry. 
 
 In $d=2$, the metric reduces to  $\text{d}s_\Sigma^2 = \text{d}\rho^2 + r^2(\rho)  \text{d}\varphi^2$, and the angular coordinate $\varphi$ has period $2\pi$. For a parametrization given by the parameter $\sigma$, the embedding function $ r = r(\sigma), \varphi =\varphi(\sigma)$  of a bulge can be found by extremizing the area functional
 \be 
 \text{Area} = \int \text{d}\sigma \sqrt{\dot{\rho}^2 + r^2 \dot{\varphi}^2}\;,
 \ee 
 where the dot represents $\text{d}/\text{d}\sigma$. Taking $\sigma$ to be the proper length of the bulge, the equation of motion for $\rho(\sigma)$ reduces to the equation of motion of a non-relativistic particle moving in one dimension with zero total energy
 \be\label{eq:eomgeodesics}
 \dot{\rho}^2 + V_{\text{eff}}(\rho) = 0\;,
 \ee 
 subject to the effective potential
 \be\label{eq:eomgeodesicsveff}
 V_{\text{eff}}(\rho) = \dfrac{r_m^2}{r^2(\rho)}-1\;.
 \ee 
 The parameter $r_m = r^2 {\dot \varphi}$ is a constant of motion along the trajectory. The particle starts at $r=r_m$, moves towards smaller values of the radius, and bounces back at $r(\rho) = r_m \leq r_0$. The parameter $r_m$ is fixed from the condition that the bulge $X^{b}_{\R}$ must be a closed trajectory, so that $\rho(\varphi)$ must be periodic. To select the minimal among all the possible bulge candidates, we impose that $X^{b}_{\R}$ wraps the circle once, so that $\rho(\varphi) = \rho(\varphi +2\pi)$. This imposes the constraint
 \be\label{eq:elapsedangle}
 \pi = \int^{\rho^R_m}_{\rho^L_m} \dfrac{r_m\text{d}\rho}{r^2\sqrt{-V_{\text{eff}}(\rho)}}\;,
 \ee 
 where $\rho_m^{L,R}$ are the roots of $r(\rho) = r_m$ closest to $\rho_0$ with $\rho_m^L < \rho_0 < \rho_m^R<\rho_c$. This constraint determines the value of $r_m$ as a function of the profile $r(\rho)$ of the geometry.
 
 That a solution to \eqref{eq:elapsedangle} exists, subject to \eqref{eq:nmodephericalbulge}, can be shown as follows. The frequency of small oscillations about $\rho_0$ is
 \be
 \omega = \sqrt{\frac{V_{\rm eff}''}{2}} = \sqrt{-\frac{r''_0}{r_0}}>\frac1{r_0}\,;
 \ee
since $\dot\phi=1/r_0$ in this limit, the distance in $\phi$ traversed over a half-period is less than $\pi$. On the other hand, as $\rho^R_m$ approaches $\rho_c$, the potential near the turning point flattens out, so the period goes to infinity. Between these two extremes, there therefore exists a value of $r_m$ obeying \eqref{eq:elapsedangle}. We can furthermore show that this solution has index 1. We first note that the Jacobi operator has a zero-mode, corresponding to the broken rotational symmetry; this mode has two nodes, at $\rho_m^{L,R}$, where the rotation acts tangent to the surface. Therefore this mode is the ``first excited state'' (in quantum mechanics parlance), so there is exactly one eigenmode with negative eigenvalue (the ``ground state''). By the same logic, a solution that oscillates $n$ times, if it exists, will have a zero-mode with $2n$ nodes and index $2n-1$. This guarantees that the solution we found is the only one with index 1, hence it must be the bulge.

 In higher dimensions, each bulge $X^{b}_{\R}$ will be a deformed sphere that spontanteously breaks the $O(d)$ symmetry of $\Sigma$ into some subgroup $H$. The bulges thus will have a number of zero modes given by $d(d-1)/2 - \text{dim}(H)$, according to Goldstone's theorem. What $H$ is ultimately will depend on the radial profile $r(\rho)$ of the python. However in the scenario with minimal but nonzero symmetry breaking, the negative modes of $X^{0}_{\R}$ will condense to preserve a $H = O(d-1)$ subgroup, with $d-1$ zero-modes. In this case the true bulge $X^b_\R$ is a squashed sphere along a particular axis, where the $d-1$ zero modes arise from the rotations of this axis.

 Similarly, the bulge can break discrete isometries of the holographic state, such as $\mathbf{Z}_2$ reflection symmetry, or permutation symmetry in the case of multiple boundaries. In section \ref{sec:bhint} we will show this latter case explicitly for microstates of three dimensional black holes.

 \begin{figure}[h]
    \centering
 	\includegraphics[width=0.35\textwidth]{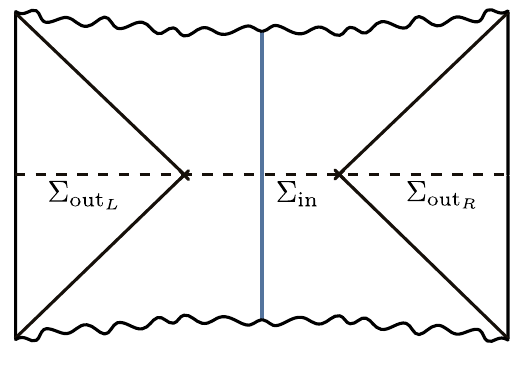}
    \hspace{1.5cm}
    \includegraphics[width=0.4\textwidth]{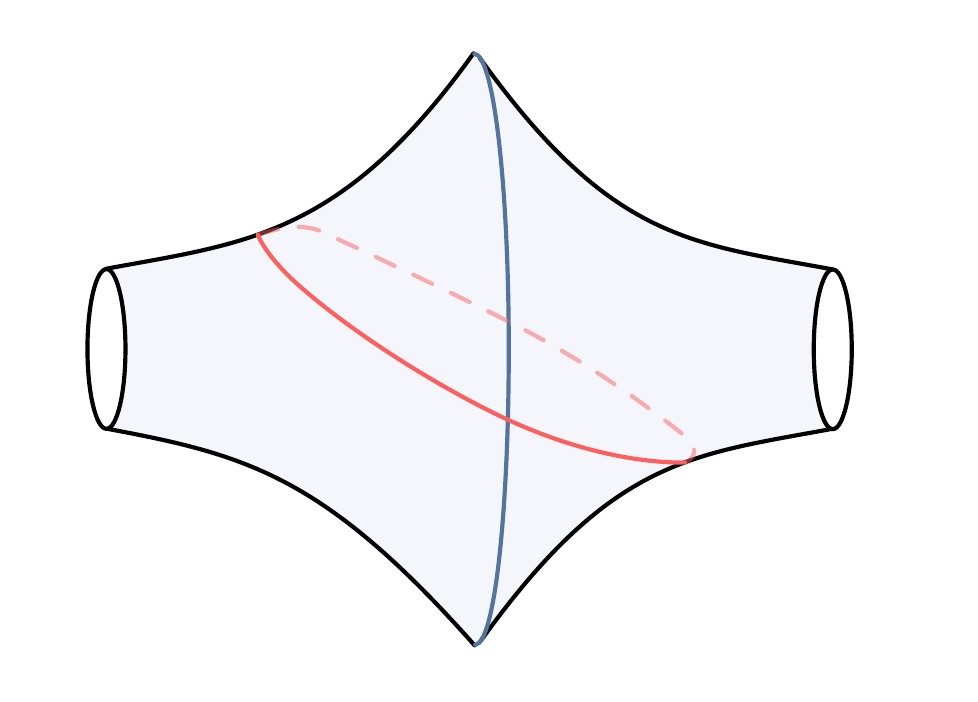}
	\caption{Specific example of a bulge which breaks spherical symmetry on a dust shell microstate of a two-sided black hole. On the left, the Penrose diagram of the geometry, where the trajectory of the dust shell is shown in blue. The semiclassical state is defined on the time reflection-symmetric Cauchy slice $\Sigma$, consisting of two exterior regions $\Sigma_{\text{out}_{L,R}}$ delimited by the apparent horizons of the two black holes, and a python's lunch geometry $\Sigma_{\text{in}}$ in the black hole interior. On the right, the geometry of the python $\Sigma_{\text{in}}$. The naive bulge is the maximum sphere, which sits at the position of the dust shell on $\Sigma$. The true bulge, in red, spontaneously breaks spherical and $\mathbf{Z}_2$ reflection symmetry of $\Sigma$.}
	\label{fig:dustshell}
\end{figure}

 We close with a specific example of a symmetry-breaking bulge, name that associated with semiclassical states of asymptotically AdS$_{d+1}$ black holes with interior dust shells, described in \cite{Balasubramanian:2022gmo,Balasubramanian:2022lnw}. In these geometries, there is a ``naive bulge'' (incorrectly identified in  \cite{Balasubramanian:2022gmo}\ as the true bulge) at the location of each spherical shell of dust, where the sphere carries the $SO(d)$ symmetry of the Cauchy slice. For the purposes of specificity and brevity, let us consider the example of a single shell in the interior of a two-sided black hole, as described in \cite{Balasubramanian:2022gmo}, in the case that the black hole on each side has the same temperature and the whole geometry has a $\mathbf{Z}_2$ reflection symmetry across the shell. The geometry is built by starting with two copies of the two-sided black hole, with the copies glued along the shell's trajectory as in Fig.\ \ref{fig:dustshell}, using the Israel junction conditions. The resulting background has time-reflection symmetry, and the metric along the time-symmetric slice $\Sigma$ corresponds to two copies of the metric on the $t = 0$ slice of the black hole, glued together at some proper radius $r_0$ in an exterior region of each. Let the radial coordinate in AdS-Schwarzschild coordinates be $y$ so that $y \to \infty$ is the boundary, the radius of the $S^{d-1}$ factor scales as $r(\rho) \sim e^{y/R_{AdS}}$, and the shell resides at $y = y_0$. If we choose a local radial coordinate $\rho$ which is zero at the shell, and for which the radial coordinate on each side of the shell is $y \sim y_0 - |\rho|$, then we have
 \begin{equation}
     r(\rho) = r_{\rm shell}  -r_0'|\rho|+ {\cal O}(\rho^2) \,,
 \end{equation}
where $r_0'>0$. As the metric of the python is of the form \eqref{eq:cauchyslicespherical}, $r''(\rho)$ has a delta-function singularity with negative coefficient at the shell location $\rho = 0$, and the condition in \eqref{eq:nmodephericalbulge} is automatically satisfied. Using the formula \ref{eq:bulgeevalues}, we can see that a surface coinciding with the shell has infinite index. A natural candidate bulge arises from considering each side of the shell to be the spatial slice of a cutoff AdS-Schwarschild geometry. One each side of the geometry, consder the minimal ``RT'' surface ending on an equatorial $S^{d-2}$ of the dust shell; and glue them together at the dust shell. There is a clear negative mode that arises from deforming the intersection of this surface with the dust shell off of the equator. This solution will retain an $SO(d-1)$ symmetry, and leave $d-1$ zero modes behind, corresponding to the choice of equator at which the bulge intersects the dust shell. 

\subsection{Planar symmetry \& simple interiors}
\label{sec:planar}

Planar-symmetric states can be obtained in the formal thermodynamic limit $r_0\rightarrow \infty$ from the spherical case. For the naive bulge candidate $X^{0}_\R$, whose topology is now $\mathbf{R}^{d-1}$, the spectrum of $J$ becomes continuous, and \eqref{eq:nmodephericalbulge} is never satisfied, provided that $r''(\rho_0)$ is kept finite in the scaling limit. In particular, this means that $X^{0}_\R$ for planar-symmetric states has index $\infty$, and is never the bulge surface.
 
If we regulate the transverse directions, unlike for minimal surfaces homologous to the full boundary, the character of the bulge will depend on the value of the IR cutoff. Consider for simplicity the $d=2$ case presented above, and decompactify the spatial circle, so that the geometry is 
\be 
\text{d}s_\Sigma^2 = \text{d}\rho^2 + r^2(\rho) \text{d}x^2\;,
\ee 
instead, for $x \in \mathbf{R}$. The function $r(\rho)$ is assumed to have a positive global minimum at $\rho=0$, a local maximum at $\rho_b>0$, and a local minimum at $\rho_c>\rho_b$; this defines a python, as above. (This is now a two-sided python, and the RT surface $X_\R$ at $\rho=0$ is no longer empty. This will not affect our calculation.) Assume that we regularize the geometry by adding a transverse IR cutoff at $x_\pm = \pm \Lambda_{\text{IR}}^{-1}/2$, and requiring the bulge surface to meet the IR cutoff surfaces orthogonally. This puts a Neumann boundary condition on the deformation function $\eta$ appearing in the second variation \eqref{eq:secondvariation}, ensuring that the Jacobi operator remains self-adjoint.

 For the naive bulge candidate $X^{0}_\R$, the Jacobi operator $J$ will have a spectrum given by $\lambda_s = (2\pi s \Lambda_{\text{IR}})^2  + r_0''\Lambda_{\text{IR}}$, for $s =0,1,2,...$. The eigenmodes correspond to the normal deformations $\eta_s = \cos(2\pi s\Lambda_{\text{IR}} x)$, which satisfy the boundary conditions set by the IR cutoff. Therefore the index of $X^{0}_\R$ will be greater than 1 if the IR cutoff is large enough,
 \be 
 \Lambda_{\text{IR}}^{-1} > - \dfrac{(2\pi)^2}{r_0''}\;,
 \ee 
 and the naive bulge will not be the correct one in these cases. 

  \begin{figure}[h]
 		\centering
 		\includegraphics[width = .93\textwidth]{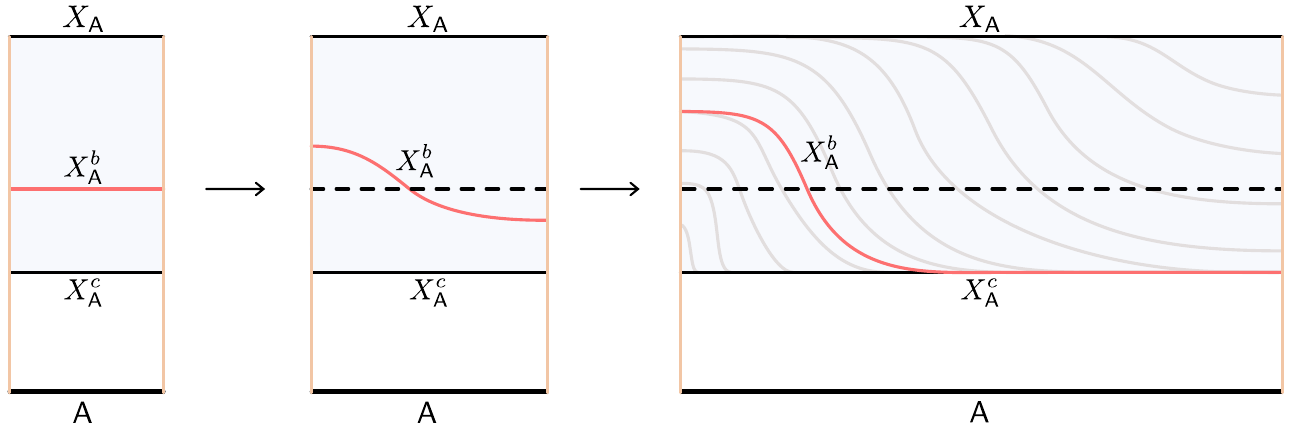}
 		\caption{As the IR cutoff $\Lambda_{\text{IR}}^{-1}$ is increased, the bulge undergoes a transition from the naive symmetric bulge to an extremal surface which spontaneously breaks $\mathbf{Z}_2$ reflection symmetry. In the thermodynamic limit, the bulge aproaches the constriction at any finite distance, and they only differ asymptotically in the transverse space direction. On the right figure, we include a minimax foliation of the lunch, in gray. For the sake of ilustration, we have omitted the additional space beyond $X_\R$ in case the python is two-sided. }
 		\label{fig:planarlunch}
 	\end{figure}

The true bulge will locally satisfy the equation of motion \eqref{eq:eomgeodesics} for the effective potential \eqref{eq:eomgeodesicsveff}, with the constant of motion $r_m = r^2 {\dot x}$. Accordingly, its endpoints will lie at the turning points of $V_{\text{eff}}(\rho)$, which lie at the same value of the radial coordinate $r(x_\pm) = r_m <r_0$. As $x$ goes from $x_-$ to $x_+$, $\rho$ will oscillate between the turning points, traversing that region $n$ times. From the properties of the minimax surface in section \ref{sec:minmax}, assuming that there is no other minimal surface homologous to $\R$, the true bulge will be the minimal index-1 surface in the lunch. The analysis of the index is most easily done by doubling the solution to obtain a solution to the periodic problem of the previous subsection, and retaining the negative modes of the latter solution that are invariant under a $\varphi\to2\pi-\varphi$ reflection. This solution will have $n$ oscillations, hence (as argued there) $2n-1$ negative modes, of which $n$ are reflection-invariant. So the only solution with index 1, and therefore the bulge, is the $n=1$ one. In particular, this means that the endpoints will lie on different sides of this surface, as shown in Fig.\ \ref{fig:planarlunch}.
 
 The endpoint radius $r_m$ is determined by the analog of \eqref{eq:elapsedangle} for the planar case, that is,
 \be\label{eq:ellapsedx}
 \Lambda_{\text{IR}}^{-1}= \int_{\rho^L_m}^{\rho^R_m} \dfrac{r_m\text{d}\rho}{r^2\sqrt{-V_{\text{eff}}(\rho)}}\;,
 \ee 
 where $\rho^{L,R}_m$ are the two solutions to $r(\rho) = r_m$ closest to $\rho_0$.  From \eqref{eq:ellapsedx}, it is easy to see that the area of the bulge will satisfy
 \be\label{eq:relation2dbulge} 
 \text{Area}(X^b_\R) - r_m \Lambda_{\text{IR}}^{-1}  =  \int_{\rho^L_m}^{\rho^R_m} \text{d}\rho \sqrt{-V_{\text{eff}}(\rho)}\;.
 \ee 
 
 In the thermodynamic limit $\Lambda_{\text{IR}}\rightarrow 0$, the endpoint asymptotes to the value of the radius at the constriction, $r_m \rightarrow r_c$. The right hand side of \eqref{eq:relation2dbulge} remains finite, since the integrand is everywhere finite in the corresponding domain of integration. The left-hand side precisely controls the exponent in the python's lunch conjecture \eqref{eq:PLC}. The complexity to reconstruct the black brane interior, according to the PLC \eqref{eq:PLC}, is
 \be\label{eq:complexityblackbrane} 
  \Co(\mathcal{R}) \sim  \exp \lambda_c\;.
 \ee 
 where 
 \be 
 \lambda_c = \dfrac{1}{8G}\int_{\rho^L_m}^{\rho^R_m} \text{d}\rho \sqrt{-V_{\text{eff}}(\rho)} \sim O(N^2\Lambda_{\text{IR}}^{0})
 \ee 
 provides a finite complexity density in the thermodynamic limit. 

 We arrive at the conclusion that the symmetry breaking of the bulge in the planar case makes the complexity to reconstruct the lunch \emph{not} scale exponentially with the coarse-grained entropy of the CFT system.\footnote{{Although the exponent does not scale with the entropy and system size, the subexponential volume factor in the complexity \emph{does} scale; it may be estimated as follows:
 \be 
 \mathcal{C}_V = \dfrac{\Lambda_{\text{IR}}^{-1}}{G\ell_{\rm AdS}} \int_0^{\rho_c} \text{d}\rho \,r(\rho) \sim N^2\Lambda_{\text{IR}}^{-1}\;.
 \ee 
}} The latter is determined by the generalized entropy of the constriction, $S_{\text{gen}}(X^c_\R) \sim O(N^2\Lambda_{\text{IR}}^{-1})$. In this naive sense, black brane interiors are ``simple'' to reconstruct. The direct tensor-network interpretation is that there is an optimal way to undo the tensor network from $\R$, which uses unitary operators that break planar symmetry, following an optimal foliation of the lunch which contains the bulge (see one such foliation in gray in Fig.\ \ref{fig:planarlunch}). This way, the amount of post-selection needed to undo the tensor network is drastically reduced compared to the planar-symmetric foliation of the lunch. 
 
 We note, moreover, that the notion of bulge is somewhat ambiguous when we remove the IR regulator of the transverse spatial directions. In the thermodynamic limit, the bulge sits on top of the constriction except close to one of its endpoints, at $x\rightarrow -\infty$ (for the $\mathbf{Z}_2$ reflection symmetric bulge the separation occurs at $x\rightarrow +\infty$). Furthermore, different IR regulators, e.g.\ those defined by moving the branes around by a spatial translation $x \rightarrow x+a$, will provide different bulges in the thermodynamic limit. All of them will hug the constriction at finite distance, since in the effective potential \eqref{eq:eomgeodesicsveff}, the turning point is near a maximum of the potential, which is where the particle is spending most of its time.\footnote{Had we decided to regularize the transverse space using periodic boundary conditions, we would have found a continuous family of bulges related by translational zero modes. As explained in section  \ref{sec:spherical}, in the $2+1$ dimensional case, these bulges cross the $r=r_0$ surface twice. Still, when $\Lambda_{\text{IR}}^{-1}\rightarrow 0$ all of the bulges practically hug the constriction except at a finite region, leading to a log-complexity, according to the python's lunch conjecture \eqref{eq:PLC}, which does not scale with the volume of the transverse space. This suggests that the effect of accumulation of the bulge on the constriction (and thus non-extensive log-complexity) is also present for large spherical boundaries, in the regime where the radial curvature of the lunch is much more prominent than the transverse curvature of the sphere.}

\section{Pythons in the vacuum}
\label{sec:vacuumpython}

We now move into the study of classical bulges that arise from the spatial entanglement structure of the ground state of the holographic CFT. Our main goal in this section is twofold. First, we provide an extensive study of the bulge for the entanglement wedge of two disks in AdS$_4$ in the connected phase. Second, we show that the bulge is significantly modified once a compact dimension AdS$\times Y$ with product metric is added. This effect leads to the resolution of singular bulges in AdS$_3$.

\vspace{-.5cm}

\subsection{$2+1$ dimensions}
\label{sec:vacuumads3}

\begin{figure}[h]
    \centering
    \includegraphics[scale =1]{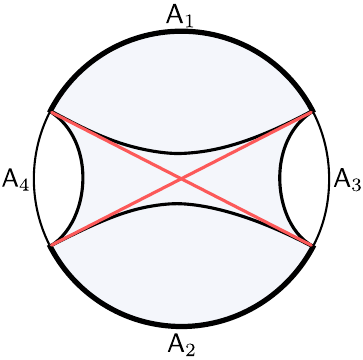}
    \caption{The entanglement wedge $\rew$ of two disjoint intervals $\R = \R_1 \cup \R_2$ in the connected phase contains a python. The bulge $X^{b}_{\R}$ is the red curve in the python, consisting of two intersecting geodesics.}
    \label{fig:ads3cross}
\end{figure}

The simplest example \cite{Brown:2019rox} of a classical python occurs in vacuum AdS$_3$. Consider the entanglement wedge $\rew$ of two intervals $\R = \R_1 \cup \R_2$ in the ground state of the CFT$_2$ on a spatial circle of radius $\ell$. If $|\R|\geq \pi \ell$, the two intervals comprise more than half of the boundary space and the RT surface $X_\R$ is in the connected phase (see Fig.\ \ref{fig:ads3cross}). The constriction consists of the disconnected minimal surface, $X^{c}_{\R} = X_{\R_1} \cup X_{\R_2}$. The bulge consists of two geodesics which cross each other, $X^{b}_{\R} = X_{\R_1 \R_3 } \cup X_{\R_3 \R_2}$, and is therefore singular. Since this surface is singular, the theory of small deformations reviewed in section \ref{sec:minmax} does not directly hold; in particular, neither the normal vector $n^\mu$ nor the extrinsic curvature $K$ is well defined at the crossing point, and deformations cannot in general be described in terms of a smooth function $\eta$.

Nonetheless, there is a sharp sense in which this surface has index 1. Its deformations can be divided into three classes: (1) those that leave a neighborhood of the intersection point unchanged; (2) those that move the intersection point, but leave the four segments connecting it to the boundary as geodesics; and (3) those that desingularize the intersection point. The first two kinds of deformations increase the total area of the surface, thus do not contribute to the index. The third kind includes two deformation directions, which make the surface homotopic to $X_\R$ and to $X^c_\R$ respectively. Both of these deformations decrease the area, and they should be thought of as the two directions that a negative mode can be turned on. Furthermore, since in both directions the area decreases already at first order in the deformation, the second derivative (formally, the eigenvalue of the Jacobi operator) is $-\infty$.

One way to think about this situation is as follows. The surfaces homologous to $\R$ fall into two homotopy classes, corresponding to the connected and disconnected phases. (There are other homotopy classes, but these are the only ones that are relevant for defining the bulge.) Each class contains a single extremum of the area, the RT and constriction respectively, which are local minima. These two homotopy classes meet along a codimension-one locus, consisting of surfaces with the cross topology. In terms of the area functional, this locus is a ridge, with the area having a finite negative slope moving away from the ridge in either direction. The minimal-area surface among the ones on the ridge is the cross consisting of two intersecting geodesics. There are no other extremal surfaces in this homology class.

The area of the bulge has a clear expression in terms of boundary entropies
\be 
\dfrac{\text{Area}(X^{b}_{\R})}{4G} = S(\R_1 \R_3) + S(\R_2 \R_3) = S(\R_1 \R_4) + S(\R_2 \R_4)\;,
\ee 
and thus the  
complexity according to \eqref{eq:PLC}, is 
\be\label{eq:rescads}
\mathcal{C}(\mathcal{R}) \sim \;\exp(\frac{\Icmutual}{2})\;,
\ee  
where $\Icmutual \equiv S(\R_1 \R_3) + S(\R_1  \R_4) - S(\R_1) - S(\R_1  \R_3  \R_4)$ is the the conditional mutual information. The conditional mutual information is non-negative by virtue of strong subadditivity of the von Neumann entropy. In the holographic context, this was proven in \cite{Headrick:2007km}.\footnote{{The geometric volume of the python $N$ is independent of the sizes of $\R_i$, by virtue of the the Gauss-Bonnet theorem. Namely, the python $N$ is bounded by four geodesics which meet perpendicularly to the asymptotic boundary, and thus the internal angles between them vanish, giving $\int_N \kappa = -2\pi$, where $\kappa$ is the Gaussian curvature.}}

\subsubsection{Excising 
boundary points}

As an additional observation, we note that the appearance of a python in the vacuum can even be sensitive to losing access to a measure-zero subset of the boundary Cauchy slice.

\begin{figure}[h]
\centering
 	\includegraphics[width=0.3\textwidth]{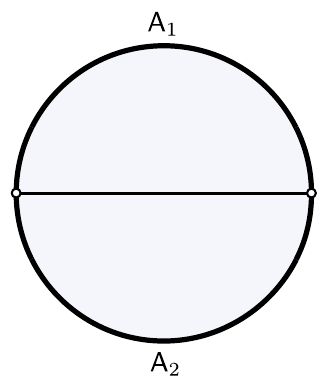}
    \hspace{1.5cm}
    \includegraphics[width=0.39\textwidth]{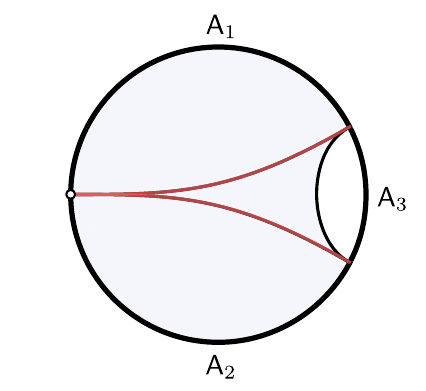}
	\caption{On the left, we excise two points of the full boundary. One can get this configuration as a limit from figure \ref{fig:ads3cross} where $\R_3 $ and $\R_4 $ have each been reduced to a point. This configuration can be recognized as vacuum AdS$_3$ in Rindler coordinates. On the right, removing a point from an interval generates a constriction and a python for the proper subregion $\R = \R_1 \cup \R_2$. One can get this configuration as a limit from figure \ref{fig:ads3cross} where $\R_4 $ has been reduced to a point. The bulge hugs the constriction and the difference in areas vanishes, giving no exponential complexity to reconstruct the lunch.}
	\label{fig:ads3_excised_points}
\end{figure}

We first consider excising two points of the asymptotic boundary of AdS$_3$ as illustrated in the left of Fig.\ \ref{fig:ads3_excised_points}. We can think of removing such points as beginning with a cutoff entangling surface and taking the cutoff to infinity. The situation is the same as describing AdS in Rindler coordinates. The acceleration horizon leaves a measure-zero causal shadow on the constant global time slice, granted that we can access to the rest of the boundary $\R_1 \cup \R_2$. There is no python in this case. As a check on this, we note that this is what we would get from the two-interval case if we had shrunk $\R_3,\R_4$ to antipodal points.

If, instead, we shrink just $\R_4$ to a point, leaving $\R_3$ as a finite interval, we obtain the situation shown on the right side of Fig.\ \ref{fig:ads3_excised_points}. The region $\R_1\cup\R_2$ can be thought of as an interval with a point removed from its interior. Removing this point has the effect of creating a python. However, the bulge coincides with the constriction, $X^b_\R=X_\R^c=X_{\R_1}\cup X_{\R_2}$, implying a vanishing exponent in the complexity (leaving only the subexponential volume factor).

\begin{figure}[h]
\centering
    \includegraphics[width=0.34\textwidth]{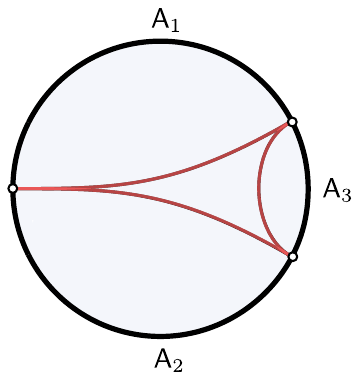}
    \hspace{1.5cm}
    \includegraphics[width=0.37\textwidth]{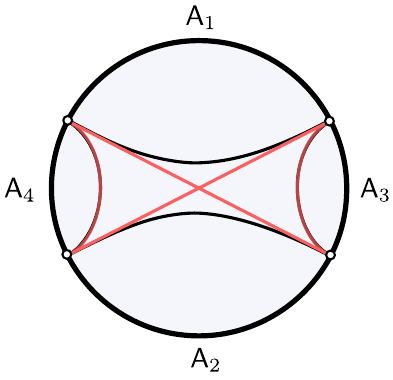}
	\caption{Excising three or more points generates a python for the full boundary. On the left, the case of three points, where the bulge is the union of the RT surfaces for $\R_1$, $\R_2$, and $\R_3$. The bulge coincides with the constriction, 
 giving no exponential complexity to reconstruct the python. On the right, excising four points, the bulge is the bulge for $\R_1\cup \R_2$ plus the RT surface for $\R_1\cup \R_2$.}
	\label{fig:ads3_4_excised_points}
\end{figure}

We could also include $\R_3$ in the region, but still leave out its endpoints, so it now covers the entire boundary save 3 points, as shown in the left part of Fig.\ \ref{fig:ads3_4_excised_points}. Again, there is a python, and again the bulge coincides with the constriction, $X^b_\R=X_\R^c=X_{\R_1}\cup X_{\R_2}\cup X_{\R_3}$, leaving a vanishing exponent in the complexity.

Finally, with four points excised, the situation is slightly more complicated, given that the bulge for $\R$ will only partially coincide with the constriction, as shown in Fig.\ \ref{fig:ads3_4_excised_points}. Assuming that $\R_1 \cup \R_2$ is large enough to have a connected entanglement wedge, the bulge for $\R$ will be the cross for $\R_1 \cup \R_2$ plus the RT for $\R_1\cup\R_2$ (which is also the RT for $\R_3\cup\R_4$), while the constriction is the union of the RTs of the individual intervals:
\be
X_\R^b = X_{\R_1\cup \R_3}\cup X_{\R_1\cup\R_4}\cup X_{\R_3}\cup X_{\R_4}\,,\qquad
X_\R^c = X_{\R_1}\cup X_{\R_2}\cup X_{\R_3}\cup X_{\R_4}\,.
\ee
This points to a very interesting property of the PLC: the log complexity to reconstruct the python with access to $\R$ is the same as the log complexity to reconstruct python with only $\R_1 \cup \R_2$. We will come back to this property when studying the black hole interior in section \ref{sec:bhint}.

Thus, according to the PLC, in the vacuum of a 2d CFT, removing 4 or more points from the boundary produces an exponential complexity, while removing fewer than 4 points does not. This very specific prediction would seem to provide a useful target for testing the PLC, if the complexity can somehow be independently estimated.

\subsection{Higher dimensions}

We will now move to the higher dimensional pythons in vacuum AdS$_{d+1}$ for $d>2$, for a boundary subregion consisting of two disks $\R = \R_1 \cup \R_2 \subset S^{d-1}$. Consider that the sizes of the disks and their positions are such that the entanglement wedge corresponding to this subregion is connected, as in the previous case. The constriction consists again of the disconnected surface, $X^{c}_{\R} = X_{\R_1} \cup X_{\R_2}$. Our case study will be AdS$_4$, where we will explicitly find the bulge $X^{b}_{\R}$. As we will show, unlike in AdS$_3$, $X^{b}_{\R}$ is a smooth surface, a property that is expected to hold for any dimension $d>2$.

\subsubsection{Warmup: flat $\mathbf{R}^3$}

\begin{figure}[h]
    \hspace{.3cm}
    \includegraphics[scale =.9]{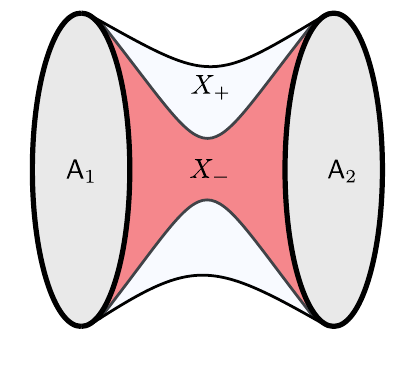}
    \hspace{1.2cm}
    \includegraphics[width= .5\textwidth]{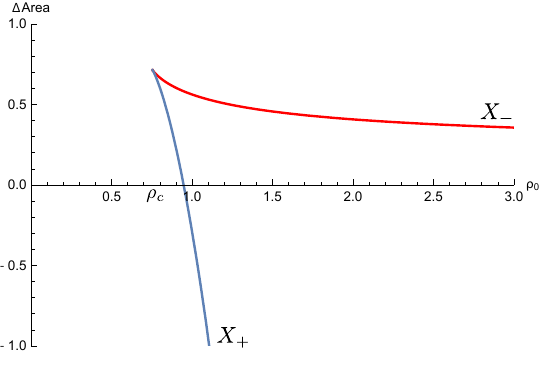}
    \caption{On the left, the three extremal surfaces anchored to the two circles $\partial \R$ consist of the disks themselves, $\R$, and two catenoids $X_\pm$, where $X_-$ is represented in red. On the right, the phase diagram of $\Delta \text{Area} (X_\pm) = \text{Area} (X_\pm) - \text{Area}(\R)$ for the two catenoids, as a function of the disk radius $\rho_0$. For $\rho_0\gg1$, this is a simplified model of a python, where the RT is $X_{\R} = X_+$, the constriction is $X_{\R}^{c}=\R$, and the bulge is $X_{\R}^{b}= X_-$.}
    \label{fig:catenoidphase}
\end{figure}

The features of the bulge surface $X^{b}_{\R}$ are qualitatively captured by the simpler model of two parallel and coaxial disks $\R = \R_1 \cup \R_2$, both of radius $\rho_0$, in $\mathbf{R}^3$. The disks are separated by a distance $z_0=1$ which sets the scale of the system. It is convenient to use adapted cylindrical coordinates,
\be
\text{d}s_\Sigma^2 =\text{d}z^2+\text{d}\rho^2+\rho^2\text{d}\varphi^2\;,
\ee
where the disks lie at $z = \pm\frac{1}{2}$, respectively. For sufficiently large radius, $\rho_0>\rho_c\approx 0.754$, there are three extremal surfaces homologous to $\R$: the disks themselves, and two catenoids, which we call $X_\pm$:
\be
\rho=a_\pm\cosh\frac z{a_\pm}\;,
\ee
where the parameters $a_\pm$ are the larger and smaller solutions to the boundary condition
\be\label{catbc}
\rho_0 = a\cosh\frac1{2a}\,.
\ee
These obey $a_- < a_c < a_+$, where $a_c\approx0.417$ is the solution to $\tanh(1/(2a))=2a$. For $\rho_0<\rho_c$, there are no solutions to \eqref{catbc}, and the disks are the only extremal surface, while for $\rho_0=\rho_c$ there is one solution. These solutions are illustrated in Fig.\ \ref{fig:catenoidphase}. In the limit in which the disks are very large $\rho_0\gg 1$, the catenoid $X_+$ becomes approximately cylindrical, $a_+ \approx \rho_0$, while $X_-$ pinches off, $a_- \sim (\log \rho_0)^{-1}$.

The area of $X_\pm$ is given by
\be
\area(X_\pm) = \pi a_\pm \left(1 + a_\pm \sinh \left(a^{-1}_\pm\right)\right).
\ee 
The phase diagram is represented in Fig.\ \ref{fig:catenoidphase}. $X_-$ always has larger area that the other two surfaces, while $X_+$ and $\R$ switch, with $X_+$ larger for $\rho_0$ less than about 0.948.

Now let us consider the indices of these surfaces. Intuitively, it is clear that the flat surface $\R$ is stable; this also follows from the fact that its Jacobi operator \eqref{eq:Jacobi} is simply minus the Laplacian, which clearly has no negative modes on the disk with Dirichlet boundary conditions. Consider starting with $\rho_0<\rho_c$ and continuously increasing $\rho_0$. At $\rho=\rho_c$, a new critical point of the area functional appears and bifurcates. From a Morse theory perspective, we expect that the smaller of the two, $X_+$, should have index 0 (hence the analogue of the constriction) and the larger one, $X_-$, index 1 (the analogue of the bulge). This is supported by the fact that the larger-area one sits geometrically between the two smaller-area ones. We will now show that it is indeed the case that $X_+$ has index 0 and $X_-$ has index 1.

The Jacobi operator \eqref{eq:Jacobi} takes the following form on the catenoid, with $a$ being either $a_+$ or $a_-$:
\be
J = -\frac{\sech^2(z/a)}{a^2}\left(a^2\partial_z^2+\partial_\varphi^2+2\sech^2(z/a)\right)
\ee
Defining $\zeta=z/a$ (which ranges from $-1/(2a)$ to $1/(2a)$) and fixing a mode in the $\varphi$ direction, $\eta(\zeta,\varphi)=e^{in\varphi}f(\zeta)$, the eigenvalue equation becomes
\be
\label{eq:Jnf}
J_n f:=\sech^2\zeta\left(-f''+\left(n^2-2\sech^2\zeta\right)f\right)=a^2\lambda f\,.
\ee
The parameter $a$ now enters only in setting the boundary condition and rescaling the eigenvalue. The operator $J_n$ is self-adjoint with respect to the inner product
\be
\ev{f,g}_a:=\int_{-1/(2a)}^{1/(2a)}{\text{d}}\zeta\,\cosh^2\zeta \,f^*g
\ee
($d\zeta\,\cosh^2\zeta$ being the area element on the catenoid), so it admits a negative eigenvalue if and only if there exists a function $f$ such that $\ev{f,J_n f}_a<0$. This implies that, if a negative eigenmode $f_1$ exists for some $a=a_1$, then a negative eigenmode must also exist for any $a_2\le a_1$, since the function $f_2$ which is equal to $f_1$ on the interval $[-1/(2a_1),1/(2a_1)]$ and 0 outside of it, obeys $\ev{f_2,J_n f_2}_{a_2}=\ev{f_1,J_n f_1}_{a_1}<0$. (Note that $f_2$ is continuous by virtue of the Dirichlet boundary condition on $f_1$ at $\pm1/(2a_1)$. Note also that $f_2$ is not itself an eigenmode, and this argument does not tell us the value of the negative eigenvalue, only its existence.) This argument holds separately for each $\zeta$-parity sector.

The quantity $\ev{f,J_n f}_a$ is the same as the energy expectation value (with respect to the usual $L^2$ norm) for a particle on the interval $[-1/(2a),1/(2a)]$ with wave function $f$, subject to the potential
\be
V_n(\zeta) = n^2-2\sech^2\zeta\,.
\ee
So $J_n$ has a negative mode if and only if the particle has at least one negative-energy state. This potential is solvable in the $a\to0$ limit; the ground state wave function is $\sech\zeta$, with energy $n^2-1$. For $n\ge1$, this is non-negative, ruling out negative-energy states for any $a$. For $n=0$, the ground state is the only bound state. By continuity, this suggests the existence of a negative mode for small $a$, which we will confirm below. On the other hand, the lack of an odd negative-energy state rules out an odd negative mode at any $a$, limiting the index to at most 1 (since the first excited state is odd).

Next, we note that for $a=a_c$, $J_0$ has a zero mode: $f_c=1-\zeta\tanh\zeta$. This rules out negative modes for $a>a_c$, showing that $X_+$ has index 0. For $a<a_c$, $f_c$ can be smoothed out near $\zeta=\pm1/(2a_c)$, decreasing $\ev{f_c,J_0f_c}_a$, proving the existence of a negative mode, and confirming that $X_-$ has index 1.

If one is mainly interested in the limit $a\to0$, it turns out that there is a mathematically elegant way to prove that $X_-$ has index $1$. The trick is to conformally map $X_-$ to the Riemann sphere using the \emph{Gauss map} $n: X_- \rightarrow S^2 \setminus \{\text{N,S}\}$, where $\text{N,S}$ denote the north and south pole, respectively. We leave the details of this map for appendix \ref{app:B}. The central property which makes things simpler is that under this map, \eqref{eq:secondvariation} acquires the following form on the sphere
\be\label{eq:gaussvariationcatenoid}
\delta^{(2)}\text{Area}(S^2\setminus \{\text{N,S}\}, \tilde{h}) = -\dfrac{1}{2}\int_{S^2\setminus \{\text{N,S}\}} {\text{d}\Omega}\,\sqrt{\tilde{h}}\,\eta \left(\tilde{\nabla}^2 + 2 \right)\eta \;,
\ee 
where $\tilde{h}_{\mu\nu} = (\sin^2 \theta)\, h_{\mu\nu}$ is the round metric on $S^2 \setminus \{\text{N,S}\}$, for the polar angle $\theta$. As noted in appendix \ref{app:B}, we can extend the metric $\tilde{h}_{ij}$ smoothly to the two poles {N,S} and hence, the index($X_-$) = index($S^2$) with the metric $\tilde{h}_{ij}$ and the quadratic form \eqref{eq:gaussvariationcatenoid}. It is straightforward to evaluate the index of the second order differential operator \eqref{eq:gaussvariationcatenoid} in $L_{\tilde{h}}^2(S^2)$. The eigenvectors correspond to spherical harmonics $(-\tilde{\nabla}^2 - 2 )Y_{\ell,n} = \tlambda_{\ell,n} Y_{\ell,n}$\;, with eigenvalues $\tlambda_{\ell,n} =\ell(\ell+1)-2$. Therefore, there is a single negative mode $\lbrace Y_{0,0} \rbrace $, and three zero modes $\lbrace Y_{1,0},Y_{1,\pm 1}\rbrace$, making the differential operator \eqref{eq:gaussvariationcatenoid} of index $1$ in $L_{\tilde{h}}^2(S^2)$.

A numerical analysis reveals that the negative eigenvalue of $J_0$ on the line is $\lambda_0\approx -0.564$. The eigenfunction falls off quickly at large $\zeta$, so the eigenvalue is essentially unchanged at small $a$. According to \eqref{eq:Jnf}, the eigenvalue of $J$ is then $\lambda\approx\lambda_0/a^2$.

\subsubsection{Two disks in AdS$_4$}

We can now extend the previous analysis to full-fledged AdS$_4$, where the bulk Cauchy slice is $\mathbf{H}^3$, for two boundary disks $\R = \R_1 \cup \R_2 \subset S^2$ with an entanglement wedge $\rew$ in the connected phase. Following \cite{Krtous:2014pva} (see also \cite{Fonda:2015nma}) we begin in adapted cylindrical coordinates in $\mathbf{H}^3$ (and set $\ell_{\text{AdS}}=1$)
\be \label{eq:h3metric}
\text{d}s^2 = \frac{1}{1+P^2}\,\text{d}P^2 + (1+P^2)\,\text{d}z^2 +  P^2\text{d}\varphi^2\;,
\ee
where constant-$z$ slices now corresponds to hyperbolic disks $\mathbf{H}^2$, with $P$ a radial coordinate on each disk. In the conformal frame adapted to these coordinates, the spatial boundary is just the flat cylinder $ \mathbf{R} \times S^1$, and the subregion $\R$ corresponds to two semi-infinite cylinders, $\R =\lbrace z\leq z_1\rbrace \cup \lbrace z\geq z_2\rbrace$. We shall consider the reflection-symmetric case $z_1 = -z_2 =: z_0$ without loss of generality. 

In this configuration, the constriction $X_{\R}^{c}$ will correspond to the disconnected surface, consisting of the two hyperbolic disks, $X_{\R}^{c} = \lbrace z=z_0 \rbrace \cup \lbrace z=-z_0 \rbrace$. The rest of the extremal surfaces correspond to two connected ``catenoids'' $X_\pm$ with embedding functions $(P, z_\pm (P), \varphi )\in \mathbf{H}^3$ given by (cf. \cite{Krtous:2014pva}) 
\be \label{eq:cath3}
z_\pm(P) =  \alpha_\pm  F\left(\arccos \frac{ a_\pm}{P}, \frac{1+a^2_\pm}{1+ 2 a^2_\pm}\right) - \beta_\pm \Pi\left(\frac{1}{1+a^2_\pm},\arccos \frac{a_\pm}{P} , \frac{1+a^2_\pm}{1+2a^2_\pm}\right)
\ee 
where $F(\phi, m)  = \int_0^\phi (1-m\sin^2 \theta)^{-1/2}d\theta,$ and $\Pi(n,\phi,m) = \int_0^\phi (1-n\sin^2\theta)^{-1}(1-m\sin^2 \theta)^{-1/2} d\theta$ are elliptic integrals of the first and third kind, respectively, and $\alpha_\pm = \frac{a_\pm (1+a^2_\pm)}{\sqrt{(1+a^2_\pm)(1+2 a^2_\pm)}}$ and $\beta_\pm = \frac{a^3_\pm}{\sqrt{(1+a^2_\pm)(1+2 a^2_\pm)}}$ are two constants. Imposing the boundary condition $z(\infty) = z_0$, we can numerically solve for the two values of the throat size $a_\pm$ as a function of the separation of the two disks. The connected solutions $X_\pm$ exist as long as $z_0 \leq z_c \approx 0.501$. They satisfy $a_- < a_c < a_+$ for $a_c \approx 0.538$.

\begin{figure}[h]
    \hspace{.3cm}
    \includegraphics[scale =.87]{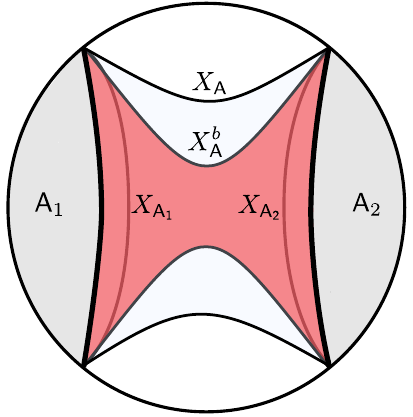}
    \hspace{1cm}
    \includegraphics[width= .52\textwidth]{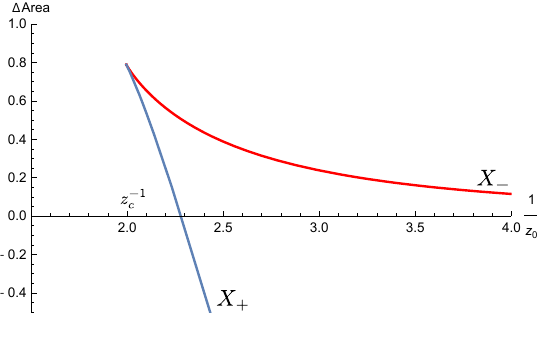}
    \caption{On the left, the three extremal surfaces anchored to the two circles $\partial R$ consist of the disconnected minimal surface $X_{\R}^{c} = X_{\R_1}\cup X_{\R_2}$, the RT surface $X_{\R} = X_+$ and the bulge $X_{\R}^{b} = X_-$. On the right, the phase diagram of $\Delta \text{Area} (X_\pm) = \text{Area} (X_\pm) - \text{Area} (X_{\R}^{c})$ as a function of the boundary separation $z_0$.
    }
    \label{fig:catenoidphaseads}
\end{figure}

The area difference $\Delta \text{Area}(X_\pm) = \text{Area}(X_\pm)- \text{Area}(X_{\R}^{c})$ is finite, and given by 
\be 
\Delta \area(X_\pm) = 4\pi \left[1 + \dfrac{a^2_\pm }{\sqrt{1+2a^2_\pm}}K\left( \frac{1+a^2_\pm}{1+2a^2_\pm}\right) - \sqrt{1+ 2a^2_\pm}E\left( \frac{1 + a^2_\pm}{1+2 a^2_\pm}\right) \right],
\ee 
where $K, E$ are complete elliptic integrals of the first and second kind, respectively. As we show in Fig.\ \ref{fig:catenoidphaseads}, the RT surface $X_\R$ for $z_0$ less than about $0.438$ corresponds to $X_+$.

As in the flat space example of the previous subsubsection, $X_+$ has index 0 while $X_-$ has index 1; since this is the only index-1 surface, it must be the bulge. To show this, we will employ the same technique in AdS$_4$ as we did for finding the negative mode of the catenoid in $\mathbf{R}^3$. 

Evaluating the Jacobi operator in the coordinate system \eqref{eq:h3metric} for which the embedding function of the catenoid is \eqref{eq:cath3}, leads to the eigenvalue equation 
\be
\frac{1}{P^2}\left((P^4+P^2-a^2-a^4)\frac{\partial^2}{\partial P^2} + P(1+2P^2)\frac{\partial}{\partial P} + \frac{\partial^2}{\partial \varphi^2} +\frac{2a^2(1+a^2)}{P^2}  - 2P^2\right)\eta = -\lambda \eta \, ,
\ee
where we have set $a_\pm$ in \eqref{eq:cath3} to $a$ for brevity. Note that in this coordinate system, $P \in [a,\infty)$. Using axial symmetry, we can assume the eigenfunctions to be of the form $\eta_n(P,\varphi) = e^{in\varphi}f_n(P)$. We now perform the change of variables and rescaling
\begin{gather}
\rho = \frac{1}{2}\log \left( \frac{1+2P^2 + 2 \sqrt{P^4+P^2-a^2-a^4}}{1 + 2a^2} \right) \;, \\
f_n(P) = \left(\frac{2a^2}{(1+2a^2)\cosh{2 \rho}-1}\right)^{1/4}g_n(\rho)\;,
\end{gather} 
to get a one-dimensional Schrödinger problem,
\be\label{eq:schrodingerAdS}
\left[-\frac{\text{d}^2}{\text{d}\rho^2} + V_{\text{eff}}(n,\rho)\right]g_n(\rho) = \lambda g_n(\rho)\;,
\ee
in terms of the effective potential 
\be
V_{\text{eff}}(n,\rho) = \frac{9}{4} - \frac{5(a^2+a^4)}{((1+2a^2)\cosh{2 \rho} -1)^2} - \frac{1-4n^2}{2((1+2a^2)\cosh{2\rho}-1)}\;.
\ee

\begin{figure}[h]
    \centering
    \includegraphics[width= 0.5\textwidth]{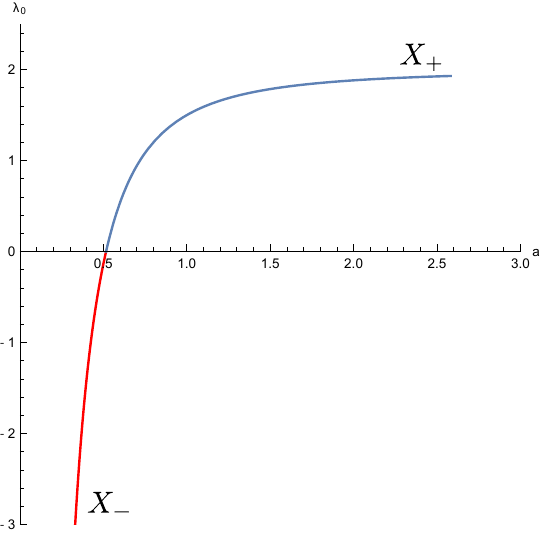}
    \caption{Smallest eigenvalue $\lambda_0$ of the Jacobi operator $J$ as a function of throat size $a$. The extremal surface $X_-$ with $a < a_c \approx 0.538$ always has one negative eigenvalue and therefore has index 1. The extremal surface $X_+$ with $a>a_c$ has no negative eigenvalues and therefore has index 0, i.e. it is locally minimal.}
    \label{fig:negavtiveeigen}
\end{figure}

We can now solve for the smallest eigenvalue $\lambda_0$ as a function of $a$ numerically by a shooting method. For $n = 0$, there is a negative eigenvalue if $a < a_c \approx 0.538$, as illustrated in Fig.\ \ref{fig:negavtiveeigen}. This precisely corresponds to $X_-$, and shows that it has Morse index larger than $0$. For $n>0$ there are no negative eigenvalues. Therefore, $X_-$ has index $1$ while $X_+$ has index $0$.

\subsection{Compact dimensions} 

Top-down string theory constructions of duals of CFTs take the form $\text{AdS}_d \times Y$, where $Y$ typically contains a factor whose size is of order the $\text{AdS}$ scale. For example, the type IIB string dual to the D1-D5 system takes this form with $d = 3$ and $Y =  S^3\times T^4$ (or $  S^3\times$K3), where the radius of the $ S^3$ is the $\text{AdS}$ scale and the $T^4$ or K3 are of order the string scale.

We would like to investigate how the extra dimensions affect the RT, constriction, and bulge surfaces. For a product metric $\text{d}s^2 = \text{d}s_M^2 + \text{d}s^2_Y$ on a product manifold $M\times Y$, an extremal surface $X$ in $M$ lifts to an extremal surface $X\times Y$, since the components of the extrinsic curvature in the $Y$ directions vanish. However, since the Jacobi operator will have KK modes on $Y$, the index of $X$ and $X\times Y$ may be different. More precisely, the KK modes contribute positively to the Jacobi operator; hence if the index of $X$ is 0 then the index of $X\times Y$ is also 0. Therefore, the RT and constriction computed on $M$, and lifted to $M\times Y$, are candidates for the RT and constriction on the full space. For the RT, we can prove that this is correct surface.

\begin{lemma}
Let $X_\R$ be the minimal-area surface in $M$ homologous to the boundary region $\R$. Then $X_\R\times Y$ is the minimal-area surface in $M\times Y$ homologous to $\R\times Y$.
\end{lemma}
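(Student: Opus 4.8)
The plan is a slicing (coarea) argument. A competitor surface $\Sigma\subset M\times Y$ homologous to $\R\times Y$ need not respect the product structure, so the idea is to cut it into slices over $Y$ and compare each slice to $X_\R$ in $M$. Let $\pi:M\times Y\to Y$ be the projection, and for a regular value $y$ of $\pi|_\Sigma$ set $\Sigma_y:=(\pi|_\Sigma)^{-1}(y)\subset M\times\{y\}\cong M$. First I would do the homology bookkeeping: if $W$ is the bulk region with $\partial W=\Sigma-\R\times Y$ (as relative cycles anchored at the conformal boundary), then for a.e.\ $y$ the slice $W_y=W\cap(M\times\{y\})$ satisfies $\partial W_y=\Sigma_y-\R$, so $\Sigma_y$ is homologous to $\R$ in $M$; hence $\area_M(\Sigma_y)\ge\area_M(X_\R)$ by minimality of $X_\R$ in $M$. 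By Sard's theorem the set of such regular values $y$ has full measure in $Y$.

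Second, invoke the coarea inequality. Since the metric on $M\times Y$ is a product, $\pi$ is a Riemannian submersion; orthogonal projection onto $T Y$ is norm non-increasing, hence volume non-increasing on any subspace, so the coarea Jacobian obeys $J(\pi|_\Sigma)\le 1$ pointwise on $\Sigma$. The coarea formula then gives
\be
\area(\Sigma)=\int_\Sigma 1\;\ge\;\int_\Sigma J(\pi|_\Sigma)=\int_Y \area_M(\Sigma_y)\,\mathrm{dVol}_Y(y)\;\ge\;\mathrm{Vol}(Y)\,\area_M(X_\R)=\area(X_\R\times Y)\,,
\ee
where in the middle equality the intrinsic $(\dim M-1)$-volume of $\Sigma_y\subset\Sigma$ equals its area measured in $(M,g_M)$, because the product metric restricts to $g_M$ on each slice $M\times\{y\}$. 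Since $X_\R\times Y$ is itself homologous to $\R\times Y$ and saturates this bound, it is area-minimizing, as claimed.

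There are two routine subtleties to dispatch. One is the UV divergence: $\R$, $X_\R$ and $\Sigma$ are anchored at the conformal boundary of $M$, so all areas are formally infinite; I would run the entire argument with a fixed radial cutoff near the conformal boundary (crossed with $Y$), note the slicing inequality holds verbatim for the regulated areas, and observe that the divergent pieces of $\area(\Sigma)$ and $\area(X_\R\times Y)$ coincide, since near the cutoff every admissible competitor asymptotes to $\partial\R\times Y$ times the cutoff direction. Subtracting, the finite renormalized areas obey the same inequality. The other is regularity: for the smooth coarea formula one wants $\Sigma$ at least Lipschitz; I would state the result for smooth (or Lipschitz) competitors and remark that the general case follows from the coarea formula for rectifiable sets, or equivalently from the slicing theorem for integral currents, which commutes with $\partial$ and thus makes the homology bookkeeping above rigorous.

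The main obstacle is precisely this homology-slicing step---ensuring that for almost every $y$ the slice $\Sigma_y$ is an \emph{admissible} competitor in $M$ (a rectifiable cycle homologous rel.\ $\partial\R$ to $\R$)---rather than the coarea estimate, which is elementary. An alternative route that sidesteps slicing is a calibration argument: if $\phi$ is a closed $(\dim M-1)$-form on $M$ of comass one calibrating $X_\R$ (the Riemannian min-cut/max-flow, i.e.\ ``bit thread'', statement underlying RT), then $\Phi:=\mathrm{pr}_M^*\phi\wedge\mathrm{pr}_Y^*\mathrm{dVol}_Y$ is closed, has comass one, and calibrates $X_\R\times Y$; Stokes' theorem together with $\partial\Sigma=\partial(X_\R\times Y)$ then yields $\area(\Sigma)\ge\int_\Sigma\Phi=\int_{X_\R\times Y}\Phi=\area(X_\R\times Y)$. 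This trades the slicing subtleties for the (equally standard) existence of the calibration for $X_\R$.
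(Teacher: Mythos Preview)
Your slicing/coarea argument is correct and is essentially identical to the paper's own proof, which slices the competitor over $Y$, observes that each slice is homologous to $\R$ via the sliced homology region, and then applies the same coarea-type inequality $\area(\tilde X)\ge\int_Y\area(\tilde X_y)\,\mathrm{dVol}_Y$. Your version is considerably more careful about the technical subtleties (Sard, regularity, the UV cutoff) than the paper's short argument, and the calibration/bit-thread alternative you sketch is a nice extra not present in the paper.
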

\begin{proof}
Let $\tilde X$ be a surface in $M\times Y$ homologous to $\R\times Y$. For each point $y\in Y$, define $\tilde X(y)\in M$ as the intersection of $\tilde X$ with $M\times\{y\}$, $\tilde X(y):=\{x\in M:(x,y)\in\tilde X\}$. $\tilde X(y)$ is homologous to $\R$ (via the intersection of the homology region for $\tilde X$ with $M\times\{y\}$). Therefore $\area(\tilde X(y))\ge\area(X_\R)$, so we have
\be
\area(\tilde X)\ge\int_Y{\text{d}}y\sqrt{g_Y}\area(\tilde X(y))\ge
\text{Vol}(Y)\area(X_\R)=\area(X_R\times Y)\,.
\ee
\end{proof}

Via maximin \cite{Wall:2012uf}, this statement extends to the HRT formula. It therefore justifies the standard practice in the holographic entanglement literature of ignoring the extra dimensions. While we don't have a proof, we suspect that the same holds for the constriction. In fact, we conjecture that the \emph{only} index-0 surfaces in $M\times Y$ are those of the form $X\times Y$ for some index-0 surface $X$ in $M$.

The situation is very different at index 1. If $X$ is an index-1 surface in $M$, then $X\times Y$ has index at least 1; but the KK modes may lead to several negative modes, hence an index larger than 1. One therefore has to search for the bulge among surfaces that are not products, but that genuinely probe the extra dimensions. 
Whether this happens depends on the ratio between the negative eigenvalue of $X$ and the KK scale, which in turn is determined by the size of $Y$. Usually, the negative eigenvalue of $X$ is roughly determined by the AdS radius; so if $Y$ is much smaller than the AdS scale then $X\times Y$ has index 1, but if it is AdS-sized (as in the case of AdS$_3\times  S^3$ and AdS$_5\times  S^5$), its index may not be 1. However, our very first example, treated in subsubsection \ref{sec:flatS1} will be an exception, as the negative eigenvalue of $X$ in that case is $-\infty$, leading to a non-trivial bulge for extra dimensions of any size.

\subsubsection{Warmup: $\mathbf{R}^2\times  S^1$}
\label{sec:flatS1}

As we discussed in subsection \ref{sec:vacuumads3}, the bulge for two boundary intervals in AdS$_3$ is a cross, and its negative eigenvalue is $-\infty$. Therefore, for AdS$_3\times  S^1$, the cross times $ S^1$ is an extremal surface with infinite index, hence cannot be the bulge. To simplify this situation, we approximate it with $\mathbf{R}^2\times  S^1$ --- appropriate for example if the $ S^1$ is much smaller than the AdS$_3$.

\begin{figure}[h]
    \centering
    \includegraphics[width = .3\textwidth]{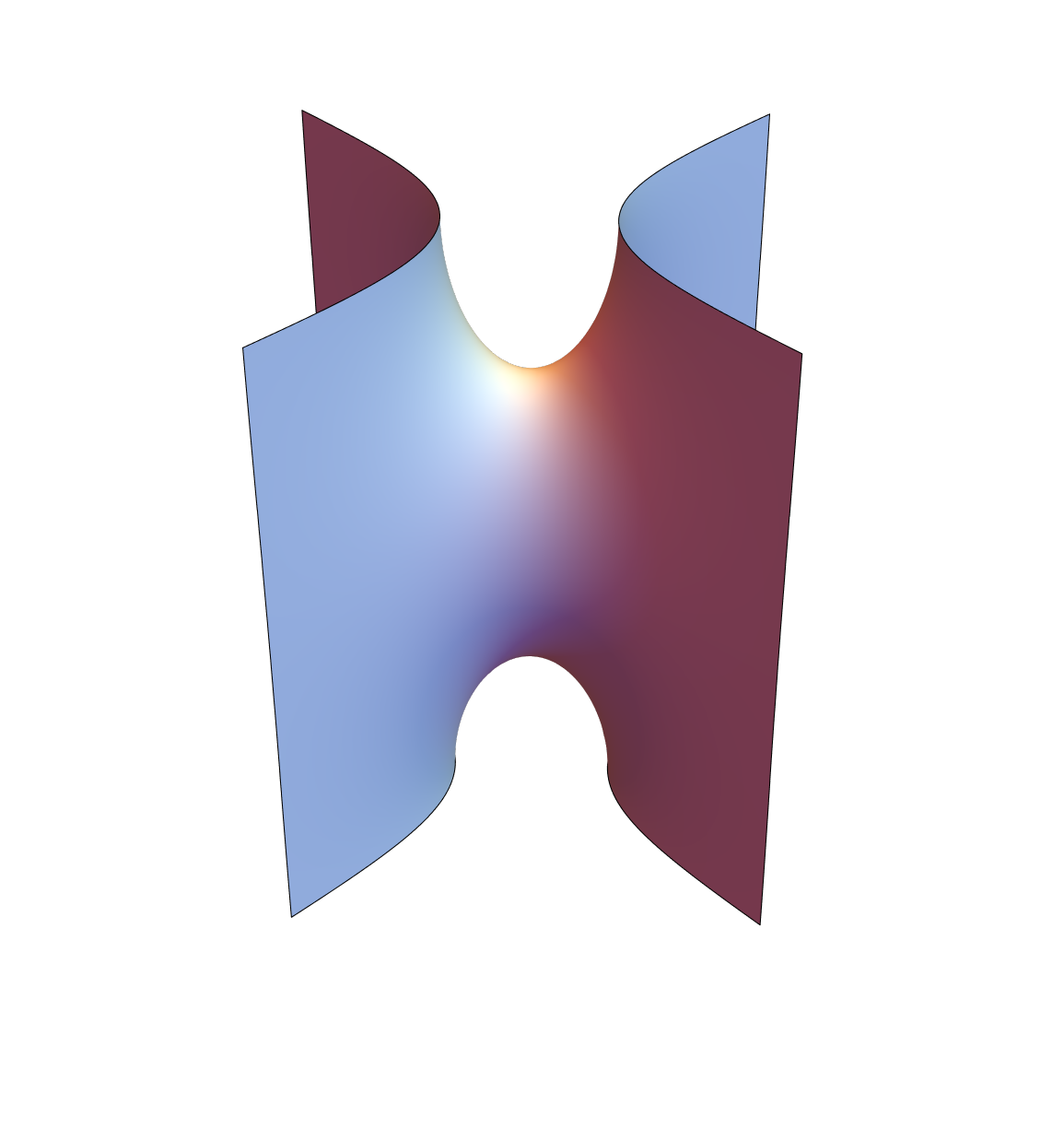}
    \includegraphics[width = .3\textwidth]{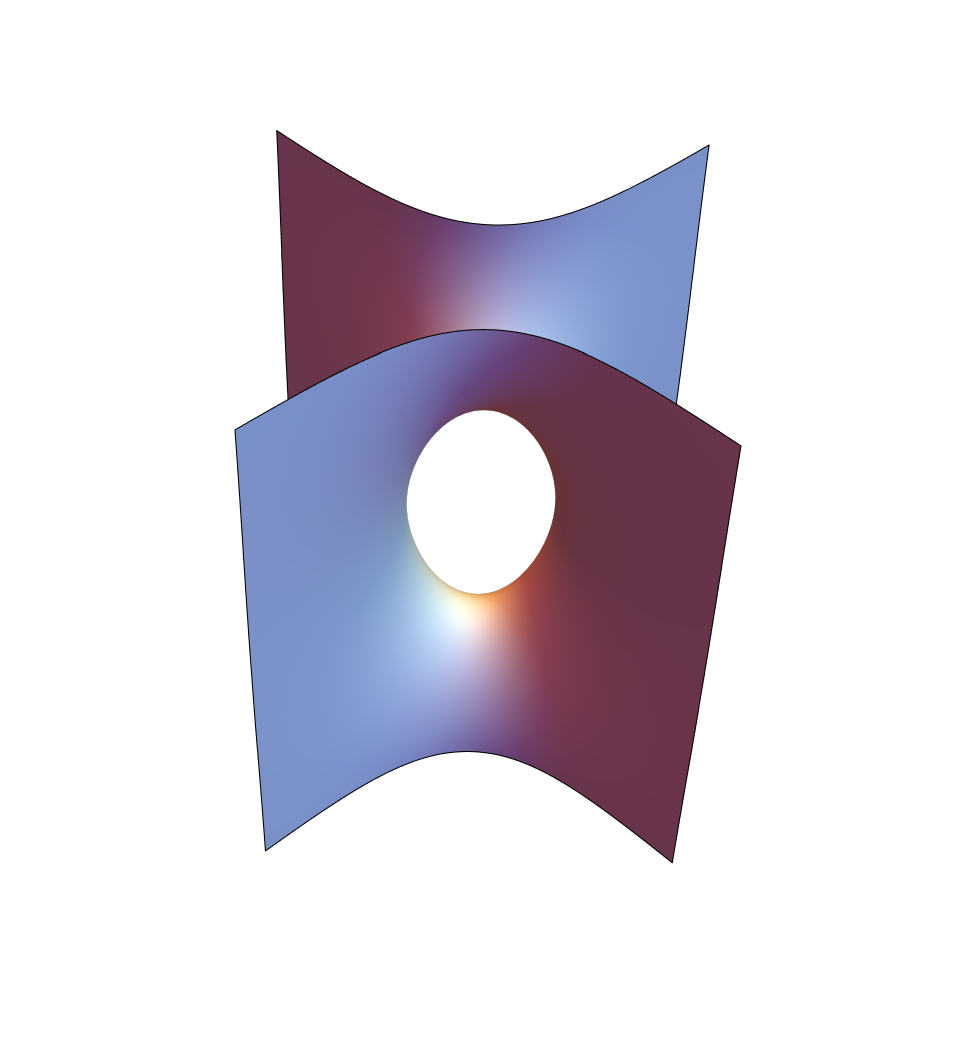}
    \caption{The fundamental domain of the second Scherk surface $\Sch$ embedded in $\mathbf{R}^3$, for the case of orthogonal planes. After the identification $z \sim z + 2\pi$ of the top and bottom edges, the topology of $\Sch$ becomes that of a sphere with four punctures. The cross gets resolved by a smooth transition in the internal space.}
    \label{fig:scherk1}
\end{figure}

Since the origin of the cross's infinite negative mode is its singularity, we might guess that the true bulge desingularizes the cross. In fact, there is a unique extremal surface in $\mathbf{R}^2\times  S^1$ that desingularizes the cross \cite{meeks2005}, and it is called the \textit{second Scherk surface} $\Sch$ \cite{scherk1835bemerkungen}; see Fig.\ \ref{fig:scherk1}.\footnote{In the mathematics literature, the second Scherk surface is usually defined as a minimal surface in $\mathbf{R}^3$, periodic in $z$, and hence the index is infinite. But in our case, we are compactifying the periodic direction and thus the index is finite.} It can be implicitly defined by
\be
\cos{z} = \cos^2{\phi} \cosh \left(\frac{x}{\cos{\phi}}\right) - \sin^2{\phi}\cosh \left(\frac{y}{\sin{\phi}}\right)
\ee
where $\phi$ is the half angle between the asymptotic planes. We can also write a parametric form for the Scherk surface using the Weierstrass--Enneper representation (see appendix \ref{app:WE-rep}). The topology of $\Sch$ is that of a sphere with four punctures, representing the four boundary points delimiting the subregion $\R$. The single scale in the problem is the radius of $ S^1$, and thus the size of the domain of resolution of the cross is parametrically controlled by this scale.

We can again use the trick of the Gauss map $n: \Sch\rightarrow  S^2$ to show that $\Sch$ has index 1. The Gauss map conformally maps the induced metric $h_{\mu\nu}$ onto the round metric on the sphere, $\tilde{h}_{\mu\nu} = e^{2 \omega} {h}_{\mu\nu}$. In this case the Scherk surface is mapped to the sphere with four punctures on the equator, $n(\Sch) =  S^2 \setminus \lbrace{\theta = \pi/2;\varphi = \phi,\pi-\phi,\pi+\phi,2\pi-\phi\rbrace} $, where $(\theta, \varphi)$ are the polar and azimuthal angles respectively and $\phi$ is the half-angle between the asymptotic planes.  We can apply the argument outlined in appendix \ref{app:B} to the Scherk surface and find that it is also index 1. 

Note that the radius $R$ of the $ S^1$, as the only scale in the problem, determines the negative eigenvalue, which is proportional to $1/R^2$. In the limit $R\to0$, the Scherk surface becomes simply the cross in $\mathbf{R}^2$, which as we've noted has index 1 and negative eigenvalue $-\infty$.

\subsubsection{AdS$_{d+1} \times Y$ }

\begin{figure}[h]
    \centering
    \includegraphics[scale = .35]{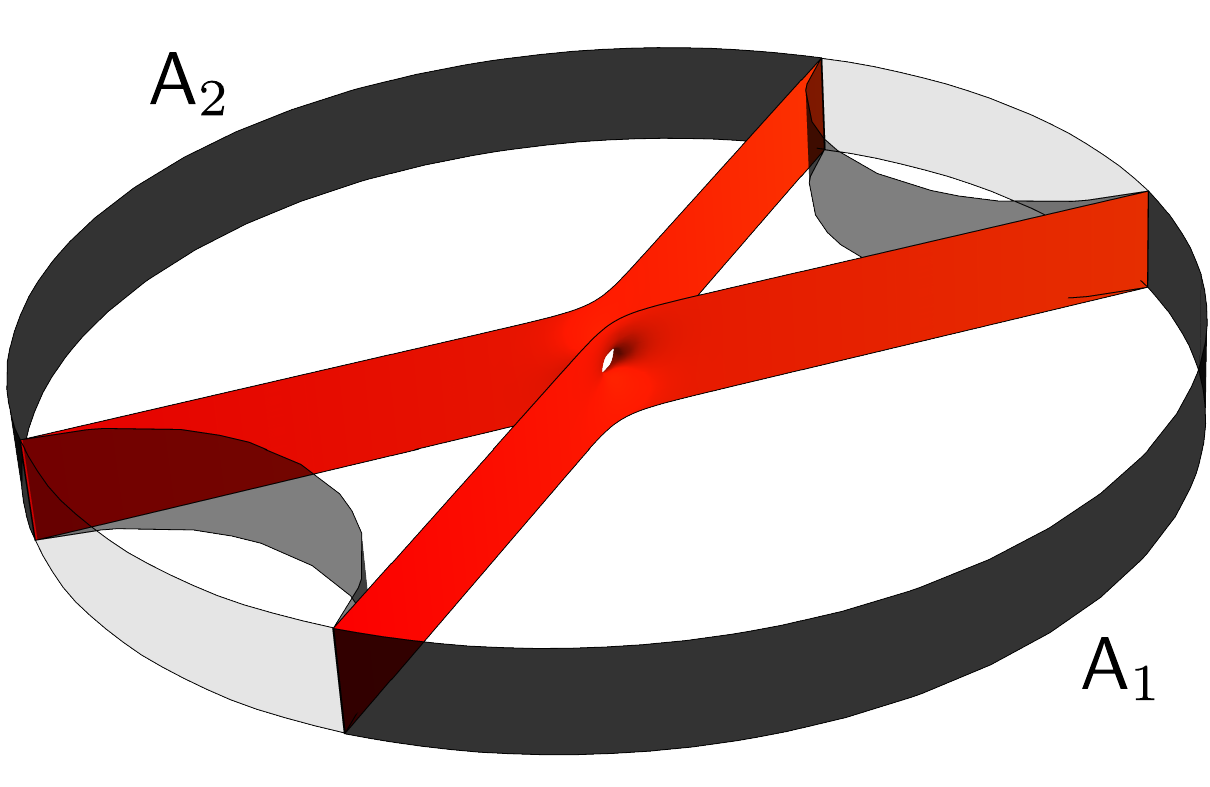}
    \caption{The bulge $X_{\R}^{b}$ in red is a `Scherk surface' $\Sch \subset \mathbf{H}^2\times  S^1$ which does not wrap the internal $ S^1$ and resolves the singular cross. The internal $ S^1$ is deconstructed in the figure, where the top and bottom edges of the strip need to be identified. The characteristic scale of resolution of the cross is set by the asymptotic size of $ S^1$, which in top-down holographic constructions is of the order of the AdS$_3$ scale.
    }
    \label{fig:ads3s1}
\end{figure}

The situation in AdS$_3 \times  S^1$ is expected to be qualitatively similar to the one presented above, with the AdS$_3$ cross resolved into a surface with the same topology as the Scherk surface; see Fig.\ \ref{fig:ads3s1}.

Consider now the more ``realistic'' case of AdS$_3\times  S^3\times {\bf T}^4$, with the $ S^3$ having the same radius as the AdS$_3$ and the ${\bf T}^4$ being much smaller. The cross is certainly not the true bulge, for the reasons given above. On AdS$_3\times  S^3$, it must get resolved into some non-singular surface that is not a product. It would be a very interesting exercise to try to find this surface. Since its characteristic scale and negative eigenvalue are set by the AdS radius, one could then safely ignore the ${\bf T}^4$.

In higher-dimensional AdS spacetimes, the ``naive bulge'' computed ignoring the extra dimensions is not singular and has a finite negative eigenvalue. For example, the eigenvalue $\lambda_0$ of the catenoid bulge in AdS$_4$ is plotted, in units where $\ell_{\rm AdS}=1$, in Fig.\ \ref{fig:negavtiveeigen}. If, in the presence of extra dimensions, there is a KK mode such that the total eigenvalue of the Jacobi operator is negative, then the naive bulge has index greater than 1 and is not the true bulge. No matter the size of the compact space, this will happen for sufficiently large boundary regions, since in the limit $z_0\to0$, $a_-$ goes to 0 and $\lambda_0$ go to $-\infty$. (Even if the naive bulge has index 1, it may not be the true bulge, as there may exist an index-1 non-product surface with smaller area.) It would be interesting to investigate this phenomenon quantitatively and attempt to find the true bulge, for example in the paradigmatic case of the AdS$_{5} \times S^5$ vacuum of the type IIB string, dual to the ground state of four dimensional $\mathcal{N}=4$ SYM on a spatial $S^3$.

\section{Pythons in  
excited states}
\label{sec:excitedstates}

In this section, we qualitatively describe the features of classical pythons in excited states of the holographic system. We discuss three main examples: orbifolds of AdS$_3$, Lin-Lunin-Maldacena (LLM) geometries, and exterior regions of black holes.

\subsection{AdS$_3$ orbifolds}

We start by considering orbifolds of the form AdS$_3/\Gamma$, where $\Gamma$ is a finite subgroup of isometries generated by an elliptic element $g$ of the diagonal $PSL(2,\mathbf{R})$ subgroup of isometries. The elliptic element $g$ has a fixed point in the bulk, which corresponds to a conical defect in the orbifold, of defect angle $\frac{2\pi}{n}$, where $n = |\Gamma|$. 

\begin{figure}[h]
\centering
\includegraphics[width=0.6\textwidth]{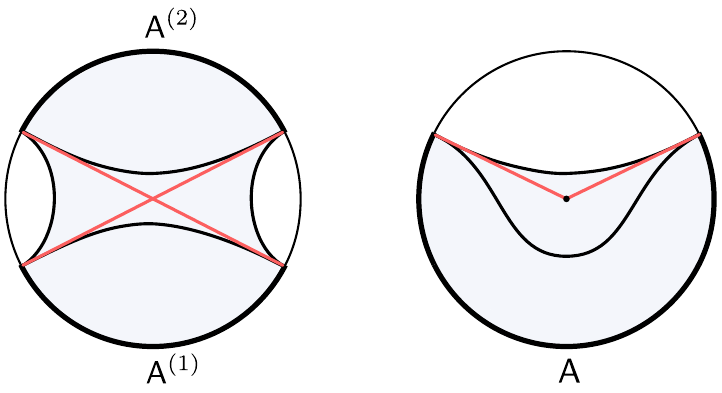}
\caption{On the left, covering space of the $\mathbf{Z}_2$ orbifold. On the right, AdS$_3/\mathbf{Z}_2$ orbifold. The entanglement wedge of $\R$ contains a python. The bulge $X^b_\R$ is the pair of red radial geodesics that intersect the conical defect. }
\label{fig:z2quotient}
\end{figure}

Consider a boundary interval $\R$ comprising more than half of the asymptotic boundary of the orbifolded space. In Fig.\ \ref{fig:z2quotient} we present the case $n=2$, together with the covering space. We see that the entanglement wedge of $\R$ in the orbifold contains a python. The bulge $X^b_\R$ in this case is somewhat peculiar since it consists of two radial geodesics that intersect the conical defect. According to the python's lunch conjecture, the conical defect can be reconstructed, albeit with exponential complexity.

\begin{figure}[h]
\centering
\includegraphics[width=0.7\textwidth]{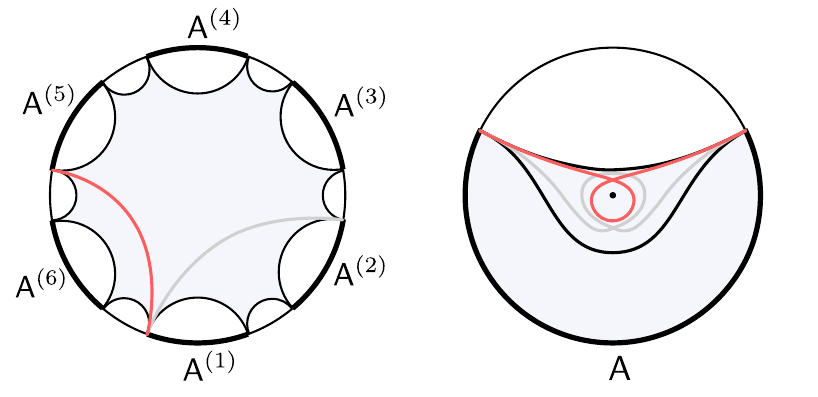}
\caption{On the left, covering space of the $\mathbf{Z}_6$ orbifold. On the right, AdS$_3/\mathbf{Z}_6$. The entanglement wedge of $\R$ contains a python. For $n>3$, there are two geodesics with exactly one self-intersecting point, and thus index 1. The bulge $X^b_\R$ in red corresponds to the smallest among these two geodesics.}
\label{fig:z6quotient}
\end{figure}

For $n\geq 3$ the bulge becomes less singular. In Fig.\ \ref{fig:z6quotient} we present the case $n=6$, together with the covering space. Again, we see that the entanglement wedge of $\R$ in the orbifold contains a python. The bulge $X^b_\R$ is however smooth, and consists of a self-intersecting geodesic that winds around the conical defect. 

In general, for $n\geq 3$, there will be $n$ locally extremal surfaces anchored to the endpoints of a boundary interval $\R$. These correspond to geodesics in the covering  AdS$_3$ space, whose endpoints are related by the action of $\Gamma$. In the orbifolded space, the extremal surfaces with one self-intersecting point will have index $1$ and will therefore be candidates to be the bulge. Given an interval $\R$ of opening angle $\pi<\theta<2\pi$, the constriction, RT, and bulge will correspond to the geodesics in the AdS$_3$ covering space with opening angles $\frac{\theta}{n}$, $\frac{2\pi -\theta}{n}$, and $\frac{2\pi +\theta}{n}$ respectively. The regularized lengths of these geodesics can be found in e.g.\ \cite{Balasubramanian:2014hda} and are given by\footnote{For $n=2$, the length of the bulge is constant, $L(X^b_\R) = 2\ell_{\rm AdS}\log ({2\ell_{\rm AdS}/\varepsilon})$.}
\begin{gather}
 L(X_\R^c) = 2\ell_{\rm AdS}\, \text{log} \left(\frac{2\ell_{\rm AdS}}{\varepsilon} \,\sin(\frac{\theta}{2n}) \right)  , \\
 L(X_\R) = 2\ell_{\rm AdS}\, \text{log} \left(\frac{2\ell_{\rm AdS}}{\varepsilon} \,\sin (\frac{2\pi -\theta}{2n}) \right),\\\
 L(X^b_\R) = 2\ell_{\rm AdS}\, \text{log} \left(\frac{2\ell_{\rm AdS}}{\varepsilon} \,\sin (\frac{2\pi +\theta}{2n})  \right),
\end{gather}
where $\varepsilon$ is a bulk IR regulator. As illustrated in Fig.\ \ref{fig:z6quotient}, it is interesting to note that for $n>3$ there is another index-$1$ extremal surface $X_{\text{index-1}}$ (which as a consequence of Lemma \ref{lem:intersect} of Sec \ref{sec:bulge} must intersect $X^b_\R$), with opening angle $\frac{4\pi -\theta}{n}$ in the covering space, and length corresponding to 
\be 
L(X_{\text{index-1}}) = 2\ell_{\rm AdS}\, \text{log} \left(\frac{2\ell_{\rm AdS}}{\varepsilon} \,\sin (\frac{4\pi -\theta}{2n})\right).
\ee
From Lemma \ref{lem:leastarea} of section \ref{sec:bulge}, the bulge is the minimal among the index-$1$ extremal surfaces. For $n>3$, and $\pi<\theta<2\pi$, $\sin (\frac{4\pi -\theta}{2n}) > \sin (\frac{2\pi +\theta}{2n})$, and therefore this surface is never the bulge, given that $L(X_{\text{index-1}})> L(X^b_\R)$. For $n>4$, there are additional self-intersecting geodesics with index greater than $1$ (due to multiple self-intersecting points) and are thus not candidates to be the bulge.

According to the PLC \eqref{eq:PLC}, the complexity to reconstruct the lunch from $\R$ scales exponentially with the exponent
\be 
\log \mathcal{C}({\mathcal{R}}) \sim  \dfrac{2c}{3} \log \dfrac{\sin (\frac{2\pi +\theta}{2n})}{\sin(\frac{\theta}{2n})}\;,
\ee 
where $c = 3\ell/2G$ is the Virasoro central charge of the dual CFT$_2$.

It is interesting to note that if the region $\R$ is taken to be full boundary, there is no python, given that there is no horizon in the bulk. However, by excising a single point from the boundary ($\theta \rightarrow 2\pi$ in the previous expressions), the entanglement wedge of $\R$ will now contain a python, and the complexity to reconstruct the lunch will grow exponentially, with exponent
\be
\log \mathcal{C}({\mathcal{R}}) \sim \dfrac{2c}{3} \log (2 \cos \frac{\pi}{n})\;.
\ee 
Note that for $n=3$ the complexity is not exponential in $c$, since the exponent vanishes. This can be related to the situation in which we excised three points in section \ref{sec:vacuumads3} from the full AdS$_3$ boundary circle, where we found the same result (in fact, in the covering space the configuration is precisely that of the right Fig.\ \ref{fig:ads3_excised_points}).\footnote{For $n=2$, the complexity is also zero in this limit, but there is no causal shadow. This is easy to see from the configuration in the covering space, where we excise two points (see the left case in Fig.\ \ref{fig:ads3_excised_points}).} For $n>3$, the complexity scales exponentially. A possible interpretation is that, with access to the full boundary, information about the conical defect, such as its deficit angle, can be reconstructed simple operators: these can involve measuring the holonomy of a spatial Wilson loop operator along the boundary $  S^1$. Removing a single point loses access to the loop operator, and thus to simple observables that have access to the conical defect.

\subsection{LLM geometries} 

The next class of states we consider are a subset of the the LLM geometries of \cite{Lin2004}. LLM states are of particular interest because there is a precise and one-to-one map between the boundary states and the bulk geometries, and there is some hope of understanding the bulk-to-boundary map quite explicitly.

The LLM spacetimes are a class of static, 1/2 BPS, asymptotically AdS$_5\times S^5$ solutions of type IIB supergravity. The metric of a constant-time slice can be described as a fibration of $S^3_\alpha\times S^3_\sigma$ over the upper half space of $\mathbf{R}^3$, where $\alpha$ and $\sigma$ are simply labels for the two $S^3$ factors. Let $z$ be one of the coordinates of $\mathbf{R}^3$, with the fibration defined on the $z\ge0$ half. A bounded region of the $z=0$ plane is painted ``black'', with the rest painted ``white''; different choices of black region yield different LLM solutions. As we approach a point on that plane, one of the three-spheres shrinks to zero size; at a black point, the $S^3_\alpha$ shrinks, while at a white point, the $S^3_\sigma$ shrinks. The upper half space is bounded asymptotically by a hemisphere, representing the conformal boundary of the spacetime. The metric, and a description of the dual state, can be found in \cite{Lin2004}.

The simplest example is ${\rm AdS}_5 \times  S^5$ itself, described by a black disc on the $z=0$ plane. Here the $S^5$ is described as $S^3_\sigma$ fibered over a disk, and the spatial slice of AdS$_5$ as $S^3_\alpha$ fibered over a half-line, with the disk and half-line together make up the upper half space.

\begin{figure}[h]
\centering
\includegraphics[width=0.4\textwidth]{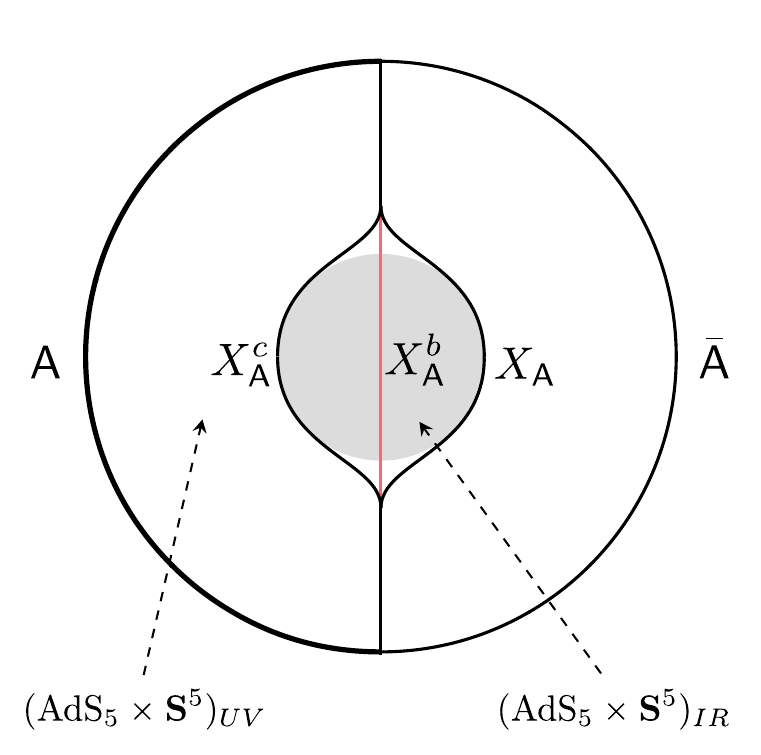}
\caption{Structure of the python on the LLM geometry.}
\label{fig:extremalLLM}
\end{figure}

The next simplest solution is described by a black annulus. A complete description is given in \cite{Lin2004,Balasubramanian:2017hgy}. If the annulus is sufficiently large, the geometry can be shown to interpolate between two versions of  ${\rm AdS}_5 \times  S^5$ with black and white regions (and thus the roles of the $ S^3$s) exchanged, as shown in Fig.\ \ref{fig:extremalLLM}. We denote the ``ultraviolet'' (UV) and ``infrared'' (IR) geometries, based on the relationship between the radial direction of AdS spaces and the energy scale of the dual theory. In the IR, $ S_{\sigma}^3$ is part of the AdS$_5$ and $ S_{\alpha}^3$ is part of the $ S^5$, the opposite of their roles in the UV part of the geometry.

Ref.\ \cite{Balasubramanian:2017hgy} argued that such geometries have an entanglement shadow. In particular they considered the RT surface for half the boundary. Due to the symmetry, there are two minimal surfaces with equal area, passing on either side of the IR region; neither one enters it. In this symmetric situation, it is ambiguous which surface is the RT surface and which one the constriction. There is also an extremal surface that does penetrate this region, namely the symmetric surface that bisects the $S^3_\alpha$ and wraps the other dimensions. It seems likely that these are the only extremal surfaces; if this is true, then Morse theory implies that the symmetric one has index 1.

If we break the symmetry by taking the entangling surface slightly off the equator of the boundary, so that the region $\R$ is slightly smaller than half the boundary, then the closer minimal surface is the smaller one and is the RT surface. In this configuration, there is no python, and the entanglement wedge does not include the IR region.

If we now expand $\R$ from slightly less than half the boundary to slightly more than half, the farther minimal surface becomes the RT surface $X_\R$ and the closer one becomes the constriction $X^c_\R$. The region between them, which includes the entire IR region, is a python. Thus the entanglement wedge jumps from not including the IR region to including all of it, reflecting the ability to begin to recover information about a complex state when one has access to more than half of the degrees of freedom of the theory \cite{Page:1993wv,Page:2013dx}. The symmetric surface presumably persists but is now slightly deformed; assuming again that there are no other extremal surfaces lurking in this geometry, this surface is the bulge $X^b_\R$.

A cartoon of the background and relevant surfaces is shown in Fig.\ \ref{fig:extremalLLM}. The exterior region represents the UV ${\rm AdS}_5\times  S^5$. The shaded region is the IR region encoding the excitation of the vacuum. The extremal surfaces are shown. This situation appears highly reminiscent of the situation for the AdS-Schwarzschild geometry, for which the true RT surfaces sit outside the horizon and an additional surface enters the horizon.  Indeed, the LLM geometries can be considered as a model of black holes; in particular simple local operators give no information as to the detailed structure of the state \cite{Balasubramanian:2005kk,Balasubramanian:2005qu}. 

{\subsection{Eternal black holes}

Another relevant situation arises in this context if we replace the dilute matter by an equilibrium black hole in AdS. Microscopically, we can consider a high-temperature thermofield-double state of two holographic CFTs, or a grand-canonical version of this state. Such a state is semiclassically dual to an eternal black hole in AdS, connecting two different asymptotic boundaries. 

\begin{figure}[h]
\centering
\includegraphics[width=0.3\textwidth]{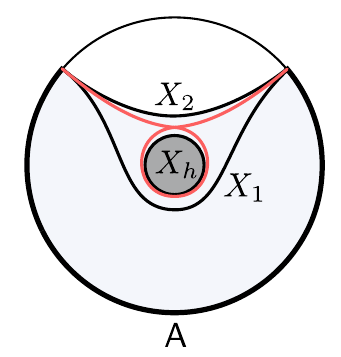}
\hspace{1.5cm}
\includegraphics[width=0.3\textwidth]{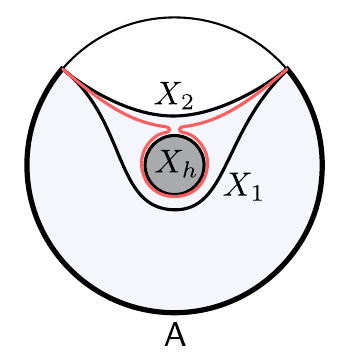}
\caption{Structure of the python on a thermal (or grand-canonical) state of the CFT. { On the left, the structure of the python in $2+1$ dimension, in which the bulge is a self-intersecting geodesic. On the right, the cross section of the higher dimensional python, in which the bulge is smooth and contains a catenoid-like neck.}}
\label{fig:thermalbulge}
\end{figure}

As illustrated in Fig. \ref{fig:thermalbulge}, we consider a disk subregion $\R$ of a single boundary component in this setup. For such an $\R$, there are two competing index-$0$ surfaces in the homology class of $\R$: the connected minimal surface $X_1$ and the disconnected minimal surface $X_{2} \cup X_h$. The compact surface $X_h$ corresponds to the horizon of the black hole. For large enough $\R$, there is a python: the RT surface is the disconnected surface, $X_\R = X_2 \cup X_h$, and the constriction is $X^c_\R = X_1$. As a consequence, there is a bulge surface $X_\R^b$ of index $1$ in between them. In $2+1$ dimensions the bulge corresponds to a self-intersecting geodesic \cite{Brown:2019rox}, while in higher dimensions, or when the compact dimensions are considered, the bulge $X_\R^b$ is a smooth surface; in higher dimensions it has a catenoid-like neck \cite{Hubeny:2013gta}. As $\R$ is taken to be the entire boundary, the python dissapears. In this limit, the entanglement wedge $\rew$ becomes the exterior of the black hole, which can be reconstructed in a simple way via appropriately smeared HKLL operators of the boundary.
}

\section{Pythons in the black hole interior}
\label{sec:bhint}

In this section, we will study states of multiple black holes with shared geometric interiors. In the holographic system, these correspond to entangled states of independent CFTs. For the purpose of this section, we will restrict to subregions $\R$ made up of entire compact boundary components.

\subsection{Two-boundary wormhole}\label{sec:two_bdry}

\begin{figure}[h]
    \centering
    \includegraphics[width = .55\textwidth]{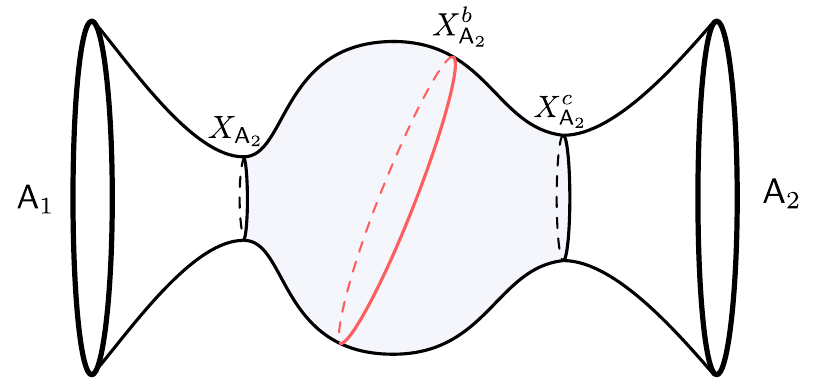}
    \caption{A two-sided black hole microstate with a python.}
    \label{fig:twosidedpython}
\end{figure}

We start by considering a bipartite state $\ket{\Psi} \in \mathcal{H}_1\otimes \mathcal{H}_2$ whose bulk description contains the conformal boundary $\R_1 \sqcup \R_2 $ and a two-sided python geometry on the time-symmetric slice, shown in Fig.\ \ref{fig:twosidedpython}.\footnote{Explicit examples of such states include the so-called {\it partially entangled thermal states} (PETS) in holographic CFTs \cite{Goel:2018ubv,Balasubramanian:2022gmo,Balasubramanian:2022lnw,Antonini:2023hdh,deBoer:2023vsm}.} Let us select the boundary component $\R_2$ as our subregion. We can assume without loss of generality that $\R_2$ contains the black hole interior in its entanglement wedge, as indicated in Fig.\ \ref{fig:twosidedpython}. The constriction $X^c_{\R_2}$ is simply the apparent horizon of the right black hole, while the RT surface $X_{\R_2} = X_{\R_1}$ is the apparent horizon of the left black hole. In the interior of the python region $\tilde{N}$ in between these two surfaces, there exists an index-$1$ bulge surface $X^b_{\R_2}$. 

According to the python's lunch conjecture, the amount of post-selection required to access the interior from $\R_2$ scales with the exponent
\be
\log \Co(\mathcal{R}_2) \sim \frac{1}{2}\left(S_{\text{gen}}(X^b_{\R_2})-S_{\text{gen}}(X^c_{\R_2})\right).
\ee

Next we consider applying the conjecture to the full boundary $\R = \R_1\R_2$. The RT surface of the full boundary is clearly empty, $X_\R = \emptyset$, which represents the fact that the two-sided state is pure. The constriction for $\R$ is simply the union of the two apparent horizons, $X_\R^c = X_{\R_2}\cup  X^c_{\R_2}$. At first we might guess that the bulge is just twice the $\R_2$ bulge, $2X_{\R_2}^b$; however, given that the index is additive under disjoint union, this surface has index 2, and is therefore not a candidate for the bulge. According to Lemma \ref{lem:leastarea} of section \ \ref{sec:bulge}, assuming that there is no other minimal surface in the interior, the true bulge is the minimal index-$1$ surface homologous to the entire boundary (or, equivalently, null-homologous). This must include $X^b_{\R_2}$, the only index-1 extremal surface; in order to obey the homology constraint, we must add another index-0 surface, the smallest of which is the apparent horizon of the left black hole; in all, we find $X^b_\R = X^b_{\R_2} \cup X_{\R_2}$. This implies
\be 
S_\text{gen}(X_\R^b) - S_\text{gen}(X_\R^c) = S_\text{gen}(X_{\R_2}^b) - S_\text{gen}(X_{\R_2}^c)\;.
\ee 
Therefore, according to the PLC \eqref{eq:PLC}, the 
complexities of reconstructing the interior region using just $\R_2$ or all of $\R$ are the same:
\be\label{eq:compsame}
\Co(\mathcal{R}_{12}) = \Co(\mathcal{R}_2)\;,
\ee 
where $\mathcal{R}_{12}$ is the recovery channel of the global bulk-to-boundary map $\mathcal{N}(\rho_{\rew})=  V \rho_\rew V^\dagger$, while $\mathcal{R}_2$ is the recovery channel of the restricted map $\mathcal{N}'(\rho_{\rew'})=  \text{Tr}_{\R_1}(V \rho_{\rew'}\otimes{\sigma_{\bar{\rew'}}} V^\dagger)$, which only involves acting on $\R_2$. Both recovery channels in \eqref{eq:compsame} need to be understood as restricted to states that differ in the interior region $\tilde{N} \subset \rew \cap \rew'$, given that the entanglement wedge $\rew$ strictly contains $\rew'$.

{

As remarked in the introduction, we find that this behavior of the  
complexity is in contrast to the one expected for the class of two-sided states motivating the PLC in \cite{Brown:2019rox}. The states in \cite{Brown:2019rox} were by assumption obtained dynamically from the unitary time evolution coupling the two subsystems, starting from a simple entangled state without a bulge and evolving it for a relatively short time. For those states, with access to the full boundary, the reconstruction was by assumption simple; one just needed to undo the unitary time-evolution. In that case, the individual tensors of the tensor network modelling the semiclassical state are correlated to produce the effect that, when $\R$ is taken to be the full boundary, the python effectively dissapears.

In contrast, in this case of two-sided geometric states, using the PLC for the full boundary, we find that the 
reconstruction using both sides is just as complex as the one restricted to subsystem $\R_2$. Applying the conjecture to $\R_1\R_2$ amounts to the implicit assumption that the tensor network structure modelling the geometry is generic, so different tensors at different points appear uncorrelated, at least at leading order. This makes the bulge for two-sided geometric microstates not go away even when the full system $\R_1 \R_2$ is considered.  The assumption of uncorrelated tensors is reasonable since, after all, these states are not prepared unitarily by the collapse of matter, and generic tensor networks with large bond dimension are successful in capturing the RT formula \cite{Hayden:2016cfa}.

What is really peculiar is that, according to the PLC \eqref{eq:PLC}, there is no computational advantage at all in adding the second boundary $\R_1$ in order to reconstruct the black hole interior. This is certainly a non-trivial feature of the conjecture applied to these states, given that, with access to both boundaries, there are known simple bi-local holographic probes that contain some information about the interior. An example is the EPR correlation function $\bra{\Psi}\mathcal{O}_{\R_1}\mathcal{O}_{\R_2}\ket{\Psi}$ of heavy scalar primary $\mathcal{O}$, which presumably contains information about the length of the wormhole. However, according to the tensor network/geometry intuition, these probes will not help to decode the local physics of the lunch, at least deep in the black hole interior. The optimal way to reconstruct the lunch is to follow a minimax level set path, which involves leaving subsystem $\R_1$ intact, and overcoming the postselection solely acting on $\R_2$. 
}

\subsection{Multi-boundary wormhole}

We will now show that this feature of the PLC extends to multi-boundary states with connected wormholes. To be concrete, we consider time reflection-symmetric microstates of a family of three dimensional black holes. Microscopically, the states in question live in the Hilbert space of $n$ copies of a putative holographic two-dimensional CFT, $\ket{\Psi} \in \mathcal{H}_1\otimes \ldots\otimes \mathcal{H}_n$. The states possess a semiclassical description, with initial data specified at the moment of time-symmetry of the spacetime, $\Sigma^n_g$, which consists of a Riemann surface of genus $g$ and $n$ boundaries. We now give a brief summary of how these states are constructed (see e.g. \cite{PhysRevD.53.R4133,Brill:1998pr,Krasnov:2000zq,Skenderis:2009ju} for details).\footnote{Some of these states are prepared via suitable Euclidean CFT path integrals. However, we will not worry on how they are prepared, simply we view them as valid state vectors in Hilbert space.}

The constraint equations of AdS$_3$ gravity require the metric on $\Sigma^n_g$ to be the unique constant negative curvature metric on this Riemann surface. Mathematically, the standard way to construct this metric is to uniformize the Riemann surface on the hyperbolic disk $\Sigma^n_g = \mathbf{H}^2/\Gamma$ via a discrete subgroup $\Gamma$ of $PSL(2,\mathbf{R})$ isometries, generated freely by hyperbolic elements, which leads to a smooth bulk metric. The group $\Gamma$ is known as a Fuchsian group of the second kind (for $n=0$ this reduces to the standard Fuchsian uniformization of compact Riemann surfaces of genus $g>1$). 

The Lorentzian evolution of this initial data can be constructed as follows. Observe that $PSL(2,\mathbf{R})$ can be extended as the diagonal subgroup of $SO(2,2) \simeq SL(2,\mathbf{R})_L\times SL(2,\mathbf{R})_R/\mathbf{Z}_2$ of simply connected isometries of AdS$_3$. This is essentially the subgroup which commutes with time-reflection symmetry and thus preserves the time reflection-symmetric hyperbolic slice which uniformizes $\Sigma^n_g$. Therefore, a complete vacuum spacetime solution can be constructed simply from the quotient $\widehat{\text{AdS}_3}/\Gamma$, which respects the metric of $\Sigma^n_g$ at the time-reflection symmetric slice. Here $\widehat{\text{AdS}_3}$ represents the causal wedge in AdS$_3$ of the time reflection-symmetric boundary circle, where the fixed points of $\Gamma$ have been removed. Specifically, the spacetime metric can be written locally in FRLW coordinates, as
\be 
\text{d}s^2 = -\text{d}t^2 + \cos^2\left(t/\ell_{\rm AdS}\right) \, (\text{d}\Sigma_g^n)^2\;,
\ee 
where $(\text{d}\Sigma_g^n)^2$ is the constant negative curvature metric on $\Sigma_g^n$. For more details, we refer to the reader to \cite{PhysRevD.53.R4133,Brill:1998pr,Krasnov:2000zq,Skenderis:2009ju,Balasubramanian:2014hda,Maxfield:2014kra,Maloney:2015ina} and references therein.

The simplest family of states constructed this way is $\ket{\Sigma^2_0}$, which are specified by a hyperbolic Riemann surface $\Sigma^2_0$ with annulus topology. This surface is uniformized as $\Sigma^2_0= \mathbf{H}^2/\Gamma$ by a Fuchsian group $\Gamma$ generated by a single hyperbolic element $\Gamma = \langle g \rangle$. The annulus $\Sigma^2_0$ arises naturally as the fundamental domain of $\Gamma$. The metric in this case can be written explicitly, 
\be 
\text{d}s^2|_{\Sigma^2_0} = \text{d}\rho^2 + \frac{L^2}{(2\pi)^2} \cosh^2(\rho /\ell_{\rm AdS}) \text{d}\phi^2\;,
\ee
for $\rho \in \mathbf{R}$ and $\phi \in [0,2\pi)$. This is simply the initial data of the BTZ black hole, obtained in a different way. These states contain a single modulus, the length $L$ of the horizon, which is the minimal closed geodesic on $\Sigma^2_0$.\footnote{Additionally, there is an infinite set of closed geodesics, of lengths $L_k = |k|L$, labelled by the winding number around the horizon, which is in one-to-one correspondence with the conjugacy classes of $\Gamma$, that is, $[g^k]$ for $k\in  \mathbf{Z}$.} The length of the horizon $L$ naturally determines the ADM energies of the corresponding state. In this case, these states do not contain pythons --- the Einstein-Rosen bridge has vanishing volume on $\Sigma^2_0$.

For $n> 2$ or $g> 0$, the rest of the states include pythons in the black hole interior. For given values of $(n,g)$ the states are parametrized by moduli in (discrete quotients of) the Teichm\"{u}ller space $\mathcal{T}_{g,n}$ of Riemann surfaces of genus $g$ and $n$ boundaries. The natural way to parametrize the moduli space of $\Sigma_g^n$ is to cut the python region bounded by closed geodesics into pairs of pants, and use the so-called Fenchel-Nielsen coordinates to glue these pairs of pants together. In this way, it is easy to see that total number of moduli is $6g-6+3n$. In our case of concern, we will fix the length of each apparent horizon $L_i$ for $i=1,...,n$, and in this way fix the coarse-grained entropy of each boundary. Moreover, for the purpose of this discussion, we will restrict to $g=0$ and $n> 2$, where the number of additional moduli is $2n-6$.

\subsection{Three-boundary wormhole}

\begin{figure}[h]
    \centering
    \includegraphics[width = .45\textwidth]{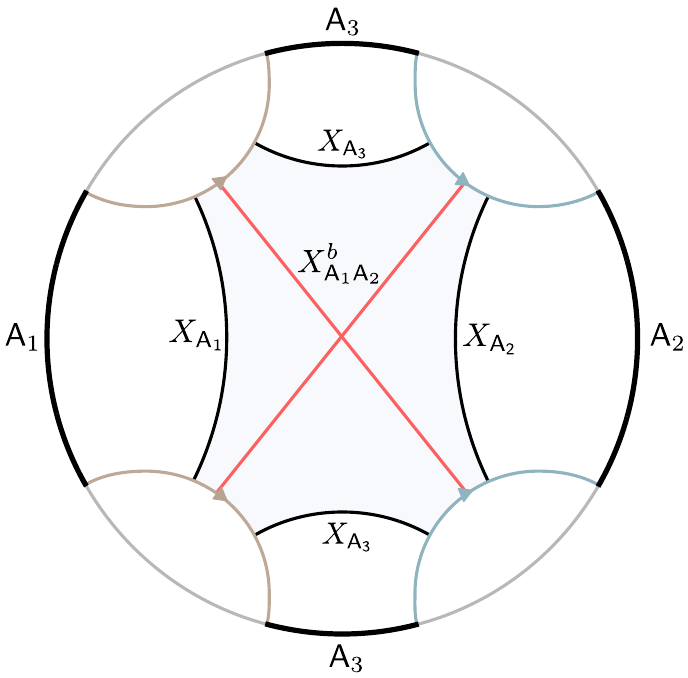}
    \hspace{1.5cm}
    \includegraphics[width = .4\textwidth]{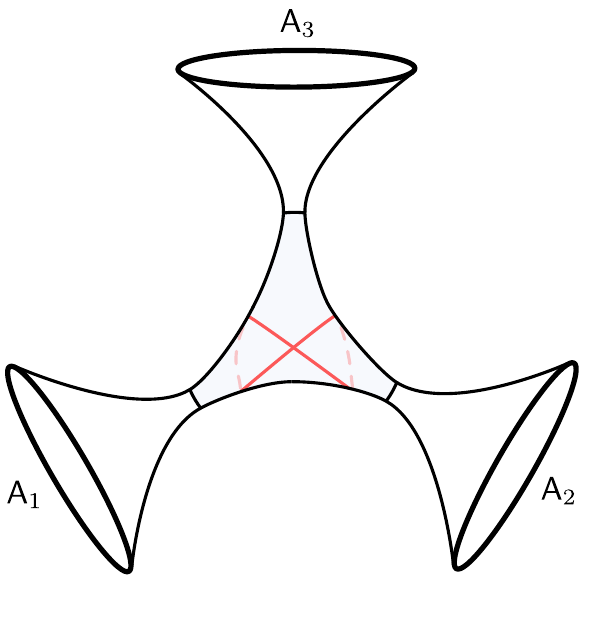}
    \caption{On the left, fundamental domain of the Fuchsian group $\Gamma$ generated by two elements uniformizing the three-boundary wormhole $\Sigma^3_0$ with equal horizon moduli, $L_1 = L_2 = L_3$. On the right, illustration of the three-boundary wormhole.}
    \label{fig:3bdy}
\end{figure}

We now consider three-boundary wormhole states $\ket{\Sigma^3_0}$. In order to motivate the general claim, let us set $L_1=L_2=L_3$ for the time being. The corresponding states have discrete $\text{Sym}(3)$ permutation symmetry. Fixing the lengths of the horizons determines the moduli of $\Sigma^3_0$ completely. A representation of the fundamental domain of $\Gamma$ is shown in Fig.\ \ref{fig:3bdy}. 

Index-$1$ extremal surfaces on $\Sigma^3_0$ consist of self-intersecting geodesics with exactly one crossing point, where crossing is only permitted between neighbouring geodesic segments. As for the case of vacuum AdS$_3$ of section \ref{sec:vacuumpython}, it is only in this case that there are just two deformations that resolve the cross and reduce the area of the surface, so the surface has index 1. For multiple geodesic segments meeting at the same point, or for multiple crossing points, there are more than two area-reducing deformations, so the index is always greater than 1. This leaves little topological freedom to determine what the index-1 surfaces are in $\Sigma^3_0$. Namely, bulge candidates will necessarily wrap around two of the constrictions. In Fig.\ \ref{fig:3bdy} we have represented $X^b_{\R_1\R_2}$, the extremal index-$1$ surface homologous to $\R_1\R_2$.

Given this, we consider different boundary subregions $\R$, consisting of different numbers of connected boundary components. We list all of the relevant extremal surfaces for the PLC in table \ref{tab:3bdtable}, which up to permutations, determine all the boundary-homologous pythons for these states. 
\begin{table}[h]
    \centering
    \begin{tabular}{|c|c|c|c|c|}
        \hline 
        $\R$ & RT  &  constriction & bulge & $8 G \log \mathcal{C}(\mathcal{R})$\\ \hline
        $\R_1$ & $X_{\R_1}$ & -  & - & - \\ \hline
        $\R_1 \R_2$ & $X_{\R_3}$ & $X_{\R_1}\cup X_{\R_2}$  & $X^b_{\R_1\R_2}$ & $L_{12}- L_3$\\ \hline
        $\R_1  \R_2  \R_3$ & $\emptyset$ & $X_{\R_1}\cup X_{\R_2} \cup X_{\R_3}$  & $X^b_{\R_1\R_2} \cup X_{\R_3}$ & $L_{12}- L_3$ \\ \hline
    \end{tabular}
    \caption{Relevant extremal surfaces in the python for each boundary region. For $\R = \R_1 \R_2$ there are two additional bulge candidates: $X^b_{\R_1\R_3} \cup  X_{\R_1}$ and $X^b_{\R_2\R_3} \cup X_{\R_2}$. Both of these surfaces have larger area than $X^b_{\R_1\R_2}$, and are therefore not the bulge of the python according to Lemma 2 of section  \ref{sec:bulge}. For $\R = \R_1\R_2\R_3$, there are five additional candidate bulge surfaces: $X^b_{\R_1\R_3} \cup X_{\R_2}$ and $X^b_{\R_2\R_3}\cup X_{\R_1}$, with the same area as $X^b_{\R_1\R_2} \cup X_{\R_3}$, so any of these three surfaces can be considered equally. Moreover, the index-1 surfaces $X^b_{\R_1\R_2}\cup X_{\R_1}\cup X_{\R_2}$, $X^b_{\R_1\R_3}\cup X_{\R_1}\cup X_{\R_3}$ and $X^b_{\R_2\R_3}\cup X_{\R_2}\cup X_{\R_3}$ have larger area, and thence they do not correspond to the bulge.}
    \label{tab:3bdtable}
\end{table}

We observe that, in accordance with the discussion of section \ref{sec:breakisom}, the bulge for the full boundary breaks the $\text{Sym}(3)$ permutation symmetry of the state $\ket{\Sigma_0^3}$ into a $\text{Sym}(2)$ subgroup. We also evaluated the 
complexity to reconstruct the lunch for each subregion in table \ref{tab:3bdtable}. We observe that the bulge for the three boundaries is the union of the bulge for the two boundaries and the constriction of $\R_3$. This implies that the complexity of reconstructing the lunch is independent of whether $\R_3$ is used or not,  
    \be\label{eq:3bdycomplexity}
    \Co(\mathcal{R}_{123}) = \Co(\mathcal{R}_{12}) \;,
    \ee 
where $\mathcal{R}_{123}$ is the global, unrestricted recovery map, while $\mathcal{R}_{12}$ is the map restricted to $\R_1\R_2$. The minimax foliation indicates that the optimal way to reconstruct the lunch is to leave $\R_3$ intact, and to only act on $\R_{1}\R_{2}$. 

We now want to study how this feature generalizes to the case in which $L_1,L_2,L_3$ are three general moduli. To do this we will use a one-to-one correspondence between closed geodesics in $\Sigma_{g}^n$ and conjugacy classes in $\Gamma$. The length of the geodesic associated to the conjugacy class of the group element $g$ is determined by
\be\label{eq:lengthg}
L = 2\cosh^{-1}\left|\dfrac{\text{Tr}g}{2}\right|\;,
\ee 
in the representation $g = \begin{psmallmatrix} a&b\\c&d\end{psmallmatrix}$ of the $PSL(2,\mathbf{R})$ isometry, acting by fractional linear transformations $z\rightarrow \frac{az+b}{cz+d}$, of the Poincar\'{e} upper half plane model of $\mathbf{H}^2$. Namely, this is the geodesic that connects the two fixed points of $g$ at the $\text{Im}(z) =0$ boundary, which becomes closed, and stays smooth up to crossing points, by virtue of the fact that $g$ acts freely and properly discontinuously in $\mathbf{H}^2$, for any $g \in \Gamma$. 

The group $\Gamma$ for the three-boundary wormhole is freely generated by the elements
\be 
g_1 = \left(\begin{matrix} \cosh \frac{L_1}{2}&\sinh \frac{L_1}{2}\\\sinh \frac{L_1}{2}&\cosh \frac{L_1}{2}\end{matrix}\right) \hspace{.5cm}, \hspace{.5cm}g_2 = \left(\begin{matrix} \cosh \frac{L_2}{2}&e^{\alpha}\sinh \frac{L_2}{2}\\e^{-\alpha}\sinh \frac{L_2}{2}&\cosh \frac{L_2}{2}\end{matrix}\right)\hspace{.5cm}\;.
\ee 
The parameter $\alpha$ controls the separation between the $g_1$ and $g_2$ semicircles delimiting the fundamental domain of $\Gamma$. For these circles not to intersect each other, we must impose the constraint
\be\label{eq:conditionthreebdy}
e^{\alpha}> \coth\frac{L_1}{4} \coth\frac{L_2}{4}\;.
\ee 

Given this choice of generators, the third horizon is associated to the conjugacy class of the group element $g = g_1 g_2^{-1} \in \Gamma$. From \eqref{eq:lengthg} the length of the third horizon is determined in terms of the previous three moduli,
\be\label{eq:length3threebdy} 
L_3 = 2 \cosh^{-1}\left(\cosh \alpha \sinh \frac{L_1}{2} \sinh \frac{L_2}{2}-\cosh \frac{L_1}{2} \cosh \frac{L_2}{2} \right).
\ee 
It is possible to check that \eqref{eq:conditionthreebdy} is equivalent to the condition $L_3 >0$.

The index-1 surface $X^b_{\R_1\R_2}$ will correspond to the closed (self-intersecting) geodesic associated to the conjugacy class of the group element $g = g_1 g_2 \in \Gamma$. Using \eqref{eq:lengthg} again, we arrive at the length of the index-1 surface
\be 
L_{12} = 2 \cosh^{-1}\left(\cosh \alpha \sinh \frac{L_1}{2} \sinh \frac{L_2}{2}+\cosh \frac{L_1}{2} \cosh \frac{L_2}{2} \right).
\ee 
Substituting $\cosh \alpha$ in terms of the length of the horizons from \eqref{eq:length3threebdy}, we obtain the simple relation for the length of $X^b_{\R_1\R_2}$ in terms of the three horizon lengths
\be 
\cosh \frac{L_{12}}{2} = 2 \cosh \frac{L_1}{2} \cosh \frac{L_2}{2} + \cosh \frac{L_3}{2}\;.
\ee 
The expressions for $L_{13}$ and $L_{23}$ follow from simple permutation of the indices in this formula. It is easy to check that the length of the index-1 surfaces in the lunch is always larger than the horizon lengths, $L_{ij}> L_k \;\forall\; i,j,k \in \lbrace 1,2,3\rbrace$ with $i<j$.

Consider $L_1 \leq L_2 \leq L_3$ without loss of generality. In this case, the three index-1 surfaces satisfy 
\be
L_{12}\leq L_{13}\leq L_{23}\,.
\ee

Moreover, assume that $L_3> L_1 + L_2$, so that $\R_3$ can itself access the interior.\footnote{In the opposite case where $L_3< L_1 + L_2$, the interior can only be accessed with two boundaries and the complexity to reconstruct the lunch with the complete holographic system $\R_1\R_2\R_3$, $\Co(\mathcal{R}_{123})$, is simply the minimal amongst the complexities of reconstructing it with two boundaries, i.e. $\Co(\mathcal{R}_{123}) = \text{min}\lbrace \Co(\mathcal{R}_{12}),\Co(\mathcal{R}_{13}),\Co(\mathcal{R}_{23})\rbrace$. Any of the three quantities can be minimal, depending on the moduli.} In this case it is straightforward to verify the following relations
\begin{gather}
L_{13}\geq L_{12}+L_1 \Leftrightarrow L_3 \geq  L^+_{12}\;, \label{eq:ineq2}\\
L_{23} \geq L_{12}+L_2 \Leftrightarrow L_3 \geq L^+_{21}\label{eq:ineq3}\;,
\end{gather}
where 
\begin{multline}
L^+_{ij} = \\ 2\cosh^{-1}\left[\cosh \frac{L_i}{2}\cosh \frac{L_j}{2} \left(2\cosh L_i -1 + \sqrt{\coth^2\frac{L_i}{2}\coth^2\frac{L_j}{2} + 4\cosh L_i (\cosh L_i -1)}\right) \right] .
\end{multline}
It is also easy to show that $L_{1}+L_2\leq L_{12}^+ \leq L_{21}^+$.

\begin{table}[h]
    \centering
    \begin{tabular}{|c|c|c|c|c|}
        \hline 
        $\R$ & RT  &  constriction & bulge & $8 G\log \Co(\mathcal{R})$\\ \hline
        $\R_3$ & $X_{\R_1}\cup X_{\R_2}$ &  $X_{\R_3}$ &  $X^b_{\R_1\R_2}$ & $L_{12}-L_3$\\ \hline
        $\R_1\R_3$ & $X_{\R_2}$ &  $X_{\R_1}\cup X_{\R_3}$ &  $X^b_{\R_1\R_2} \cup X_{\R_1}$& $L_{12}-L_3$\\ \hline
        $\R_2\R_3$ & $X_{\R_1}$ & $X_{\R_2}\cup X_{\R_3}$  & $X^b_{\R_1\R_2} \cup X_{\R_2}$& $L_{12}-L_3$\\ \hline
        $\R_1 \R_2\R_3$ & $\emptyset$ & $X_{\R_1}\cup X_{\R_2} \cup X_{\R_3}$  & $X^b_{\R_1\R_2} \cup X_{\R_1} \cup X_{\R_2}$ & $L_{12}-L_3$\\ \hline
    \end{tabular}
    \caption{Relevant extremal surfaces in the python for each boundary region for the general case of different moduli with $L_3 > L_{21}^+$. The complexity to reconstruct the lunch is constant for any choice of $\R$.}
    \label{tab:3bdtableasym}
\end{table}

Assume that $L_3 \geq L_{21}^+$. Given \eqref{eq:ineq2} and \eqref{eq:ineq3}, we can directly evaluate the bulge for all of the boundary subregions. In table \ref{tab:3bdtableasym}, we list all possible pythons, with their relevant extremal surfaces and the complexity to reconstruct the lunch, for any choice of $\R$. Again, we find that when $L_3 \geq L_{21}^+$, the complexity to reconstruct the interior is independent of whether $\R_1$ or $\R_2$ is included in the reconstruction,
\be
\Co(\mathcal{R}_{123}) = \Co(\mathcal{R}_{13}) = \Co(\mathcal{R}_{23}) = \Co(\mathcal{R}_{3}) \;,
\ee
since the minimax foliation of the lunch leaves the horizons $X_{\R_1}$ and $X_{\R_2}$ intact. In the limit $L_1,L_2,L_3\rightarrow \infty$, formulated in terms of generalized entropies, the condition $L_3 \geq L_{21}^+$ becomes
\be 
S_{\text{gen}}(X_{\R_3})\gtrsim S_{\text{gen}}(X_{\R_1}) + 3S_{\text{gen}}(X_{\R_2})\,
\ee 
Note that this means that the fraction of the coarse-grained entropy carried by $\R_3$ must be at least $\frac{2}{3}$ of the total entropy in order for this effect to take place. In that particular case $\R_1$ and $\R_2$ each carry $\frac{1}{6}$ of the total coarse grained entropy of the state.\footnote{In the pinching limit $L_1\rightarrow 0$ one does not recover the result for the two-boundary lunch since all of the bulges in table \ref{tab:3bdtableasym} wrap the horizon $X_{\R_1}$.}

In general, for $L_3 \leq L_{21}^+$, the complexity will not completely plateau as a function of the number of boundaries in $\R$. This means that the reconstruction of the interior with three boundaries will be strictly simpler than with $\R_3$, i.e. $\mathcal{C}(\mathcal{R}_{123}) < \mathcal{C}(\mathcal{R}_{3})$. However, due to the topological constraints of the index-1 surfaces, it will always be true that the simplest way to reconstruct the lunch with two boundaries will be the optimal way to reconstruct the lunch with three,
\be 
\Co(\mathcal{R}_{123}) = \text{min}\lbrace \Co(\mathcal{R}_{12}),\Co(\mathcal{R}_{13}),\Co(\mathcal{R}_{23})\rbrace\;.
\ee 

These results suggest the following generalization: for geometric black hole microstates, the complexity to reconstruct the interior plateaus after a certain amount of entropy is included, in the form of a single black hole. After this point, adding more black holes into the boundary system in order to reconstruct their shared interior does not help --- the optimal reconstruction leaves these additional black holes intact. It is tempting to conjecture that, for an $n$-boundary wormhole, the single black hole has an entropy at least a fraction $\frac{n-1}{n}$ of the total entropy in order for the complexity to plateau. This includes the case of the PETS for $n=2$. We shall not attempt to provide a proof of this in this paper.

The complexity plateau phenomenon discussed in this section is closely analogous to the non-extensivity of the log-complexity for black branes discussed in subsection \ref{sec:planar} above. In both cases, the effect occurs because, for sufficiently large subsystems, the bulge coincides (exactly or approximately) with the constriction. We expect this to be a general phenomenon, and points to a surprising aspect of the complexity of reconstruction.

\subsection{Generalization for $n$-boundary wormhole}

For $n>3$, the multi-boundary wormhole states $\ket{\Sigma_0^n}$ include a landscape of closed minimal surfaces in the black hole interior. There are also additional index-1 surfaces that wrap more than two horizons at the same time. This makes the situation vastly more complicated given that the python includes multiple lunches. Moreover, in these cases, the minimal surfaces in the interior generally intersect each other. 

Therefore, we must generalize the procedure specified in \eqref{eq:PLC2} to determine the complexity to reconstruct the lunch, to include situations where multiple choices of the set $\mathcal{S}$ of non-intersecting adjacent minimal surfaces are possible. The proposed generalization is to find the foliation with the least amount of postselection from the following steps:
\begin{enumerate}
    \item Select a set $\mathcal{S}$ of non-interesecting homologous adjacent minimal surfaces in $N$ such that $X_\R, X_\R^c \in \mathcal{S}$.
    \item Find the bulges $X^{b,i}_\R$ as the maximin surfaces in between each pair of adjacent minimal surfaces in $\mathcal{S}$.
    \item Compute the complexity of reconstructing the lunch given the discrete set of bulges and minimal surfaces associated to this choice of $\mathcal{S}$, namely using \eqref{eq:PLC2}.
    \item Minimize the complexity over the choice of $\mathcal{S}$: 
    \be\label{eq:PLC3}
    \Co(\mathcal{R}) \sim \min\limits_{\mathcal{S}}\max\limits_{i<j} \left\lbrace \, \exp\left( \dfrac{S_{\text{gen}}(X_i^b)-S_{\text{gen}}(X_j)}{2} \right) \right\rbrace \;.
    \ee  
\end{enumerate}
This last step can be viewed as a slight generalization of the conjecture \eqref{eq:PLC2}, for cases in which multiple choices of $\mathcal{S}$ exist on $N$.

\begin{figure}[h]
    \centering
    \includegraphics[width = .45\textwidth]{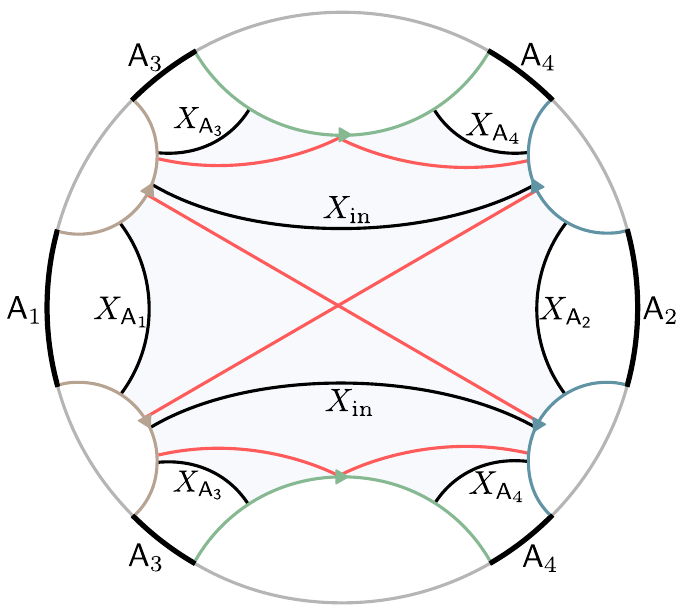}
    \hspace{1cm}
    \includegraphics[width = .45\textwidth]{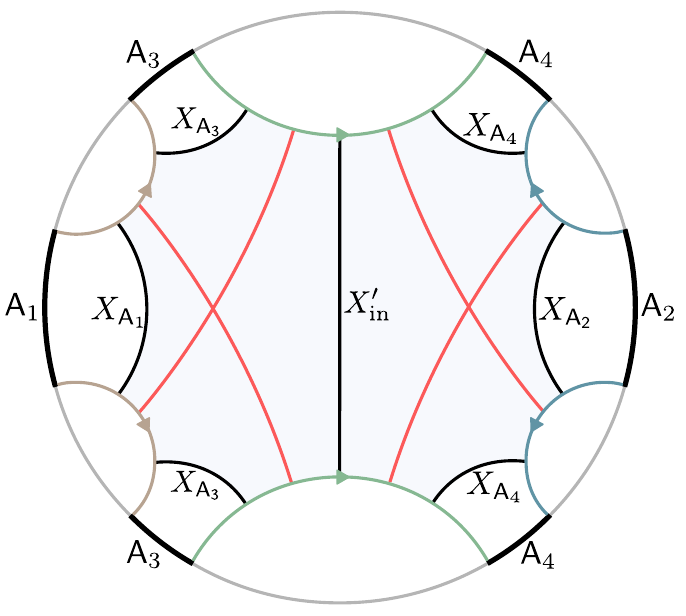}
    \caption{The fundamental domain of the Fuchsian group $\Gamma$ uniformizing the four-boundary wormhole $\ket{\Sigma^4_0}$. The additional two moduli can be taken as the length and twist parameter of e.g. $X_{\text{in}}$ or $X_{\text{in}}'$. On the left, given this choice of interior minimal surface, the python for $\R_1\R_2$, which includes two lunches. The assumption is that $S_{\text{gen}}(X_{\text{in}}) > S_{\text{gen}}(X_{\R_1}) + S_{\text{gen}}(X_{\R_2})$, so that the entanglement wedge of $\R_1\R_2$ contains the whole lunch. On the right, a different choice of closed geodesic $X_{\text{in}}'$ (which intersects $X_{\text{in}}$) corresponds to a different python. The complexity will be given by the minimal complexity amongst all the choices of minimal surfaces in the interior. }
    \label{fig:4bdy}
\end{figure}

Consider the case $n=4$ for concreteness, where the fundamental domain of $\Gamma$ is presented in Fig.\ \ref{fig:4bdy}. In this case, there are $2$ additional moduli that determine the state, aside from the horizon lengts $L_{i}$, for $i=1,2,3,4$. Using Fenchel-Nielsen coordinates, these moduli can be chosen to be the length and twist parameters of an additional closed minimal surface in the black hole interior, such as e.g.\ $X_{\text{in}}$ or $X_{\text{in}}'$ in Fig.\ \ref{fig:4bdy}. Note that these two locally minimal surfaces intersect each other.

Following the steps of the general procedure, in Fig.\ \ref{fig:4bdy} we represent two choices of $\mathcal{S}$ for region $\R = \R_1\R_2$, namely the two corresponding to $X_{\text{in}}$ or $X_{\text{in}}'$. Evaluating the reconstruction complexity \eqref{eq:PLC3} requires minimizing over the choices of this minimal surface in the black hole interior. The optimal choice of $\mathcal{S}$ and the complexity will depend on the moduli on a complicated way, and we will not attempt to quantify the different regimes here.

\section{Discussion \& outlook}
\label{sec:conclusions}

In this paper we have provided an extensive study of \textit{bulges}, extremal surfaces of Morse index 1, found in time reflection-symmetric python geometries. We have related the definition of the bulge to Almgren-Pitts min-max theory, and studied its topological and geometric properties for a variety of time reflection-symmetric states of the holographic system. Our results are potentially useful for testing the python's lunch conjecture, if properties of the complexity of reconstruction can be independently studied. These results also generate some open questions that we list here, in order to structure some possible avenues of future research:

\paragraph{Are black brane interiors really simple?}

We have found that the bulge generally breaks the spatial isometries of the python, and that among other examples, this is particularly relevant for the reconstruction of extended boundary systems. In particular, for black brane interiors the bulge approximately coincides with the constriction exept on a finite region, and the log-complexity of reconstruction predicted by \eqref{eq:PLC} is not extensive in the size of the system. At face value, this seems to suggest that there exists a ``simple'' way to reconstruct these lunches with all the accessible entropy of the boundary system. Following the tensor network/geometry intuition, one such way would be to apply suitable unitaries which break planar symmetry, following a minimax foliation of the python that contains the bulge.

Another situation where naively simple interiors were present was originally faced in \cite{Engelhardt:2021qjs} when considering an equilibrated AdS black hole formed by the collapse of matter.\footnote{We thank Netta Engelhardt for pointing out this similarity to us.} In that case, the reconstruction of the black hole interior seems simple a priori since the spacetime lacks a non-minimal QES. The lesson of \cite{Engelhardt:2021qjs} is that there is an implicit choice of {\it code subspace}, namely the bulk Hilbert space of the quantum fields, in the definition of the bulk-to-boundary map $V$. Given a code subspace, to say that $V$ is simple, one needs to make sure that there are no pythons for {\it any} state of the code subspace. If one picks a late time slice $\Sigma_t$ of the black hole interior, the full ``bulk effective field theory'' code subspace on $\Sigma_t$ is large, since the volume of $\Sigma_t$ scales with the black hole entropy $S$. Given a general excited state of this code subspace on $\Sigma_t$, the state will backreact substantially a scrambling time towards the past, and generate a past singularity. Moreover, the entanglement entropy of the bulk state might need to be considered in the full-fledged QES prescription. In general, these effects create a non-minimal QES in the backreacted spacetime. Therefore, reconstructing large code subspaces in $\Sigma_t$ is exponentially hard. Roughly speaking, the original reconstruction seemed simple because one was implicitly restricting to a small code subspace of states in $\Sigma_t$, namely those that escape the interior under time evolution towards the past.

With this in mind, we come back to our black brane system. The difference with the case in \cite{Engelhardt:2021qjs} is that for the black brane there is already a classical python, but the complexity density of reconstructing the lunch does not scale extensively with the entropy of the boundary system. One might wonder whether there is an impicit choice of a small code subspace in our case as well, that renders the interior reconstruction simple by the same reason as for the black hole. However, it is possible to see that this is not the case. In our $2+1$ dimensional example, we can consider modest but large code subspaces with extensive entropy, consisting of a single degree of freedom per position $x = n x_0$, for $n\in \mathbf{Z}$, at some fixed radial distance in the black brane interior, where $x_0 \sim O(\ell_{\rm AdS})$. Consider a generic state of this code subspace. The dilute backreaction of this state, together with its dilute entropy density, will modify the bulge locally. Since the original bulge is a classically extremal surface, its total change in generalized entropy is controlled by the entropy of the bulk state at leading order, $\delta S_{\text{gen}}(X^b_\R) \sim S_{\text{bulk}}(\rho_{\text{out}}) \sim O(\Lambda_{\text{IR}}^{-1})$. This provides a log-complexity that scales with the log dimension of the code subspace, which by assumption is extensive in the size of the transverse dimension. Nevertheless, the dimension of the code subspace does not scale with $N^2$, unlike the black brane entropy, which is $S_{\text{gen}}(X^c_\R) \sim O(N^2 \Lambda_{\text{IR}}^{-1})$. Thus, naively, the problem still remains to understand why, for extensive code subspaces, the log-complexity to reconstruct them seems not to scale with the entropy of the system. Moreover, it is not obvious that there are simple states at all, given that the states we consider are not formed by collapse. It would be interesting to understand these issues better.

\paragraph{Bulges and compact dimensions}

 Given that the bulge is sensitive to the internal manifold $Y$, its generalized entropy will contain information of the specific holographic system (via e.g.\ its internal global symmetries), which goes beyond purely spatial correlations of the ground state. From the analogy between geometry and tensor networks, the optimal way to decode the local physics of the entanglement wedge from $\R$ will involve the higher-dimensional bulge $X^b_\R$, and will thus necessarily include a non-trivial foliation of the internal manifold. This effect is relevant in all known microscopic constructions of AdS/CFT due the lack of scale separation between the scale of $Y$ and the AdS scale. At the same time, such an observation poses a challenge to the tensor network models of python geometries in the ground state, which are constructed solely from the RT formula, with no specific dynamical input, which makes them insensitive to the internal manifold. It would be interesting to see whether tensor network toy models can incorporate this effect.  

\paragraph{Relation to entwinement and matrix space entanglement}

In \cite{Balasubramanian:2014sra,Balasubramanian:2016xho,Balasubramanian:2018ajb,Craps:2022pke} the area of the bulge $X_\R^b$ in the orbiforld AdS$_3/\mathbf{Z}_n$ was interpreted as measuring ``entwinement'', a quantity that emerges from the orbifold description of the theory, with boundary interpretations offered in \cite{Lunin:2000yv,Martinec:2001cf,Martinec:2002xq,Balasubramanian:2005qu,Craps:2022pke}. In section \ref{sec:excitedstates}, we have pointed out that this same surface is associated with the bulge of the python, which measures some notion of the 
complexity of reconstructing the entanglement shadow with access only to $\R$. It is possible that these two interpretations are connected in some way. 

On the other hand, in the context of LLM geometries, one motivation for \cite{Balasubramanian:2017hgy}\ was to better understand surfaces that bisected the $ S^5$ (or related interior geometries) in the AdS duals of ${\cal N}= 4$ $U(N)$ super-Yang Mills theory on the Coulomb branch. These were studied in \cite{Mollabashi:2014qfa,Karch:2014pma} who argued compellingly that such surfaces meaured entanglement between matrix degrees of freedom in the field theory. However, \cite{Balasubramanian:2017hgy}\ showed that at the origin of the Coulomb branch, the surface bisecting the $ S^5$ was not minimal; rather, if we cut off AdS${}_5$ at large radius, dual to a UV cutoff in the field theory, the minimal surface hugs the cutoff. This points to the surfaces studied in \cite{Mollabashi:2014qfa,Karch:2014pma} as having an intepretation as some kind of complexity, perhaps related to the matrix degrees of freedom of the theory.

\paragraph{Multi-boundary wormhole states}

For multi-boundary wormhole states of the black hole which include pythons, we have found that, considering regions $\R$ comprised of multiple boundaries, the complexity of reconstructing the lunch plateaus after some number of connected components have been included in $\R$. This feature is closely related to the non-extensivity of the log-complexity for extended systems such as black branes. In this case, certain connected components of the bulge and constriction coincide exactly. 

Given a multi-boundary wormhole state, this effect implies that all the quantum information of the code subspace in the shared interior of the multiple black holes is encoded via $V$ in a subset of them; applying non-trivial unitaries to the rest results in a reconstruction which is not optimal. We note that this feature, and the possibility of applying the geometric PLC to multi-boundary regions $\R$, should be captured by generic tensor network models of the multi-boundary python, under the assumption that the individual tensors that constitute the geometry appear uncorrelated, at least approximately. In fact, for the multi-boundary wormhole states $\ket{\Sigma^n_0}$ that we have analyzed, a Haar random state model was originally proposed in \cite{Balasubramanian:2014hda}, which captured the mutual information of these states. A finer model of the states $\ket{\Sigma^n_0}$ is to consider a random tensor network model of the geometry, with tensors of large bond dimension. Such a model directly captures the physics of the RT formula \cite{Hayden:2016cfa}, and therefore the multipartite entanglement structure. Moreover, in random tensor network models with uncorrelated local tensors, it seems reasonable to expect that the way to implement post-selection unitarily is generically via brute-force Grover search locally, so they will satisfy the assumptions of the PLC and will reproduce the geometric features that we have studied. 

Without knowledge of how the multi-boundary states with shared interiors are prepared unitarily, the genericity assumption seems reasonable. However, this feature is in contrast with ``simple'' tensor network states such as the ones motivating the conjecture \cite{Brown:2019rox}, where the bulge arises for proper subregions $\R$ as an artifact of the coupling of $\R$ to $\bar{\R}$ in the unitary time evolution that drives the full system together (in the case of \cite{Brown:2019rox} $\R$ is the early Hawking radiation and $\bar{\R}$ is the black hole). For such states, the tensors in the tensor network are correlated and the bulge dissapears once $\R$ is taken to be the full system. { It would be interesting to test this prediction of the conjecture and understand if and why geometric bulges cannot form dynamically in a simple way, when allowing operators which couple the different boundaries.}

\paragraph{Time dependence}

In our analysis, we have restricted ourselves to time reflection-symmetric states, where all of the extremal surfaces lie on the time reflection-symmetric Cauchy slice $\Sigma$. Situations in which the time-reflection symmetry is spontaneously broken by the bulge and other locally minimal surfaces have been reported in \cite{Engelhardt:2023bpv} for specific spherically symmetric initial data in near-extremal black hole interiors. The specific data has been constructed using the two-dimensional description of these systems given by JT gravity with additional matter fields coupled to the metric. In all of the examples that we have analyzed in this paper, however, we do not expect such an effect. An interesting open problem that we leave for future work is to investigate the nature of more general extremal surfaces, such as the surfaces called \textit{bounces} (cf. \cite{Engelhardt:2023bpv}), that are expected to arise in these situations.

\section*{Acknowledgments}

We thank Chris Akers, \AA smund Folkestad, Brianna Grado-White, Guglielmo Grimaldi, Veronika Hubeny, Dominik Neuenfeld, Mukund Rangamani, and especially Netta Engelhardt and Brian Swingle for useful conversations. We are grateful to the long term workshop YITP-T-23-01 held at YITP, Kyoto University, as well as to the Simons Foundation, the Institute for Advanced Study, and the Centro de Ciencias de Benasque Pedro Pascual, where part of this work was completed. This work was supported in part by the Department of Energy through awards DE-SC0009986 and QuantISED DE-SC0020360, and partly by the Simons Foundation through the \emph{It from Qubit} Simons Collaboration.

\appendix

\section{Gauss map trick}
\label{app:B}

It was noted in \cite{Fischer1985} that the index of a complete orientable extremal surface $\Sigma$ in $\mathbf{R}^3$ is only dependent on the Gauss map, $n: \Sigma \to  S^2$ which is defined such that for each $p \in \Sigma$, $n(p)$ is the unit normal vector to $\Sigma$ at $p$. We are interested in extremal surfaces of finite index which is equivalent to the condition of finite total curvature, $\int_{\Sigma}|\kappa_1 \kappa_2| <\infty$ where $\kappa_{1,2}$ are principal curvatures of the extremal surface \cite{Fischer1985}. Further, every extremal surface of this type is conformally equivalent to a compact Riemann surface with punctures, $\bar{\Sigma} \setminus \{p_1,\ldots, p_k\}$ \cite{osserman1964} and thus we can extend the Gauss map $n$ to $\bar{n}: \bar{\Sigma} \setminus \{p_1,\ldots, p_k\} \to  S^2$.

The second order variation of area of $\bar{\Sigma}$ is (we are going to omit writing the punctured points explicitly)
\be\label{eq:variationSigmaBar}
\delta^{(2)}\text{Area}(\bar{\Sigma},h) = \dfrac{1}{2}\int_{\bar{\Sigma}} \sqrt{h}\, \left(h^{ij}\partial_i \eta \partial_j \eta  - K_{ij}K^{ij}\eta^2 \right) \,.
\ee
The quadratic form \ref{eq:variationSigmaBar} is invariant under conformal variations of the type $\tilde{h}_{ij} = e^{2\omega}h_{ij}$. We get the following transformations under Weyl scaling:
\begin{gather}
    \sqrt{\tilde{h}} = e^{2 \omega} \sqrt{h}\\
    \tilde{h}^{ij} = e^{-2 \omega} h^{ij}\\
    \tilde{K}_{ij} = e^{\omega} K_{ij} 
\end{gather}
Hence,
\be \label{eq:conformalInvariant}
\delta^{(2)}\text{Area}(\bar{\Sigma},\tilde{h}) = \delta^{(2)}\text{Area}(\bar{\Sigma},h) \, .
\ee

To illuminate the connection between the (extended) Gauss map and the index, we note that
\begin{align*}
K_{ij}K^{ij} &= K_{ij}h^{il}h^{jm}K_{lm}\\
&= \Tr(P^2)\\
&= -2 \kappa
\end{align*}
where $P$ is the shape operator and $\kappa$ is the Gaussian curvature. We also have a linear map $d\bar{n}: T_p \bar{\Sigma} \to T_{\bar{n}(p)} S^2$ with the property that its determinant is the Gaussian curvature, $\kappa$. Therefore, we get
\be
K_{ij}K^{ij} = -2 \kappa = -2 \det(d\bar{n}) \, .
\ee

Since the quadratic form \ref{eq:variationSigmaBar} is conformally invariant as demonstrated by \ref{eq:conformalInvariant}, we can choose the rescaled metric $\tilde{h}_{ij} = -\kappa h_{ij}$. Then \ref{eq:variationSigmaBar} simplifies to
\be\label{eq:quadraticform}
\delta^{(2)}\text{Area}(\bar{\Sigma},\tilde{h}) =- \dfrac{1}{2}\int_{\bar{\Sigma}} \sqrt{\tilde{h}}\, \eta\left(\nabla^2  + 2 \right)\eta \,.
\ee
In fact, the metric $\tilde{h}_{ij}$ is the pullback under the Gauss map\footnote{The linear map $d\bar{n}: T_p \bar{\Sigma} \to T_{\bar{n}(p)} S^2$ can be extended to a \textit{Weingarten} map $W : T_p \bar{\Sigma} \to T_p \bar{\Sigma}$ and technically, the metric $\tilde{h}_{ij}$ is a pullback under the Weingarten map.} $\bar{n}$ and hence, $\tilde{h}_{ij} = -\det(d\bar{n}) h_{ij}$.

We are interested in finding the number of negative modes of the quadratic form \ref{eq:variationSigmaBar}. The most natural vector space on which the quadratic form $A$ can act is $L^2_h(\Sigma)$ with inner product $\langle\phi,\psi\rangle = \int_{\Sigma} \sqrt{h}\phi^*\psi$. The index of $A$ is defined as the dimensionality of the largest subspace of $L^2_h(\Sigma)$ on which $A$ is negative definite. Note that the inner product is not preserved under the action of Gauss map but it is true that $L^2_h(\Sigma) \subset L^2_{\tilde{h}}(\bar{\Sigma})$ since the Gaussian curvature is bounded from above. A rigorous proof of index$(\Sigma)$ = index$(\bar{\Sigma})$ involves constructing a basis of $L^2_h(\Sigma)$ using a basis of $L^2_{\tilde{h}}(\bar{\Sigma})$ and showing that the span of either set of basis vectors is the same. This was shown in \cite{Fischer1985} with the assumption that $\tilde{h}$ is a smooth metric on $\bar{\Sigma}$ (including the punctures).

We will give a variant of the proof in \cite{Fischer1985} in case of a catenoid. We have $\Sigma = $ catenoid, $\bar{\Sigma} =  S^2\setminus\{\text{N,S}\}$ where N,S are the two poles, $\bar{n}:  S^2\setminus\{\text{N,S}\} \to  S^2$. The induced metric $\tilde{h}_{ij}$ is smooth at punctures and hence the quadractic form \ref{eq:quadraticform} can be analyzed on $ S^2$ with the round metric. The eigenvalues are given by $\lambda_{l,n} = l(l+1)-2$ and therefore the index($\bar{\Sigma}$)$= 1$. Since $L^2_h(\Sigma) \subset L^2_{\tilde{h}}(\bar{\Sigma})$, index$(\Sigma) \leq 1$. To show that it is equal to 1, we need to find $\psi \in L^2_h(\Sigma)$ such that $\delta^{(2)}\text{Area}(\bar{\Sigma},\tilde{h}) < 0$. To ensure that $\psi$ is in  $L^2_h(\Sigma)$, we need to put regularity condition that $\psi \to 0$ as we approach the puncture. Let us consider the standard round metric, $d\tilde{s}^2 = d\theta^2 + \sin^2 \theta d\phi^2$ and a function $\psi$ given by
\be
\psi = \begin{cases}
    \frac{\theta - \theta_p}{\theta_0}&\theta- \theta_p<\theta_0\\
    1&\theta- \theta_p \geq \theta_0
\end{cases}
\ee
near each puncture $\theta_p$ and $\theta_0 \ll 1$. Then \ref{eq:quadraticform} becomes
\be
\delta^{(2)}\text{Area} = \frac{\pi}{2}p - 4\pi
\ee
where $p$ is the total number of punctures. In case of a catenoid, $p=2$ and hence $\delta^{(2)} \text{Area} <0$.

\section{Weierstrass--Enneper representation}
\label{app:WE-rep}

The Weierstrass--Enneper representation is a convenient way of parametrizing extremal surfaces in $\mathbf{R}^3$. Let $f$ be an analytic function and $g$ a meromorphic function on some domain in $\mathbf{C}$, such that $fg^2$ is analytic. This will furnish an extremal surface in $\mathbf{R}^3$ with embedding coordinates given by
\begin{gather}
    x = \Re \left( \int_0^z dz'(1-g(z')^2)f(z')  \right)\\
    y = \Re \left( \mathbf{i}\int_0^z dz'(1+g(z')^2)f(z')  \right)\\
    z = \Re \left( 2\int_0^z dz'f(z')g(z') \right).
\end{gather}
In fact, any nonplanar extremal surface in $\mathbf{R}^3$ can be represented by the above parametrization.

For a singly periodic Scherk surface, the domain is the unit disk, and
\begin{equation}
    f(z) = \frac{4}{(z^2-z_0^2)(z^2-\bar{z}_0^2)}\;,\qquad
    g(z) = \mathbf{i} z\;,
\end{equation}
where $z_0 = e^{\mathbf{i}\phi}$ with $\phi$ being the half angle between the planes. Therefore, the parametric form for the Scherk surface is:
\begin{gather}
    x(r, \theta) = \frac{1}{2 \sin \phi}\left\{ \ln \left(\frac{1+r^2+2r\cos(\theta + \phi)}{1+r^2-2r\cos(\theta+\phi)}\right) - \ln \left(\frac{1+r^2+2r\cos(\theta - \phi)}{1+r^2-2r\cos(\theta-\phi)}\right)\right\}\\
    y(r, \theta) = \frac{1}{2 \cos \phi}\left\{ \ln \left(\frac{1+r^2+2r\cos(\theta + \phi)}{1+r^2-2r\cos(\theta+\phi)}\right)+\ln \left(\frac{1+r^2+2r\cos(\theta - \phi)}{1+r^2-2r\cos(\theta-\phi)}\right) \right\}\\
    z(r, \theta) = \frac{1}{\cos \phi \sin \phi} \left\{ \arctan \left( \frac{\sin 4\phi -2r^2 \cos 2\theta \sin 2\phi}{\cos 4\phi +r^4 -2r^2 \cos 2\theta \cos 2\phi} \right)  -4\phi \right\}
\end{gather}
where $r \in (0,1)$ and $\theta \in [0,2 \pi)$.

\bibliographystyle{utphys}
\bibliography{bibliography}

\end{document}